\def\LB{\left(}         
\def\RB{\right)}        
\newfont{\bbb}{msbm10 scaled 500}
\newcommand{\cv}{{\bf c}}
\newcommand{\iv}{{\bf i}}
\newcommand{\nv}{{\bf n}}
\newcommand{\sv}{{\bf s}}
\newcommand{\vv}{{\bf v}}
\newcommand{\xv}{{\bf x}}
\newcommand{\yv}{{\bf y}}
\newcommand{\zv}{{\bf z}}
\newcommand{\Gm}{{\bf G}}
\newcommand{\Hm}{{\bf H}}
\newcommand{\Id}{{\bf I}}
\newcommand{\Wm}{{\bf W}}
\newcommand{\Vm}{{\bf V}}
\newcommand{\Ym}{{\bf Y}}
\newcommand{\Gammam}{\hbox{\boldmath$\Gamma$}}
\newcommand{\Sigmam}{\hbox{\boldmath$\Sigma$}}
\newcommand{\beqa}{\begin{eqnarray}}
\newcommand{\eeqa}{\end{eqnarray}}
\newcommand{\dsp}{\displaystyle}
\newcolumntype{M}[1]{>{\centering\arraybackslash}m{#1}}
\newcommand{\thickhline}{%
    \noalign {\ifnum 0=`}\fi \hrule height 1pt
    \futurelet \reserved@a \@xhline
}
\begin{document}


\markboth{Lakshiminarayana et.al.}{Modeling and Detecting False Data Injection Attacks against Railway Traction Power Systems}

\title{Modeling and Detecting False Data Injection Attacks against Railway Traction Power Systems}
\author{Subhash Lakshminarayana
\affil{Advanced Digital Sciences Center, Illinois at Singapore}
Teo Zhan Teng
\affil{GovTech, Singapore}
Rui Tan
\affil{Nanyang Technological University, Singapore}
David K.Y. Yau
\affil{Singapore University of Technology and Design}
}

\begin{bottomstuff}
This work was supported in part by the National Research
Foundation (NRF), Prime Minister's Office, Singapore, under
its National Cybersecurity R\&D Programme (Award
No. NRF2014NCR-NCR001-31) and administered by the
National Cybersecurity R\&D Directorate and in part by a
Start-up Grant at NTU.

Author's addresses: S. Lakshminarayana, Advanced Digital Sciences Center, Illinois at Singapore (e-mail: subhash.l@adsc.com.sg); Z.T. Teo, GovTech Singapore (e-mail: teozt@hotmail.com); R. Tan,  School of Computer Science and Engineering, Nanyang Technological University, Singapore (e-mail: tanrui@ntu.edu.sg); D.K.Y. Yau, Singapore University of Technology and Design (e-mail: david\_yau@sutd.edu.sg). The work was conducted when Z.T. Teo was with the Advanced Digital Sciences Center, Illinois at Singapore.
\end{bottomstuff}

\begin{abstract}
Modern urban railways extensively use computerized sensing and control technologies to achieve safe, reliable, and well-timed operations. However, the use of these technologies may provide a convenient leverage to cyber-attackers who have bypassed the air gaps and aim at causing safety incidents and service disruptions. In this paper, we study false data injection (FDI) attacks against railways' traction power systems (TPSes). Specifically, we analyze two types of FDI attacks on the train-borne voltage, current, and position sensor measurements -- which we call {\em efficiency attack} and {\em safety attack} -- that (i) maximize the system's total power consumption and (ii) mislead trains' local voltages to exceed given safety-critical thresholds, respectively. To counteract, we develop a global attack detection (GAD) system that serializes a {\em bad data detector} and a novel {\em secondary attack detector} designed based on unique TPS characteristics. With intact position data of trains, our detection system can effectively detect the FDI attacks on trains' voltage and current measurements even if the attacker has full and accurate knowledge of the TPS, attack detection, and real-time system state. 
In particular, the GAD system features an adaptive mechanism that ensures low false positive and negative rates in detecting the attacks under noisy system measurements.
Extensive simulations driven by realistic running profiles of trains verify that a TPS setup is vulnerable to the FDI attacks, but these attacks can be detected effectively by the proposed GAD while ensuring a low false positive rate.
\end{abstract}

\maketitle

\section{Introduction}
\label{sec:intro}

In modern cities, safe, reliable, and well-timed operations of urban railways are critical. A modern railway is a highly complex cyber-physical system (CPS) consisting of diverse subsystems including train motion control, traction powering, signaling, etc, where deeply embedded information and communication technologies (ICTs) are used to operate each train and connect trains to an operation center.
The extensive use of ICT may provide a convenient leverage to attackers, however, who may aim to hurt passengers' safety or cause widespread service disruptions. To date, the cybersecurity of modern railways has relied on air gaps that isolate their ICT systems from public networks. However, recent high-profile intrusions such as Stuxnet \cite{karnouskos2011} and Dragonfly \cite{dragonfly2014} have successfully breached the air gaps of critical CPS infrastructures and resulted in physical damage.
For instance, the Stuxnet worm damaged nuclear centrifuges by injecting false control commands and forging normal system states. Its design and architecture are not domain-specific -- they can be readily customized against other types of CPS including transportation~\cite{karnouskos2011}. Insider attacks represent another major threat to air-gapped systems; their severe consequences have likewise been well documented~\cite{insider2011}. It is thus critical to understand the cybersecurity risks of modern railways as a mission-critical CPS, and develop effective security countermeasures in their ICT design.

In this paper, we study the cybersecurity of direct current (dc) traction power systems (TPSes) that are widely deployed in urban electrical railways. The criticality of TPS is evidenced by prior severe incidents caused by TPS malfunctions. The 2014 Moscow metro derailment that led to 24 dead and 160 injured was caused by sudden braking of the train in question, when its traction voltage dropped abruptly~\cite{moscowtimes}. In Singapore, a system-wide metro service disruption, triggered by TPS faults, affected almost half a million commuters during rush hours on July 7, 2015~\cite{SMRT15}. Moreover, the computerized sensing and control in an automated TPS could be prime targets for cyber-attackers, who can sabotage the control and steer the system into inefficient and unsafe states.

Motivated by Stuxnet worm-type attacks that forge physical system states, in this paper we study a general class of integrity attacks called {\em false data injection} (FDI), which tampers with train-borne sensor measurements required by TPS control. In a TPS, the electricity power supplied by substations is delivered by a network of overhead lines and third rails to the trains. According to its operation mode, a train's power consumption can be highly dynamic. In traction mode, it draws power from the TPS, causing a drop in the train's local voltage; in braking mode, it regenerates electricity from kinetic energy and injects this electricity back to the TPS,\footnote{In electrical railways, trains are often equipped with regenerative brakes that generate electricity in deceleration~\cite{Fletcher1991}.} causing a rise in the voltage. To prevent the voltage from exceeding safety-critical thresholds, trains apply {\em overcurrent control} and {\em squeeze control} \cite{OkadaKoseki2004} to throttle their power draw and injection, respectively. As these controls are performed based on train-borne voltage and current sensor measurements, FDI attacks on the measurements may mislead the train into erroneous power control decisions, which may in turn produce damaging and even catastrophic physical impacts on the train and the TPS. Recent results show that the measurements can be compromised in practice by precisely controlled electromagnetic interference in analog sensors \cite{kune2013ghost}, hardware trojans in chips \cite{HardwareTrojans2010}, and malware infections in sensor firmwares \cite{SmartMeterSecurity2009,Theft2011,davis2009}. Hence, FDI attack is a clear and present threat that requires immediate attention.

In this paper, we aim to answer the following two fundamental research questions:

{\em (1) How to characterize the impact of FDI attacks on TPS system efficiency and safety?}  Analysis of the impact based on an essential TPS model will provide basic understanding for developing countermeasures. However, the analysis is difficult, due to complex system dynamics arising from the trains' motion. In particular, a moving train does not only act as ``load'' and ``generation'' alternately over time, but it also alters the power network's topology and electrical parameters continually. Moreover, because different TPS components (trains, substations, etc.) become physically interconnected through a common underlying power network, effects of an erroneous power control on a train during attack may propagate to the neighboring TPS components. The analysis must address these intricate and unique characteristics of TPSes.

{\em (2) How to develop effective approaches for detecting the FDI attacks?} Our thesis is that, because measurements from different trains are inherently correlated through interconnection over the same power network, for attack resilience we can apply a global detection that cross-checks the measurements collected from all trains based on an {\em a priori} global TPS model. However, in contrast to alternating current (ac) power grids that have well-established centralized monitoring and sensor data cross-check safeguards for reliable holistic control \cite{RahmanFormalModel2014,LiuNingReiter2009}, TPS is mainly concerned with individual trains' local operation (i.e., the overcurrent and squeeze controls), and therefore it is not traditionally subject to any global sensor data checks across trains. Thus, an existing dc TPS operation center seldom scrutinizes the sensor measurements, beyond their display and presentation for human operators. In this paper, we demonstrate the importance of these global, but hitherto ignored, sensor data cross checks in the TPS domain against FDI attacks.

In answering the above two research questions, our main contributions in this paper are as follows:

First, based on essential models of power substations, power flows, and train overcurrent and squeeze controls in a TPS, we formulate two types of FDI attacks that we call {\em efficiency attack} and {\em safety attack}. These attacks (i) maximize the total instantaneous power consumption of the TPS and (ii) mislead victim trains' local voltages to exceed given safety-critical thresholds, respectively.
Efficiency attacks will increase the train's traction power consumption, resulting in an increase in railways' operation expenses.\footnote{Energy costs of running urban rail pose a significant financial burden to transport companies, constituting about $20 \%$ of their operational expenses \cite{Osiris2015}. Of this, about $80 \%$ of energy is consumed for traction (e.g. train's motion, braking, electric losses) \cite{GonzálezGil2014509}.} 
Efficiency attacks will also potentially increase the carbon footprint of the transportation sector, which is an important consideration for railway operators \cite{London_Metro2008}.
On the other hand, safety attacks may trip circuit breakers, causing dangerous power loss and brake malfunction.

The efficiency attack formulation models an aggressive attacker who aims at maximizing the attack impact and provides insights into understanding the performance degradation limit caused by FDI attacks. Numerical results for a TPS section with two substations and two trains show that the efficiency attack can result in an instantaneous efficiency loss of about 20\%, whereas the safety attack on a single train can indeed lead to significant safety breaches. These results substantiate the potency of FDI attacks on train-borne sensor measurements.

Second, we propose to apply a global {\em bad data detection} (BDD) method, similar to that widely used in ac power grids \cite{LiuNingReiter2009}, to detect FDI attacks in a dc TPS. Despite a known vulnerability of the BDD -- it can be bypassed by an attacker who knows enough details of its design -- our numerical results show that, in order for an FDI attack to be stealthy against the BDD, it will have to settle for a significantly reduced damage on the system efficiency. Moreover, we observe that, given intact position data of trains, solutions of the BDD bypass condition will become discrete. Based on this observation, we develop a novel {\em secondary attack detection} (SAD) algorithm that can effectively detect the onset of an FDI attack on trains' voltage and current measurements after it has bypassed the BDD.
Hence, the BDD and the SAD form in tandem a global attack detector (GAD) under the Kerckhoffs's assumption (i.e., the attacker has full and accurate knowledge of the system model, attack detection, and real-time system state), provided that the integrity of trains' position information can be verified. Building on this result, we design an approach to mitigating the impact of an attack after its detection.

Third, we report extensive simulations, driven by realistic profiles of trains in operation, to evaluate our solutions. For a TPS consisting of four trains each running over a distance of ten kilometers for $800$ seconds, our results show that, without the global BDD, FDI attacks can increase the total system energy consumption by $28.3\%$ and breach the system's safety
condition. After applying the BDD, the system's total energy consumption increases by no more than $6.2\%$ under the efficiency attack, and safety attacks become no longer successful. Moreover, the proposed SAD algorithm achieves a detection probability of 96\% in detecting the onsets of the FDI attacks that have successfully bypassed the BDD.

Finally, we investigate the false positives (FPs) and missed detections (MDs) of the proposed detectors in the presence of sensor measurement noises. Simulation results illustrate that although the GAD yields low FP and MD rates during most of the simulation time, it gives a relatively 
high FP rates for a few short time durations when one or more trains change their status of motion (e.g., from tractioning mode to braking mode). To maintain a low FP rate all the time,
we propose an adaptation mechanism based on an attack detection window for the GAD. We call the improved attack detection system GAD-W. Simulation results show that with appropriately chosen detector parameters, the GAD-W detector achieves an average FP rate of $9 \times 10^{-4}$ and an MD rate of $7 \times 10^{-4}$ over the entire simulation time.

This work focuses on attacks against urban metros (e.g., Tokyo, Singapore, and Berlin) that adopt dc systems. Thus, our analysis is based on the dc TPS model. On the other hand, long-distance railways usually adopt an alternating current (ac) TPS, due to higher efficiency in transmitting ac power over long distances \cite{RailTechnical,Overhead}.
Although a detailed investigation of cybersecurity issues in ac traction power systems is beyond the scope of our paper, we conjecture that the vulnerabilities of the two systems are similar. This is because ac and dc TPSes mainly differ in their design of electrical components (e.g., substation and train motor) \cite{Overhead}, while the  ICT infrastructures in these two kinds of systems are similar. Thus, the attack surfaces of the cyber infrastructures in both cases are the same. Nevertheless, the attack impact analysis and the detector design may differ in details, which are left for future work.

The balance of the paper is organized as follows. Section~\ref{sec:RelatedWork} reviews related work. Section~\ref{sec:Tract_Power} 
describes our TPS model. Section~\ref{sec:Cyberattacks} formulates the efficiency and safety attacks. Section~\ref{sec:BDD} analyzes the effectiveness of the BDD and presents the proposed SAD algorithm that complements the BDD. Section~\ref{sec:Noise_Analysis} analyzes the impact of sensor measurement noises. Section~\ref{sec:SimRes} presents simulation results. Section~\ref{sec:Conclusion} concludes.

\section{Related Work}
\label{sec:RelatedWork}

Power flow analysis and optimization for TPS have received increasing research interest.
Power flow analysis is a basic tool for TPS
planning and operation.
Prior work has analyzed dc power flows \cite{CaiIterative1995,ArboleyaBFS2015,PiresICCG2007} and addressed the 
interactions between the dc TPS and a supporting ac power grid \cite{AbrahamssonThesis2012}, \cite{ArboleyaCotoTVT2012}.
We adopt existing electrical models for different TPS components \cite{CaiIterative1995}, \cite{ArboleyaBFS2015}, \cite{PiresICCG2007} in this work. These models provide sufficient accuracy generally \cite{ArboleyaBFS2015}, and they are tractable for analysis.
Based on power flow analysis, recent research has tried to improve the energy efficiency 
of railways by leveraging trains' power regeneration \cite{BBC_Regen2015}.
Techniques such as synchronizing the trains' speed profiles \cite{Miyatake2010,SuTang2014,SuTangRoberts2015} and real-time substation voltage control \cite{RaghunathanCOMPRAIL2014} have been shown to provide efficient reuse of the regenerated power. 
To the best of our knowledge, none of the existing studies on TPS control have addressed it from a cybersecurity perspective.
The security problem is imperative, since TPS is a form of critical infrastructure that renders it an attractive target for attacks.

Different types of CPS can have vastly different properties and characteristics, and their security concerns and admissible detection and mitigation strategies can be totally different. Typically, their cybersecurity analysis must be carried out in a domain specific manner, with customized considerations given to main details and semantics of specific systems. 
C{\'a}rdenas et al.~\cite{cardenas2011attacks} investigate the impacts of integrity and denial-of-service attacks on the process control system, which has multiple sensors and control loops, of a chemical reactor. Amin {et al.}~\cite{amin2013cyber} perform security threat assessment of supervisory control and data acquisition systems for water supply. Other efforts~\cite{LiuNingReiter2009,KimTongTopology2013} have analyzed FDI attacks against ac utility power grids.
They show that an attacker capable of tampering with grid sensor measurements or topology information can carefully construct attacks to bypass detection by certain existing fault data detectors. Recent studies have investigated the impact of such stealthy attacks on
grid power flows \cite{RenLoadRedis2011,OPFClosingLoop2012,RahmanFormalModel2014}.
They show that maliciously biased estimates of the system state can cause grid operators to make erroneous decisions that will lead to degraded performance or safety breaches.
This paper is the first to analyze the efficiency and safety of TPS under FDI attacks. We provide new and non-trivial domain-specific modeling and analysis to capture the targeted application's unique features and key properties. 
In particular, TPS involves real-time and complex interactions between two highly dynamical physical systems, namely a mechanical system of the trains' motion and an electrical system that governs the trains' power consumption and regeneration during this motion. Attackers could exploit the interactions to strengthen their attacks.

\section{Traction Power System Model}
\label{sec:Tract_Power}

In this section, we present a model of a dc TPS at a certain time instant. The TPS is modeled as a power network
consisting of $N$ nodes. Denote by $\mathcal{N} = \{ 1,2,\dots,N\}$ the set of nodes
and $\mathcal{L}$ the set of resistive branches connecting the nodes. 
The substations and the trains are connected to different nodes.
The sets of nodes for the substations, the tractioning trains, and the regenerating trains are denoted by $\mathcal{N}_{\text{sub}},$ $\mathcal{N}_{\text{tra}},$  and $\mathcal{N}_{\text{reg}}$, respectively. 
We define $\mathcal{N}_{\text{trains}} = \mathcal{N}_{\text{tra}} \cup \mathcal{N}_{\text{reg}}.$
The positions of the nodes $1,\dots,N$ are denoted by a set $ \sv = \{ s_1, s_2, \dots , s_N\},$ where
$s_1$ is fixed at zero and $s_i$ is the distance from node $i$ to node $1.$
Fig.~\ref{fig:DCTractCkt} illustrates a TPS section with two substations at nodes $1$ and $4$, as well as two
trains at nodes $2$ and $3.$ In this example, the train at node $2$ is tractioning and the train at node $3$ is braking and regenerating. Therefore,
$\mathcal{N}  = \{ 1,2,3,4 \},  \mathcal{L}  = \{ (1,2), (2,3), (3,4) \}, \mathcal{N}_{\text{sub}}  = \{ 1,4 \}, \mathcal{N}_{\text{tra}}  = \{ 2 \}, \mathcal{N}_{\text{reg}}  = \{ 3 \} .$ The electrical models for the power network, substations, and trains are described as follows.

\begin{minipage}[!t]{\textwidth}
\begin{minipage}[!t]{0.42\textwidth}

\centering
\includegraphics[width=1\textwidth]{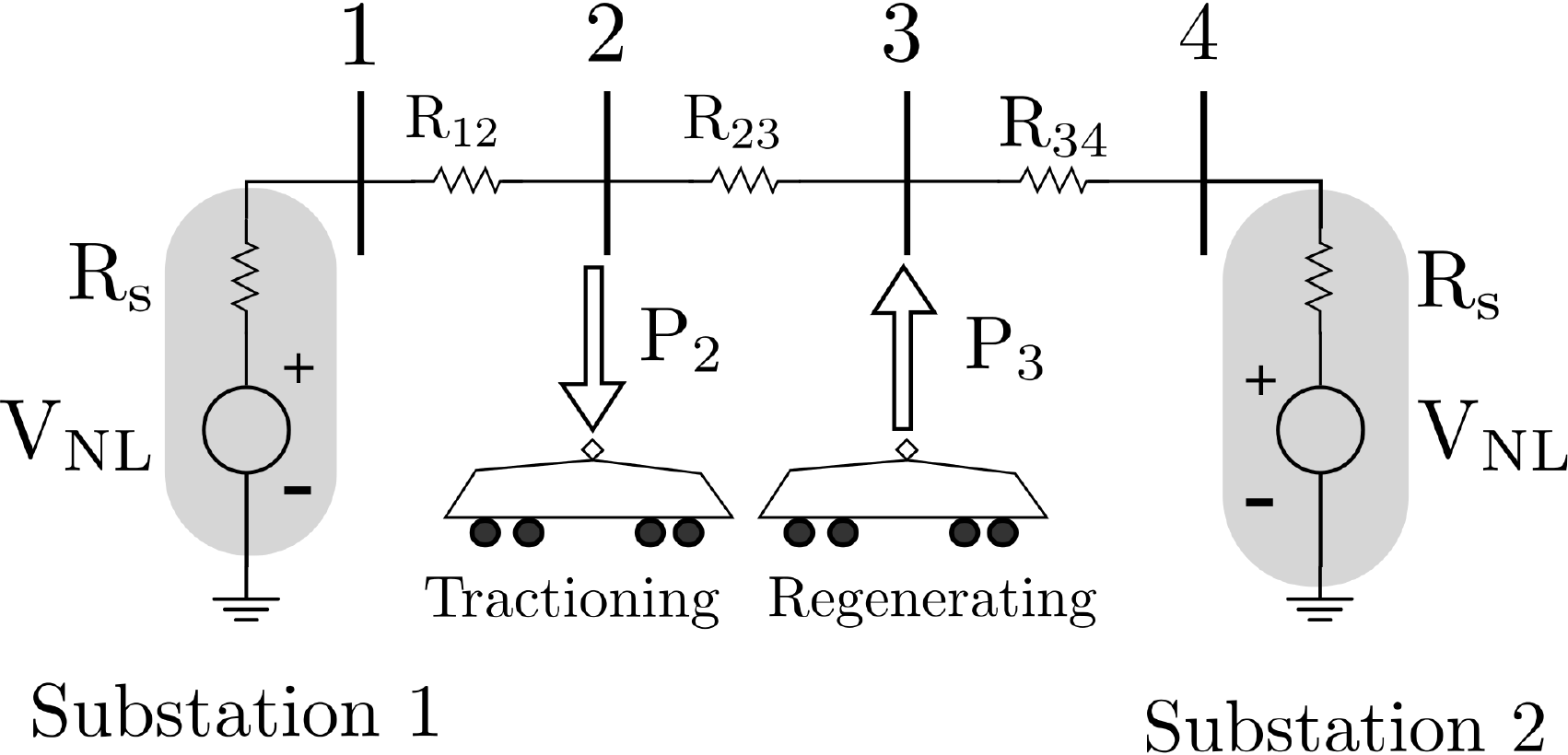}
\captionof{figure}{Illustration of a TPS section.}
\vspace{-1em}
\label{fig:DCTractCkt}

\end{minipage}
~
\begin{minipage}[!t]{0.54\textwidth}
\centering
\begin{minipage}{0.5\textwidth}
\includegraphics[width=1\textwidth,trim={0 2.5cm 0 0}]{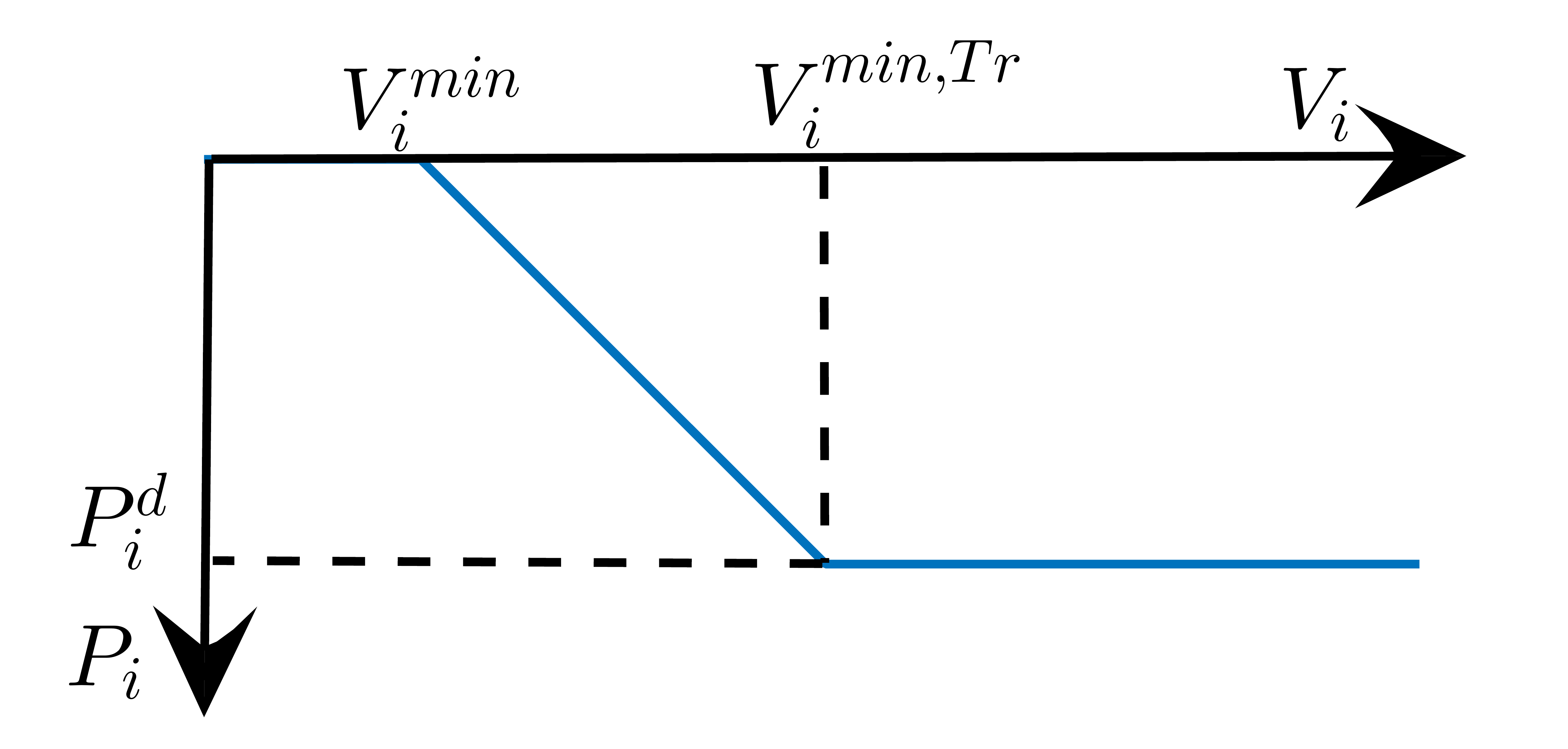}
\captionof{subfigure}{}
\end{minipage}
~
\begin{minipage}{0.5\textwidth}
\centerline{\includegraphics[width=1\textwidth,trim={0 2.5cm 0 0}]{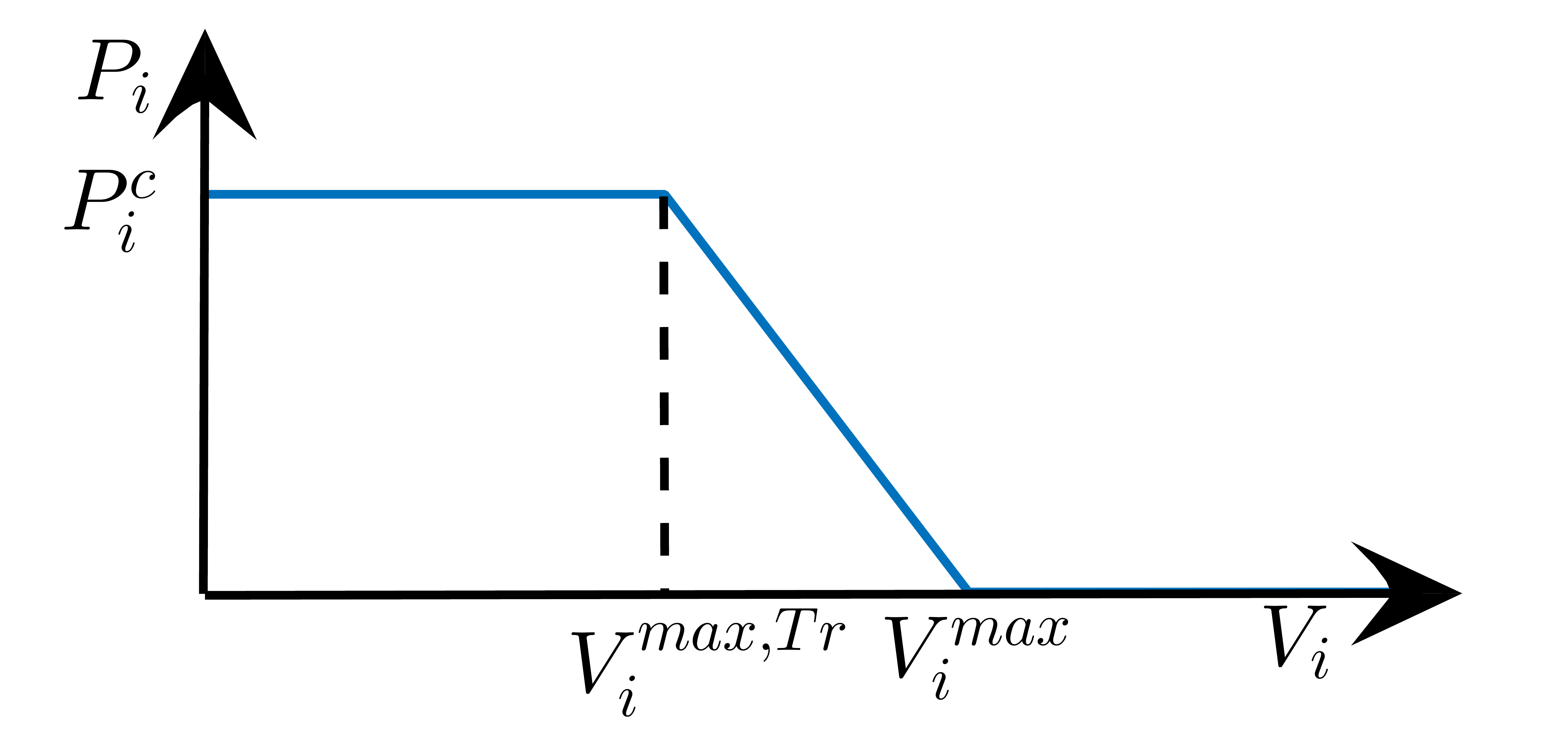}}
\captionof{subfigure}{}
\end{minipage}
\captionof{figure}{(a) Overcurrent control. (b) Squeeze control.}
\label{fig:Acceleration_Control}
\end{minipage}

\end{minipage}

\vspace{0.5em}\noindent
{\bf Power network:}
Let $V_i$ and $I_i$ denote the voltage and current at node $i$, respectively, and $\vv$ and $\iv$ the vectors of the nodal voltages and currents. For safe operations, all nodal voltages must be within a safety limit, i.e.,
\begin{align}
V^{\min}_i  \leq V_i \leq V^{\max}_i, \qquad \forall i \in \mathcal{N}, \label{eqn:Voltage_Limits}
\end{align}
where $V^{\min}_i$ and $V^{\max}_i$ are the safety thresholds for node $i$. By convention, we assume that the current injected into the TPS is positive. 
The resistance of the branch connecting the nodes $i$ and $j$ is denoted by $R_{i,j}(\sv)$ and its conductance by $G_{i,j}(\sv)$, where $G_{i,j}(\sv) = 1 / R_{i,j}(\sv)$. Note that branch resistance (and conductance) depends on the positions of the trains, i.e., $\sv$.
For instance, in Fig.~\ref{fig:DCTractCkt}, 
$R_{i,i+1}=\gamma(s_{i+1} - s_{i})$, 
where $\gamma$ is a constant depending on the electrical wire characteristics. 
From Kirchhoff's circuit laws, we have
\begin{align}
\Ym(\sv) \vv & = \iv \label{eqn:nodal_eq},
\end{align}
where $\Ym(\sv) \in \mathbb{R}^{N \times N}$ is the nodal conductance matrix and the $(i,j)$th element of $\Ym(\sv)$, denoted by $Y_{i,j}(\sv)$, is given by
\begin{align}
Y_{i,i}(\sv)  = \sum_{j : (i,j) \in \mathcal{L}} G_{i,j}(\sv), \qquad
Y_{i,j}(\sv)  = \begin{cases}
                        -G_{i,j}(\sv),  \ & \text{if} \ j \neq i \ \text{and} \ (i,j) \in \mathcal{L},  \\
                        0, \ & \text{if} \ j \neq i \ \text{and} \ (i,j) \notin \mathcal{L}.
                    \end{cases}   \nonumber
\end{align}

\vspace{0.5em}\noindent
{\bf Substations:} We consider inverting substations
capable of both supplying and absorbing power. They are modeled as DC voltage sources governed by
\begin{align}
\label{eqn:sub_vg}
        V_i &   = V_{\text{NL}} - R_s I_i, \qquad i \in \mathcal{N}_{\text{sub}}, 
\end{align}
where $V_{\text{NL}}$ and $R_s$ are the no-load voltage and the internal resistance of the substation. 
When a substation supplies power, $I_i>0$; when it absorbs power, $I_i<0$ and the absorbed power is injected back to the supporting ac power grid.
This dc substation model has been widely adopted in TPS analysis \cite{ArboleyaBFS2015}, \cite{RaghunathanCOMPRAIL2014}.

\vspace{0.5em}\noindent
{\bf Trains:} 
Let $P_i$ denote the power absorbed or injected by a tractioning train or a regenerating train at node $i$. 
We have 
\begin{align}
P_i = V_i I_i . \label{eqn:train_power_org}
\end{align}
For safety, the trains adopt the following two local power controls \cite{OkadaKoseki2004}.

\noindent \emph{Overcurrent control:} A tractioning train absorbs power from the power network, resulting in a drop in the train's nodal voltage. To prevent the nodal voltage from falling below the safety threshold $V^{\min}_i$, the overcurrent control is activated whenever the nodal voltage at the train node $i$ drops below a triggering voltage $V^{\min,\text{Tr}}_{i}$.
Let $P^{d}_{i}$ denote the power demand of a tractioning train at node $i$. 
The overcurrent control will command the train to absorb $P_i$ units of power, where $P_i$ is given by
\begin{align}
\label{eqn:acc_train}
        P_i & = \begin{cases}
        				  0, \ & \text{if} \ V_i \leq V^{\min}_{i} ;\\
        				  P^{d}_{i}  \LB \frac{V_i - V^{\min}_{i} }{V^{\min,\text{Tr}}_{i} - V^{\min}_{i} } \RB, \ & \text{if} \  V^{\min}_{i}   \leq V_i \leq  V^{\min,\text{Tr}}_{i} ;\\
                        P^{d}_{i},  \ & \text{if} \ V_i \geq V^{\min,\text{Tr}}_{i}.
                       \end{cases}   
\end{align}
This control law is illustrated in Fig.~\ref{fig:Acceleration_Control}~(a).
Specifically, if the nodal voltage at the train is greater than the triggering voltage, the train draws a power equal to its demand.
Otherwise, the train curtails its power consumption according to the $V_i$. If the voltage drops below $V^{\min}_{i}$, the train does not draw power to prevent safety incidents.

\noindent
\emph{Squeeze control:} A regenerating train injects power into the power network, resulting in an increase in the train's nodal voltage. To prevent the voltage from exceeding the safety limit $V^{\max}_i$, the squeeze control is activated whenever the nodal voltage at the train node $i$ exceeds a certain triggering voltage level $V^{\text{Tr}}_{i,\max}$.
Let $P^{c}_{i}$ denote the regeneration capacity of the train.
The squeeze control will command the train to inject $P_i$ units of power into the TPS, where $P_i$ is given by
\begin{align}
\label{eqn:reg_train}
 P_i  & = \begin{cases}
                        P^{c}_{i},  \ & \text{if} \ V_i \leq V^{\max,\text{Tr}}_{i}; \\
                        P^{c}_{i}   \LB \frac{V^{\max}_{i} - V_i}{V^{\max}_{i} - V^{\max,\text{Tr}}_{i}} \RB, \ & \text{if} \   V^{\max,\text{Tr}}_{i} \leq V_i \leq V^{\max}_{i}; \\
                        0, \ & \text{if} \ V_i \geq V^{\max}_{i}.
                    \end{cases} 
\end{align}
This control law is illustrated in Fig.~\ref{fig:Acceleration_Control}~(b).
Specifically, if the nodal voltage at the train is lower than the triggering voltage, the train injects all the regenerated power.
Otherwise, the train curtails the power injection according to the $V_i$ by burning the remaining power in a rheostatic braking system \cite{OkadaKoseki2004}.
If the voltage drops below $V^{\max}_{i}$, the train does not inject power into the TPS to prevent safety incidents. 

The train's power demand $P_i^d$ and regeneration capacity $P_i^c$ depend on the train's running profile and real-time state. They can be provided by the train's motion control system. We note that the electrical models described in this section address the steady-state
voltages and currents. They ignore the power transients of the trains due to their internal feedback control systems that implement the overcurrent/squeeze control decisions.
However, it is safe to ignore these transients because they can settle quickly, before the next overcurrent/squeeze control action \cite{Talukdar1977}.
This steady-state analysis approach has been widely adopted in TPS power flow analysis \cite{CaiIterative1995,PiresICCG2007,ArboleyaBFS2015}.

\section{False Data Injection Attacks Against TPS}
\label{sec:Cyberattacks}
In this section, we study how an attacker can mislead the TPS into an inefficient or unsafe operating state.
We focus on FDI attacks that tamper with the measurements of train-borne voltage and current sensors. Such an attack will cause the TPS to make wrong decisions of power absorption/injection, since a train's overcurrent and squeeze controls depend on the sensor measurements. We further consider attacks of two different objectives: (i) increase the system's total instantaneous power consumption, and (ii) cause breaches of the safety conditions in \eqref{eqn:Voltage_Limits}. We call these two types of attacks {\em efficiency attack} and {\em safety attack}, respectively. In this section, we first describe our threat model. Then, we analyze the attacker's approach of computing effective efficiency and safety attacks. Lastly, we present numerical results to illustrate the two kinds of attacks. 

\subsection{Threat Model}
\label{sec:ThreatModel}
Real-world attackers against critical CPSes are often smart, resourceful, and highly strategic. Their strategies can be guided by detailed knowledge of their targets, which can be obtained in practice by malicious insiders, long-term data exfiltration \cite{dragonfly2014}, or social engineering against employees, contractors, or vendors of the operators in question \cite{karnouskos2011}. In this paper, we follow Kerckhoffs's principle to consider an attacker who has accurate knowledge of the targeted system and read access to the system state. Knowledge of the system includes the electrical models and parameters given in Section~\ref{sec:Tract_Power}, as well as the system's method of attack detection. The system state includes present power demands, regeneration capacities, and voltage, current, and position measurements of all the trains. This information can be leaked through a compromised operation center, as in recent high-profile attacks~\cite{karnouskos2011,dragonfly2014}.
We assume that the attacker has write access to voltage, current, and position measurements of nodes in the set $\mathcal{N}_{\text{a}}$, where $\mathcal{N}_{\text{a}} \subseteq \mathcal{N}$, so that he can corrupt these measurements. 
Recent studies have demonstrated that such unauthorized write access can be obtained for analog sensors, traditional electro-mechanical meters, and modern smart meters \cite{SmartMeterSecurity2009,Theft2011,kune2013ghost}. Analog sensors are vulnerable to precisely controlled electromagnetic interference \cite{kune2013ghost}; measurement devices can be affected by {\em hardware trojans} \cite{HardwareTrojans2010} and infected with malwares \cite{SmartMeterSecurity2009,davis2009}.

Under the said Kerckhoffs's assumption on the attacker's knowledge, we will analyze his strategies of achieving successful efficiency and safety attacks. Conversely, we will also develop countermeasures by a defender to detect these attacks and mitigate their impacts. 
Our threat model is strong, but the conservative analysis is necessary because any underestimation of the attacker's capability may have catastrophic consequences, including extremely costly infrastructure damage and loss of human lives.

We note that, alternatively, the attacker can launch FDI attacks against the {\em decisions} of the local controls (i.e., the $P_i$ values for the trains). To detect such attacks, each train can compare the $P_i$ value in question with that computed based on the train's voltage and current measurements and the {\em a priori} overcurrent and squeeze control laws. In the rest of this paper, we focus on the analysis and detection of  FDI attacks on the voltage and current measurements only. This problem is comparatively much more challenging since information compromised right at the sources will preclude its use for any subsequent sanity checks.

Finally, we note that other potential attacks such as the denial-of-service (DoS) attacks 
that block sensor reading reporting can be easily detected, since in TPSes, sensors periodically report readings. Upon detection, the operator can initiate mitigation steps (e.g., stop the trains) to prevent any safety incidents. Thus, in this paper, we focus on the more challenging FDI attacks, as its detection generally needs a deep understanding on the power flows and train/substation operations.

\subsection{FDI Attack Construction} \label{subsec:att_constr}
In this section, we analyze how to compute an effective \emph{attack vector}, as a vector of
false voltage and current measurements to be injected into the sensing systems of the trains in $\mathcal{N}_a$. Note that, in this section we ignore position measurements in the attack vector, because they will not affect the trains' overcurrent and squeeze controls. In the rest of the paper, we will use $x'$ to denote the compromised version of a sensor measurement $x.$
In the following analysis, we first derive conditions for the attack vector to mislead the train into absorbing or injecting a certain amount of power. With the calculated power absorptions/injections of the trains, we can determine the system's total power consumption and hence its safety status. Thus, we can formulate the attacker's problem of finding an attack vector to achieve his goal of maximizing the total power consumption, under conditions that we will state presently for enforcing certain amounts of power absorption/injection.

The following conditions are sufficient to enforce that a train at node $i \in \mathcal{N}_a$ will absorb or inject $P_i$ units of power:
\begin{align}
& V^{\prime}_i  \ \begin{cases}
        				  \geq V^{\min,\text{Tr}}_{i},  & \text{if} \ P_i  = P^{d}_i,  \\
        				  = V^{\min}_{i}  + \frac{P_i (V^{\min,\text{Tr}}_{i} - V^{\min}_{i})}{ P^{\min}_{i} },  \ & \text{if} \ P^{d}_i \leq P_i  \leq 0, \\
                       \leq V^{\min}_{i} ,    \ & \text{if} \  P_i  = 0, 
                       \end{cases} \label{eqn:accvg_attack} \   \forall i \in \mathcal{N}_{\text{a}} \cap \mathcal{N}_{\text{tra}};
\end{align}
\begin{align}             
   & V^{\prime}_i \   \begin{cases}
     \geq V^{\max}_{i}, \ & \text{if} \ P_i  = 0, \\
      = V^{\max}_{i} - \frac{P_i (V^{\max}_{i} - V^{\max,\text{Tr}}_{i} )}{ P^{c}_{i} },  \ & \text{if} \ 0 \leq P_i  \leq P^{c}_i, \\
       \leq V^{\max,\text{Tr}}_{i}, \ & \text{if} \   P_i  = P^{c}_i ,
       \end{cases} \label{eqn:regvg_attack} \ \forall i \in \mathcal{N}_{\text{a}} \cap \mathcal{N}_{\text{reg}}   ;
       \end{align} 
 \begin{align}
 & V^{\prime}_i I^{\prime}_i = P_i, \qquad i \in \mathcal{N}_{\text{a}}; \label{eqn:pow_attack} \\
 & P_i  \geq P_i^d, \qquad i \in \mathcal{N}_{\text{tra}}; \label{eqn:train_acc_power} \\
& P_i \leq P_i^c, \qquad i \in \mathcal{N}_{\text{reg}} \label{eqn:train_reg_power}.
 \end{align}
The conditions in \eqref{eqn:accvg_attack} and \eqref{eqn:regvg_attack} are obtained by inverting the overcurrent and squeeze control laws given in Section~\ref{sec:Tract_Power}, and replacing the true voltage $V_i$ by the compromised measurement $V_i'$. As a result, based on $V_i'$, the train will follow the overcurrent/squeeze control law to regulate its power absorption/injection to the attacker's desired value $P_i$. This control process is often achieved in a closed loop, with the measurements $V_i'$ and $I_i'$ acting as feedback and the desired value $P_i$ as setpoint. Under the condition \eqref{eqn:pow_attack}, the actual power absorption/injection under the aforementioned closed-loop control will converge to $P_i$. Moreover, the condition \eqref{eqn:pow_attack} can hide the attack for trains that can directly measure the power consumption. 
The conditions \eqref{eqn:train_acc_power} and \eqref{eqn:train_reg_power} ensure the feasibility of inducing the train to absorb/inject $P_i$ units of power. Specifically, the attacker's desired $P_i$ should not exceed a regenerating train's regeneration capacity.
The condition \eqref{eqn:train_acc_power}, where both $P_i$ and $P^d_i$ are negative, prevents the mechanism from violating the overcurrent control. In summary, if the compromised measurements $V_i'$ and $I_i'$ satisfy the conditions in \eqref{eqn:accvg_attack} to \eqref{eqn:train_reg_power}, the train will control its power absorption/injection to $P_i$. With this understanding, the attacker can carefully plan the attack vector to achieve his goal. Without the conditions in \eqref{eqn:accvg_attack} to \eqref{eqn:train_reg_power}, the attacker cannot predict the impact of his attack and therefore cannot implement his desired strategy.

Each sensor in the TPS may apply data quality checks on its measurements. For instance, the measurements at the present time instant should not differ significantly from those predicted based on the measurements at the previous time instant.
Intuitively, if the compromised measurement is bounded around the true measurement, the data quality checks, designed to be insensitive to natural random noises of measurement, will not raise an alarm. Thus, we assume that the compromised measurements need to satisfy:
\begin{align}
& \vv -\Delta \vv \preceq \vv^{\prime}  \preceq \vv +\Delta \vv ,   \label{eqn:feas_org1d_nse} \\
                &    \iv-\Delta \iv  \preceq \iv^{\prime} \preceq \iv+\Delta \iv ,  \label{eqn:feas_org1e_nse} 
\end{align}
where $\Delta \vv = [\Delta V_1,\dots,\Delta V_N]^T $ and $\Delta \iv = [\Delta I_1,\dots,\Delta I_N]^T$ are 
the maximum errors allowed by the data quality checks (in Section~\ref{sec:SimRes}, we illustrate how to set the values of $\Delta \vv$ and $\Delta \iv$ based on practical considerations); ${\xv} \preceq {\yv}$ means that each element of ${\xv}$ is no greater than the corresponding element in ${\yv}$. We note that, if $i \notin \mathcal{N}_{\text{a}},$ $\Delta V_i = 0$. In practice, the attacker can obtain the settings of $\Delta \vv$ and $\Delta \iv$ by launching a 
data exfiltration attack \cite{dragonfly2014}. In the absence of such knowledge, the attacker must choose stringent values for these quantities such that the attack is not detected by the data-quality checks.

Based on the above conditions for the compromised measurements, we now formulate the efficiency and
safety attacks.

\subsubsection{Efficiency Attack}
\label{sec:eff_attacks_nse}
An efficiency attack causes an increase or decrease in the total instantaneous power injected or absorbed by the substations.
In particular, we consider an aggressive attacker who aims to maximize or minimize such
injected or absorbed power.
Formally, the attacker solves the following constrained optimization
problem to compute the attack vector $\{V_i', I_i' | \forall i \in \mathcal{N}_a\}$:
\beqa
&\dsp \max_{ \{V_i',I_i' | \ \forall i \in \mathcal{N}_a \}} &  \sum_{i \in \mathcal{N}_{\text{sub}}} V_i I_i  \label{eqn:attacker_economic_nse}  \\ 
& s.t. &  \text{constraints in \eqref{eqn:nodal_eq} to \eqref{eqn:feas_org1e_nse}}. \nonumber
\eeqa
The above formulation captures the physical laws governing the power network and the substations (i.e., \eqref{eqn:nodal_eq} to \eqref{eqn:train_power_org}), as well as how the attack vector induces the trains to make erroneous power control decisions (i.e., \eqref{eqn:accvg_attack} to \eqref{eqn:train_reg_power}). Specifically, for any $\{V_i', I_i' \ | \forall i \in \mathcal{N}_a \}$ satisfying \eqref{eqn:accvg_attack} to \eqref{eqn:train_reg_power}, the attacker can predict the trains' power absorptions/injections $\{P_i = V_i'I_i' \ | \forall i \in \mathcal{N}_{\text{trains}} \}$. He then uses the physical laws in \eqref{eqn:nodal_eq}, \eqref{eqn:sub_vg}, and \eqref{eqn:train_power_org} to determine the actual voltages and currents of the substations (i.e., $\{V_i, I_i \ | \forall i \in \mathcal{N}_\text{sub} \}$) and predict the system's total power consumption $\sum_{i \in \mathcal{N}_{\text{sub}}} V_i I_i$.

Solving the constrained optimization problem in \eqref{eqn:attacker_economic_nse} can be computationally expensive, mainly because the constraints in \eqref{eqn:accvg_attack} and \eqref{eqn:regvg_attack} are non-smooth and non-differentiable. 
Existing constrained non-linear optimization solvers (e.g., the \texttt{fmincon} function of MATLAB) often require the objective and constraint functions to be smooth. To use these existing solvers, the attacker can adopt a {\em divide-and-conquer} approach that splits the problem \eqref{eqn:attacker_economic_nse} into multiple subproblems in which a piece of \eqref{eqn:accvg_attack} or \eqref{eqn:regvg_attack} is selected as a constraint for a train. By comparing the optimization results of all the subproblems, the attacker can obtain a global optimal solution to the problem in \eqref{eqn:attacker_economic_nse}.
 Because each train has three choices in \eqref{eqn:accvg_attack} or \eqref{eqn:regvg_attack}, this approach will generate a total of $3^{|\mathcal{N}_a|}$ subproblems, where $|\mathcal{N}_a|$ is the number of trains under FDI attacks. As the subproblems are mutually independent, the attacker can solve the subproblems in parallel, to reduce computation time.
We note that the ability to solve the problem in \eqref{eqn:attacker_economic_nse} in real time can be important to the attacker. This is because, to accumulate large energy loss, the attacker needs to keep at the FDI attacks by solving \eqref{eqn:attacker_economic_nse} continually, based on the latest system state given by $\sv$, $P_i^d$, and $P_i^c$. The attacker will need to procure sufficient computing resources for achieving the real-time objective.
A resource-constrained attacker can inject a suboptimal attack that does not require extensive 
computations like solving \eqref{eqn:attacker_economic_nse}. In Section~\ref{sec:Illus_nse}, we present a numerical example to show that such an attack can still cause a considerable performance degradation.

\subsubsection{Safety Attack}
\label{sec:safe_attacks_nse}
For safety attacks, we model the space of attack vectors that can cause the voltages 
at a subset of the TPS nodes, denoted by $\mathcal{N}_{\text{unsafe}}$, to cross the safety limits in \eqref{eqn:Voltage_Limits}. The attack space is defined by all the constraint conditions in
the optimization problem \eqref{eqn:attacker_economic_nse}, and
$V_i \notin [V_{i,\min} ,V_{i,\max}], i \in \mathcal{N}_{\text{unsafe}}.$
As long as the attacker can find an attack vector satisfying the above constraints, he will be able to achieve the safety breaches. 

We now discuss a heuristic approach that the attacker can use to aggressively increase the extent of the safety breaches. Specifically, the attacker maximizes the total power injected into the TPS by the regenerating trains, i.e., $\sum_{i \in \mathcal{N}_{\text{reg}}} V_i I_i$, subject to all the constraints of the optimization problem in \eqref{eqn:attacker_economic_nse}. The intuition is that injecting more power into the TPS will result in higher catenary voltages. This constrained optimization problem can also be solved by the aforementioned divide-and-conquer approach. 

The TPS under FDI attacks can be analyzed using the same set of equations as in Section~\ref{sec:Tract_Power} (i.e., \eqref{eqn:Voltage_Limits}-\eqref{eqn:reg_train}), except that the train's overcurrent and squeeze control decisions are now computed based on the attacker's injections $V^\prime_i$ (in \eqref{eqn:acc_train} and \eqref{eqn:reg_train}).
Based on this analysis, in Section~\ref{sec:Illus_nse} we present numerical example  to show the impact of efficiency and safety attacks. We also present time-domain simulation results in Section~\ref{sec:SimRes} considering realistic running profiles of trains.

\subsection{Numerical Examples}
\label{sec:Illus_nse}
We now present numerical examples to illustrate the efficiency and safety attacks.
These examples are based on the TPS shown in Fig.~\ref{fig:DCTractCkt}, in which both trains are decelerating and regenerating. 
The system model parameters are given in Table~\ref{tbl:BFS_parameters}. We consider a time instant at which the system state in the absence of attack is given by the first part of Table~\ref{tbl:Attack_nse}, where the total instantaneous power absorbed by the substations and injected back into the supporting ac power grid is $3.601\, \text{MW}$.
In these examples, we assume that the attacker can 
only compromise the voltage and current measurements of the train at node $2.$ 

\begin{table} [!t]
\centering
\setlength\extrarowheight{2pt}
\begin{tabular}{|c|c|c|c|c|c|}
\hline
	\textbf{Parameters} &  $\text{V}_{\text{NL}}$ & $\gamma$ & $\text{R}_\text{s}$ & $\text{V}_i^{\max,\text{Tr}}$ & $\text{V}_i^{\max}$  \\ \hline
	\textbf{Value} & 750V & 30m$\Omega$/km & 29.56m$\Omega$ &  850V & 900V \\ \hline
\end{tabular}
\caption{TPS model parameters.}
\label{tbl:BFS_parameters}
\end{table}

\subsubsection{Efficiency Attack} 
The attacker solves the constrained optimization problem in \eqref{eqn:attacker_economic_nse} and 
tampers with $V_2$ and $I_2$ accordingly. 
We set $\Delta V_i = 50$V and $\Delta I_i = 200$A, $\forall i \in \mathcal{N}_a$.
The  compromised measurements and the true state of the system under attack are given in the second part of Table~\ref{tbl:Attack_nse}. We can see that the compromised voltage measurement at node $2$ 
is greater than the true value. Consequently, the train
injects less power into the TPS because of the squeeze control, resulting in less power absorption by the substations.
Specifically, the total power absorption is $2.888\,\text{MW}$,
a $20\%$ reduction compared with the case of no attack. 
Thus, the power efficiency of the system is degraded.

We also consider a suboptimal attack in which the attacker compromises the voltage of the train at node $2$ by $20$~V (hence $V'_i = V_i + 20V$).
Under this attack, the total power absorption is $3.25\,\text{MW}$, a $9.5\%,$ reduction compared with the case of no attack. This shows that the attacker can still cause a considerable degradation in the system efficiency by injecting a suboptimal attack. In practice, the attacker can tune his attack strategy to balance between the attack impact and the computational complexity of computing the attacks. 

\subsubsection{Safety Attack} 
The attacker uses the heuristic approach in Section~\ref{sec:safe_attacks_nse} to compute the safety attack.
The compromised measurements and the true system state 
are given in the third part of Table~\ref{tbl:Attack_nse}. The compromised voltage measurement at node $2$ is lower than its true value. Thus, the train
at node $2$ injects more power into the TPS because of the squeeze control, causing the actual voltage at node $2$ to exceed the safety limit.
We can see that it is possible for an attacker to tamper with the measurements of a single train and already achieve a safety attack. In this example, since both the trains are regenerating, the catenary voltages are closer to the safety limit. This makes it easier for the attacker to achieve the safety attack. Thus, for an attacker with limited write access to the trains' measurements (i.e., a small set $\mathcal{N}_a$), he can continuously monitor the system and wait for feasible moments for launching safety attacks. 

\begin{table} [!t]
\centering
\setlength\extrarowheight{2pt}
\begin{tabular}{|M{0.2\textwidth}|M{0.1\textwidth}|M{0.1\textwidth}|M{0.1\textwidth}|M{0.1\textwidth}|M{0.1\textwidth}||M{0.1\textwidth}|} \hline
	\multicolumn{2}{|c|}{Node} & 1 & 2 & 3 & 4 & Efficiency Loss \\ \hline
	\multirow{5}{*}{\parbox{0.2\textwidth}{\center TPS State \\(Without Attack)}} & $s_i$ & 0 & 0.9  & 1.2 & 2 & \\ \cline{2-6}
	& $P_i^c$ & - & 5.5 & 1.8 & - & \\ \cline{2-6}
	& $V_i$ & 815.6 & 875.5 & 867.7 & 815 & \textemdash \\ \cline{2-6}
	& $I_i$ & -2218.8 & 3079.2 & 1338 & -2198.4 & \\ \cline{2-6}
	& $P_i$ & -1.81 & 2.696 & 1.161  & -1.792 & \\ \hhline{|=|=|=|=|=|=|=|}
	\multirow{5}{*}{\parbox{0.2\textwidth}{\center Efficiency Attack (Optimal)}} & $V'_i$ & - & 888.6 & - & - & \\ \cline{2-6}
	& $I'_i$ & - & 1409.6 & - & - & \\ \cline{2-6}
	& $V_i$ & 801.1 & 847.7 & 850 & 805.2 & 20 \% \\ \cline{2-6} 
	& $I_i$ &-1728.2 & 1477.6 & 2117.6 & -1867.1 & \\ \cline{2-6}
	& $P_i$ & -1.384  & 1.253  & 1.8  & -1.503 &  \\ \hhline{|=|=|=|=|=|=|=|}
	\multirow{5}{*}{\parbox{0.2\textwidth}{\center Efficiency Attack (Suboptimal)}} & $V'_i$ & - & 881.9 & - & - &  \\ \cline{2-6}
	& $I'_i$ & - & 1409.6 & - & - & \\ \cline{2-6}
	& $V_i$ & 808.5 & 861.9 & 859.1 & 810.1 & 9.5 \% \\ \cline{2-6} 
	& $I_i$ &-1979.2 & 2301.1 & 1715 & -2036.5 & \\ \cline{2-6}
	& $P_i$ & -1.606 & 1.98  & 1.47  & -1.65 &  \\ \hhline{|=|=|=|=|=|=|=|}
	\multirow{5}{*}{\parbox{0.2\textwidth}{\center Safety Attack}}& $V'_i$ & - & 862.9 & - & - & \\ \cline{2-6}
	& $I'_i$ & - & 4731.6 & - & - & \\ \cline{2-6}
	& $V_i$  & 828.9 & {\bf 901} & 884.2 & 824.1 & \textemdash \\ \cline{2-6}
	& $I_i$ & -2669.1 & 4531.6 & 643.2 & -2505.7 & \\  \cline{2-6}
	& $P_i$ & -2.212  & 4.083  & 0.569  & -2.065 & \\ \hline
\end{tabular}
\caption{System state and compromised measurements under efficiency and safety attacks. Distance is measured in kilometers, voltage in volts, current in amperes, and power in megawatts.}
\label{tbl:Attack_nse}
\end{table}

\section{Global Attack Detection}
\label{sec:BDD}

As discussed in Section~\ref{sec:intro}, dc TPSes mainly rely on trains' local controls (i.e., overcurrent and squeeze controls) to avoid unsafe states.
The TPS does not otherwise cross-check sensor data from different trains to ensure the data's global consistency.
However, such global monitoring is clearly advantageous,
because anomalies in the data relationships can help flag the occurrence of an FDI attack. 
Furthermore, not only can we cross-check sensor measurements from different trains, we can also check these measurements 
against an  
{\em a priori} global TPS model to ensure agreement. An attacker that wishes to remain stealthy under global monitoring thus becomes more constrained, and his actions may become less effective.  
In this section, we present the design of a global monitor for detecting FDI attacks under the Kerckhoffs's assumption, which we will subsequently refer to as the global attack detector (GAD).

\begin{figure}
\centering
\includegraphics[width=0.6\textwidth]{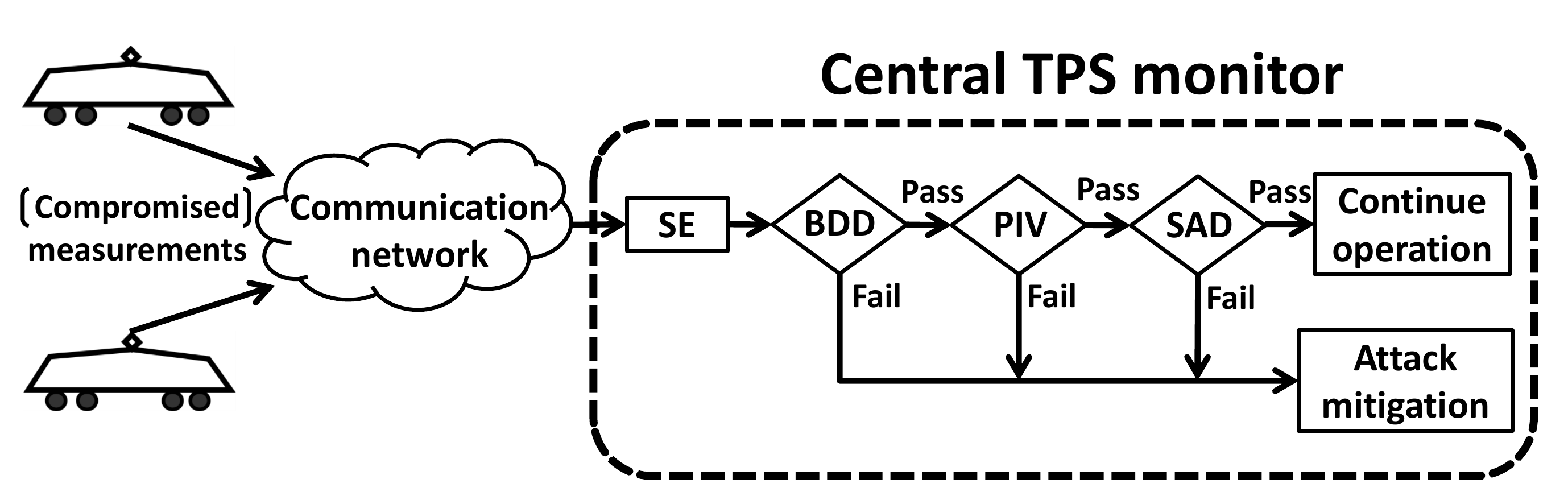}
\caption{Global attack detection. SE: State estimation; BDD: Bad data detection; PIV: Position integrity verification; SAD: Secondary attack detection.}
\label{fig:TPS_SE_BDD}
\end{figure}

Fig.~\ref{fig:TPS_SE_BDD} overviews our global attack detection approach. In the approach, the trains' voltage, current, and position measurements are sent to a central {\em TPS monitor} periodically. As Fig.~\ref{fig:TPS_SE_BDD} illustrates, the TPS monitor applies state estimation (SE), bad data detection (BDD), position integrity verification (PIV), and secondary attack detection (SAD) in sequence to detect attacks. In ac utility power grids, similar SE and related BDD are widely used for detecting faulty data or reducing the impact of noisy sensor measurements~\cite{wood1996power}. 
In Section~\ref{sec:SE}, we propose a new BDD design that is specific to the application domain of dc TPS.
By checking the consistency among measurements based on prior knowledge of the TPS, the BDD can detect a range of FDI attacks. However, the detection is not complete -- an attacker under the Kerckhoffs's assumption will be able to bypass it using his knowledge of the system. In Section~\ref{sec:SimRes}, we provide numerical results to illustrate the impacts of these stealthy efficiency and safety attacks.
To counter the stealthy attacks, in Section~\ref{subsec:SAD} we further propose a novel SAD algorithm to supplement the BDD, under an additional assumption that the trains' position data is intact, which is ensured by the PIV.

\subsection{TPS Bad Data Detection and Its Vulnerability} \label{sec:SE}
Recall that in Section~\ref{subsec:att_constr}, the trains apply local controls based on their own voltage and current measurements only. Hence, the trains' position information does not matter.
Under global detection, however, compromise of the trains' position information becomes relevant, since it may enable the attacker to mislead the TPS monitor into deriving an incorrect TPS model that is consistent with the compromised voltage and current measurements. Tampering with the position data can thus help the attacker evade detection. Although in practice it is extremely difficult for the attacker to hide the compromise of train position data because multiple sources of this data are often available (see Section~\ref{subsec:SAD} for the details), in this section, for generality, we account for possible compromise of the position data.

We use $\tilde{x}$ to represent a possibly compromised measurement $x$, i.e., $\tilde{x}=x$ in the absence of attack and $\tilde{x}=x'$ in the presence of attack.
The state of the TPS is the vector of the nodal voltages, i.e., $\vv$.
The set of measurements includes nodal positions $\tilde{\sv} = [\tilde{s}_1,\dots,\tilde{s}_N]^T \in \mathbb{R}^{N \times 1}$,  
and nodal voltage and current readings $\tilde{\zv} = [\tilde{\vv},\tilde{\iv}]^T \in \mathbb{R}^{2N \times 1}$.
In the absence of attack, the measurement vector $\zv$ 
is related to the system state $\vv$ as
$\zv = \Hm (\sv) \vv + \nv,$
where $\Hm (\sv)     
 = [\mathbb{I}_{N};\Ym(\sv)]$ is a {\em measurement matrix} depending on the positions $\sv$, $\mathbb{I}_{N}$ is an $N$-dimensional identity matrix, and
$\nv \in \mathbb{R}^{2N \times 1}$ is a random measurement noise vector. We assume that $\nv$ follows a multivariate Gaussian distribution.
The maximum likelihood (ML) estimate of $\vv$, denoted by $\hat{\vv}$, is given by \cite[Chap.~12]{wood1996power}
\begin{equation*}
\hat{\vv} = (\Hm (\tilde{\sv})^T \Sigmam^{-1} \Hm (\tilde{\sv}))^{-1} \Hm (\tilde{\sv})^T \Sigmam^{-1} \tilde{\zv},
\end{equation*}
where $\Sigmam$ is the covariance matrix of $\nv$. 
The SE's BDD raises an alarm if 
\begin{equation*}
(\tilde{\zv} - \Hm (\tilde{\sv}) \hat{\vv})^T \Sigmam^{-1} (\tilde{\zv} - \Hm (\tilde{\sv})\hat{\vv}) > \tau,
\end{equation*}
where $\tau$ is a constant threshold that can be determined to meet a given false alarm rate under random measurement noise\footnote{A detailed description of how to set the BDD threshold is given in Appendix~A.}. The BDD is originally designed to detect faulty sensor data caused by natural malfunction of sensors. Thus, it is effective in detecting a range of FDI attacks that are not specifically designed to bypass it. 
However, the attacker that we consider in this paper, following the Kerckhoffs's principle, will be able to design FDI attacks with the objective of bypassing the BDD.
In the following,
we formulate these stealthy safety and efficiency attacks. 

From an existing result~\cite{LiuNingReiter2009}, if the compromised measurement vector ${\zv}^{\prime}$ is in the column space of the compromised measurement matrix $\Hm(\sv^{\prime})$, 
${\zv}^{\prime}$ can bypass the BDD. Applying this result to the TPS, we have the following lemma.
\begin{lemma}
\label{cor:Bypass}
Any compromised measurements that satisfy 
\begin{align}
& \Ym({\sv}^{\prime})  {\vv}^{\prime} = {\iv}^{\prime}  \label{eqn:BDD_constraint1}
\end{align}
can bypass the BDD.
\end{lemma}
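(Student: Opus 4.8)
The plan is to reduce the statement to the bypass criterion already available to us from~\cite{LiuNingReiter2009}: a compromised measurement vector evades the BDD whenever it lies in the column space of the (compromised) measurement matrix. Since that criterion is given, the only work is to translate condition~\eqref{eqn:BDD_constraint1} into a column-space membership for the particular block structure $\Hm(\sv^{\prime}) = [\mathbb{I}_N; \Ym(\sv^{\prime})]$ that arises in the TPS.

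First I would write the compromised measurement vector as $\zv^{\prime} = [\vv^{\prime}, \iv^{\prime}]^T$ and note that the column space of $\Hm(\sv^{\prime})$ is exactly the set of vectors $\Hm(\sv^{\prime})\xv = [\xv; \Ym(\sv^{\prime})\xv]$ as $\xv$ ranges over $\mathbb{R}^N$. Requiring the top block to equal $\vv^{\prime}$ forces $\xv = \vv^{\prime}$, and then the bottom block equals $\iv^{\prime}$ precisely when $\Ym(\sv^{\prime})\vv^{\prime} = \iv^{\prime}$, which is~\eqref{eqn:BDD_constraint1}. Hence taking $\xv = \vv^{\prime}$ gives $\zv^{\prime} = \Hm(\sv^{\prime})\vv^{\prime}$, so $\zv^{\prime}$ lies in the column space of the compromised measurement matrix and the cited result applies directly.

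To make the argument self-contained rather than only invoking the criterion, I would then substitute $\zv^{\prime} = \Hm(\sv^{\prime})\vv^{\prime}$ into the ML estimator. The estimate collapses to $\hat{\vv} = (\Hm(\sv^{\prime})^T \Sigmam^{-1} \Hm(\sv^{\prime}))^{-1} \Hm(\sv^{\prime})^T \Sigmam^{-1} \Hm(\sv^{\prime}) \vv^{\prime} = \vv^{\prime}$, so the fitted measurement $\Hm(\sv^{\prime})\hat{\vv}$ equals $\zv^{\prime}$ and the BDD residual $(\zv^{\prime} - \Hm(\sv^{\prime})\hat{\vv})^T \Sigmam^{-1} (\zv^{\prime} - \Hm(\sv^{\prime})\hat{\vv})$ is identically zero, hence below any threshold $\tau$.

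The computation is routine, so there is no genuine analytic obstacle; the one point demanding care is that the measurement matrix is itself attacker-chosen through the compromised positions $\sv^{\prime}$, so I must apply the column-space criterion with respect to $\Hm(\sv^{\prime})$ -- the matrix the monitor actually uses in its SE -- and not the true $\Hm(\sv)$. Keeping the compromised positions consistent between the estimator and the residual is exactly what produces the cancellation, and this is the step I would state explicitly to avoid conflating the true and compromised TPS models.
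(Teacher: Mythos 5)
Your proposal is correct and follows essentially the same route as the paper: the paper's proof is exactly the observation that \eqref{eqn:BDD_constraint1} forces $\zv^{\prime} = [\vv^{\prime},\iv^{\prime}]^T = [\mathbb{I}_N;\Ym(\sv^{\prime})]\vv^{\prime} = \Hm(\sv^{\prime})\vv^{\prime}$, so $\zv^{\prime}$ lies in the column space of the compromised measurement matrix and the criterion of \cite{LiuNingReiter2009} applies. Your additional step of substituting into the ML estimator to show the residual vanishes is a harmless (and correct) elaboration of the same argument, not a different approach.
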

\begin{proof}
Lemma \ref{cor:Bypass} holds since
any compromised measurement vector ${\zv}^{\prime}$ that satisfies
\eqref{eqn:BDD_constraint1} lies in the column space of $\Hm (s')$, i.e.,
${\zv}^{\prime}  = [{\vv}^{\prime},{\iv}^{\prime}]^T  =  [\mathbb{I}_{N};\Ym(\sv^{\prime})] \vv^{\prime} =  \Hm(\sv^{\prime}) \vv^{\prime}$.
\end{proof}

In addition to \eqref{eqn:BDD_constraint1}, the TPS monitor may use two other sensor data checks. First, to meet the constraint in \eqref{eqn:BDD_constraint1}, the attacker may need to compromise the voltage and current measurements at the substations. The TPS monitor may check the substation measurements, i.e., $V_i$ and $I_i$, $\forall i \in \mathcal{N}_{\text{sub}}$, against the substation model in \eqref{eqn:sub_vg}. To be stealthy to this check, the attacker can impose an additional constraint of 
\begin{align}
V_i' = V_{\text{NL}}-R_s I_i', \quad \forall i \in \mathcal{N}_{\text{sub}}. \label{eqn:BDD_constraint2}
\end{align}
Second, the TPS monitor can also apply data quality checks similar to those in \eqref{eqn:feas_org1d_nse} and \eqref{eqn:feas_org1e_nse} to check the trains' position measurements. Thus, if the attacker can compromise the position measurements, he needs to satisfy
\begin{align}
\sv-\Delta \sv  \preceq \sv^{\prime}  \preceq \sv+\Delta \sv, \label{eqn:pos_attack}
\end{align}
where $\Delta \sv = [\Delta s_1,\dots,\Delta s_N]^T$ are the maximum allowed errors for position measurements and $\Delta s_i = 0$ if $i \notin \mathcal{N}_{\text{a}}$.

Therefore, the efficiency attacks that are stealthy to the BDD can be computed by solving 
the constrained optimization problem \eqref{eqn:attacker_economic_nse} with
the additional constraints \eqref{eqn:BDD_constraint1}, \eqref{eqn:BDD_constraint2}, and \eqref{eqn:pos_attack}. 
Similarly, the attack space for BDD-stealthy safety attacks is characterized by the constraints of the optimization problem \eqref{eqn:attacker_economic_nse}, $V_i \notin [V_{i,\min} ,V_{i,\max}], i \in \mathcal{N}_{\text{unsafe}}$, and the additional constraints \eqref{eqn:BDD_constraint1}, \eqref{eqn:BDD_constraint2}, and \eqref{eqn:pos_attack}. Naturally, BDD reduces the attack space since the attacker now needs
to satisfy additional constraints to remain undetected. In the simulation results presented in Section~\ref{sec:SimRes}, we show that, under a realistic TPS setting, the BDD significantly reduces the impact of attacks.

\subsection{Numerical Examples}
\label{sec:illus_se}
We now present numerical examples to illustrate the efficiency and safety attacks that can bypass the BDD as analyzed in Section~\ref{sec:SE}. 
The TPS model and parameters are identical 
to those in Section~\ref{sec:Illus_nse}. The true system state and the compromised measurements are given in Table~\ref{tbl:Attack_se}. 
We set $\Delta s_i = 0.6\, \text{km}$, $\forall i \in \mathcal{N}_a$.
To illustrate a powerful attacker, we assume that the attacker 
can corrupt the voltage and current measurements of all the four nodes in Fig.~\ref{fig:DCTractCkt}, as well as the positions of both the trains.

\subsubsection{Efficiency Attack}
Under the efficiency attack, the total power injected back to the supporting power grid by the substations is $3.431$ MW, which is
a reduction of about $4.7 \%$ compared with no attacks. 
This reduction 
is much less than the $20\%$ caused by the efficiency attack in Section~\ref{sec:Illus_nse}, which was achieved by compromising the voltage and current measurements of node $2$ only in the absence of BDD. This result illustrates the ability of the BDD in limiting the impact of efficiency attacks.

\subsubsection{Safety Attack}
We observe that by compromising the nodal measurements and the trains' position information, the attacker can increase the voltage at node $2$ to $901.4\,\text{V}$ while bypassing the BDD. Furthermore, if the attacker can gain write access to any one train  (i.e., $|\mathcal{N}_a|=1$), he cannot launch a successful safety attack. This is in contrast to the example in Section~\ref{sec:Illus_nse}, where the attacker could launch a successful safety attack by compromising the measurements of a single train only.

In summary, the above examples suggest that the global monitoring and BDD can significantly limit the impact of stealthy FDI attacks on the TPS even if the attacker can compromise the measurements of multiple trains. To accomplish a safety attack, the attacker needs to compromise more trains compared with no BDD.

\begin{table} [!t]
\centering
\setlength\extrarowheight{2pt}
\begin{tabular}{|M{0.2\textwidth}|M{0.1\textwidth}|M{0.1\textwidth}|M{0.1\textwidth}|M{0.1\textwidth}|M{0.1\textwidth}|} \hline
	\multicolumn{2}{|c|}{Node} & 1 & 2 & 3 & 4 \\ \hline
	\multirow{6}{*}{\parbox{0.2\textwidth}{\center Efficiency Attack}} & $s'_i$ & 0 & 1 & 1 & 2 \\ \cline{2-6}
	& $V'_i$ & 812 & 874.9 & 874.9 & 812 \\ \cline{2-6}
	& $I'_i$ &-2096.7 & 3159 & 1034.5 & -2096.8 \\ \cline{2-6}
	& $V_i$ & 813.2 & 871 & 861.7 & 811.6  \\ \cline{2-6}
	& $I_i$ & -2138.8 & 3173.2 & 1050.4 & -2084.8 \\ \cline{2-6}
	& $P_i$ & -1.739 & 2.764 & 0.905 & -1.692  \\ \hhline{|=|=|=|=|=|=|}
	\multirow{6}{*}{\parbox{0.2\textwidth}{\center Safety Attack}} & $s'_i$ & 0 & 0.43 & 1.8 & 2 \\ \cline{2-6}
	& $V'_i$ & 835 & 872.3 & 847.3 & 830.9 \\ \cline{2-6}
	& $I'_i$ & -2876.8 & 3487.8 & 2124.5 & -2735.4 \\ \cline{2-6}
	& $V_i$ & 829.1 & \textbf{901.4} & 895.1 & 830.1 \\ \cline{2-6}
	& $I_i$ & -2676.8 & 3375.3 & 2010.9 & -2709.4 \\ \cline{2-6}
	& $P_i$ & -2.219 & 3.043 & 1.8 & -2.249  \\ \hline
\end{tabular}
\caption{System state and compromised measurements under efficiency and safety attacks that have bypassed the BDD. Distance is measured in kilometers, voltage in volts, current in amperes, power in megawatts.}
\label{tbl:Attack_se}
\end{table}

\subsection{Secondary Attack Detection (SAD)} \label{subsec:SAD}
In this section, we propose a novel secondary attack detection (SAD) algorithm that can effectively detect the {\em onset} of an FDI attack that has bypassed the BDD. A requirement for the SAD is that the trains' position data communicated to the TPS monitor is intact. It is feasible for the TPS monitor to verify the integrity of the position data.
For example, real-world railway systems invariably provide multiple sources of train position information including train-borne wheel sensors and GPS, track-side Balise \cite{ALSTOM}, etc. 
By cross-checking position measurements from the multiple sources, we can readily identify FDI attacks on the position data unless the attacker succeeds in compromising all the data sources, which is highly challenging since these sensors use technologies that
are significantly different from each other. For example, GPS is a satellite-based system, Balise uses electronic beacon or transponder placed between the rails, etc. Such cross checks constitute the PIV illustrated in Fig.~\ref{fig:TPS_SE_BDD}. Given that TPS is a safety-critical system, the operator should enforce the highest consistency requirement on the position measurements from different sensors, i.e., if any inconsistency is found among different position sensors' readings, the PIV should raise a fault/attack alarm. If FDI attacks on the position data are identified, the TPS should immediately apply attack mitigation such as the approach discussed in Section~\ref{sec:mitigation}. 

Note that the analysis in the previous sections is for a particular time instant, and the attacker can use the techniques in Sections~\ref{sec:Cyberattacks} and \ref{sec:SE} to launch attacks continually over time. Once the SAD detects an attack's onset, the system can activate the attack mitigation approach in Section~\ref{sec:mitigation} to render subsequent FDIs ineffective. Thus, in this section we focus on analyzing the property of the system and designing the SAD accordingly for the onset time instant only of an attack.

\subsubsection{A Discrete Solution Property} \label{subsec:example}

The requirement of intact position data and the design of the SAD algorithm are based on a key observation as follows. If the attacker can compromise the trains' position data, the three equality conditions \eqref{eqn:pow_attack}, \eqref{eqn:BDD_constraint1}, and \eqref{eqn:BDD_constraint2} that the attacker must obey form an underdetermined problem with $3N$ variables and $2N$ equations. Since the other conditions that the attacker needs to follow (i.e., \eqref{eqn:accvg_attack}, \eqref{eqn:regvg_attack}, \eqref{eqn:train_acc_power} to \eqref{eqn:feas_org1e_nse}, and \eqref{eqn:pos_attack}) are inequalities, the attacker's problem of finding stealthy FDI attack vectors most likely has infinitely many solutions that are continuous. However, if the trains' position data is intact, the three equality conditions \eqref{eqn:pow_attack}, \eqref{eqn:BDD_constraint1}, and \eqref{eqn:BDD_constraint2} with $\sv^{\prime}$ replaced by the known $\sv$, will form a determined problem with $2N$ variables and $2N$ equations. 
As a result, the attacker's problem has a finite number of
discrete solutions (which we will prove shortly) and the attacker must choose one of them
that is different from the true measurement vector. In what follows, we first show that with intact position data, there are only a finite number
of discrete solutions that satisfy the BDD-passing conditions. We then describe the SAD algorithm.

The BDD bypass conditions given by \eqref{eqn:BDD_constraint1} and \eqref{eqn:BDD_constraint2} can be 
compactly represented as 
\begin{align}
\Vm \LB \Ym(\sv) +\Gm \RB \vv = \cv, \label{eqn:BDD_bypass_Bez}
\end{align}
where the $(i,j)^{\text{th}}$ elements of the matrices $\Vm \in \mathbb{R}^{N \times N}$ and $\Gm \in \mathbb{R}^{N \times N}$ are given by
\begin{align*}
\Vm(i,j) = \begin{cases} V_i, \text{if} \  i=j \ \text{and} \ i \in \mathcal{N}_{\text{trains}}, \\
1, \text{otherwise}, 
\end{cases}, \qquad \Gm(i,j) = \begin{cases} R_s^{-1}, \text{if} \ i=j \ \text{and}\  i \in \mathcal{N}_{\text{sub}}, \\
0, \text{otherwise},
\end{cases},
\end{align*}
and the $i^{\text{th}}$ element of the vector $\cv \in \mathbb{R}^{N \times 1}$ is given by
\begin{align*}
\cv(i) = \begin{cases} P(i), \ \text{if} \ i \in \mathcal{N}_{\text{trains}}, \\
\frac{V_{\text{NL}}}{R_s}, \ \text{if} \ i \in \mathcal{N}_{\text{sub}}.
\end{cases}
\end{align*}
Equation \eqref{eqn:BDD_bypass_Bez} is a system consisting of $N$ polynomial equations with $N$ variables. 
Such a system of equations is referred to as a \emph{square polynomial} system, and the Bezout's theorem provides an upper bound on the number of solutions for such systems. The Bezout's theorem is as follows.
\begin{theorem}
(Bezout's Theorem) \cite{frank2011}
For a square polynomial system, the bound on the number of complex solutions is at most the product of the degrees of the polynomials.
\end{theorem}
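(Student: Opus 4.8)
The plan is to deduce the upper bound from the exact intersection-theoretic count in complex projective space, of which the stated inequality is the finite-intersection shadow. Write the square system as $f_1,\dots,f_n$ in $n$ variables with $\deg f_i = d_i$. First I would homogenize each $f_i$ with a new variable $x_0$ to obtain a homogeneous $F_i$ of degree $d_i$ in $n+1$ variables, cutting out a hypersurface $H_i \subseteq \mathbb{P}^n(\mathbb{C})$. Every affine complex solution of the original system is a point of $\bigcap_{i=1}^n H_i$ lying in the chart $\{x_0 \neq 0\}$, so it suffices to bound the number of points of $\bigcap_i H_i$. Before counting I would separate the degenerate case in which this intersection is positive-dimensional (for instance when two of the $f_i$ share a nonconstant common factor): there the affine zero set can be infinite, the claim is then understood for the \emph{isolated} solutions only, and a generic small perturbation of the coefficients makes the intersection zero-dimensional while preserving every isolated point, reducing matters to a proper (finite) intersection.

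For the proper case the heart of the argument is the multiplicativity of degree under intersection with a hypersurface. Working in the Chow ring $A^*(\mathbb{P}^n) \cong \mathbb{Z}[h]/(h^{n+1})$, the class of a degree-$d_i$ hypersurface is $d_i h$, so by the intersection product the class of $\bigcap_i H_i$ is $\prod_{i=1}^n (d_i h) = (\prod_i d_i)\, h^n$, and $h^n$ is the class of a single reduced point. Hence the total intersection number, i.e.\ the sum of the local intersection multiplicities over the finitely many points of $\bigcap_i H_i$, equals $\prod_i d_i$. Since each multiplicity is a positive integer, the number of \emph{distinct} complex points of $\bigcap_i H_i$ — and a fortiori the number of distinct affine solutions — is at most $\prod_i d_i$, which is the claim. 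The same count follows algebraically from the Hilbert series $\prod_i (1 - t^{d_i}) / (1 - t)^{n+1}$ of the complete-intersection coordinate ring, from which one reads off the degree $\prod_i d_i$.

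An alternative, more elementary engine avoids intersection theory via elimination. Viewing the $f_i$ as polynomials in one distinguished variable with coefficients in the remaining ones and eliminating variables through resultants produces, in the end, a single univariate polynomial whose roots contain the projections of all solutions; tracking degrees through the elimination bounds this polynomial's degree by $\prod_i d_i$. I expect \emph{this} bookkeeping to be the main obstacle on either route: resultants can vanish identically when leading coefficients degenerate, can manufacture extraneous roots corresponding to points at infinity, and can collapse several genuine solutions onto one projected coordinate, so the naive iterated pairwise resultant overcounts. Handling this cleanly is precisely why the projective formulation — where points at infinity and multiplicities are counted on an equal footing — is preferable, and it is also why the rigorous version rests on the well-definedness of the intersection product and the additivity of local intersection multiplicities. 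As the detection argument here only consumes the inequality as a black box, I would invoke this classical machinery \cite{frank2011} for the multiplicity formalism rather than reprove it.
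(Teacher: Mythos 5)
The paper does not prove this statement at all: Bezout's theorem is quoted verbatim from \cite{frank2011} and consumed purely as a black box to conclude that the BDD-bypass system \eqref{eqn:BDD_bypass_Bez} has finitely many discrete solutions. So there is no paper proof to match yours against, and any actual proof is necessarily a ``different route.'' Your sketch is the standard intersection-theoretic argument and is sound as an outline: homogenize, pass to $\mathbb{P}^n(\mathbb{C})$, and read off the total intersection number $\prod_i d_i$ from the Chow ring $\mathbb{Z}[h]/(h^{n+1})$, so that the number of distinct isolated solutions is bounded by a sum of positive local multiplicities. Two points deserve slightly more care than you give them. First, finiteness of the \emph{affine} solution set does not make the \emph{projective} intersection proper --- excess components can hide at infinity --- so your perturbation/genericity step must be invoked whenever $\bigcap_i H_i$ is positive-dimensional, not only when the affine zero set is infinite; and the precise statement that survives perturbation is that each isolated solution contributes at least one nearby solution of the perturbed proper system, with distinct isolated solutions contributing disjoint clusters, whence the count transfers. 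Second, the theorem as the paper states it (``the number of complex solutions'') is literally false for systems with positive-dimensional solution sets; your restriction to isolated solutions is the standard and necessary reading, and it is in fact the reading the paper needs, since the paper's inference that the bound ``proves'' finiteness of the solution set is itself loose. Given that the paper only uses the inequality as a citation, deferring the multiplicity formalism to \cite{frank2011}, as you do in your final sentence, is entirely appropriate.
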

The existence of the upper bound proves that the system of polynomials in \eqref{eqn:BDD_bypass_Bez}
has a finite number of discrete solutions. For the BDD bypass condition in \eqref{eqn:BDD_bypass_Bez}, we have a polynomial constraint corresponding to each train in the system ($\mathcal{N}_{\text{tra}} \cup \mathcal{N}_{\text{reg}}$), and each one is a second degree polynomial. Thus, the upper bound according to Bezout's theorem would be $2^{|\mathcal{N}_{\text{tra}} \cup \mathcal{N}_{\text{reg}}|}.$ However, in practice, we found that several solutions to the square polynomial system were complex, which we can discard (since the voltages in a dc system cannot be complex).

We now provide a numerical example to illustrate this property. 
In this example, we use the TPS shown in Fig.~\ref{fig:DCTractCkt} with the settings 
listed in Table~\ref{tbl:BFS_parameters} and $P_2 = P_3 = -0.3\,\text{MW}$.
The two curves in Fig.~\ref{fig:Discrete_Sols} correspond to the two equality conditions that $V_2'$ and $V_3'$ need to satisfy to bypass the BDD. Their intersections are the solutions to the attacker's problem of finding stealthy attack vectors. We can see that the solutions are discrete.

\vspace{1em}

\begin{minipage}[!t]{\textwidth}
\begin{minipage}[!t]{0.33\textwidth}
\centering

\setlength\extrarowheight{4pt}
\begin{tabular}{|c|c|}
\hline

	$\text{V}_{\text{NL}}$ & 750 V \\ \hline
	$\gamma$ & 0.03 $\Omega$\ km \\ \hline
	$\text{R}_\text{s}$ & 0.02956 $\Omega$ \\ \hline
	$\text{V}_i^{\max,\text{Tr}}$ & 850 V \\ \hline
	$\text{V}_i^{\max}$ & 900 V \\ \hline

\end{tabular}
\captionof{table}{TPS parameters.}\label{tbl:BFS_parameters}
\end{minipage}
~
\begin{minipage}[!t]{0.65\textwidth}
	\centering
	\includegraphics[width=0.8\textwidth]{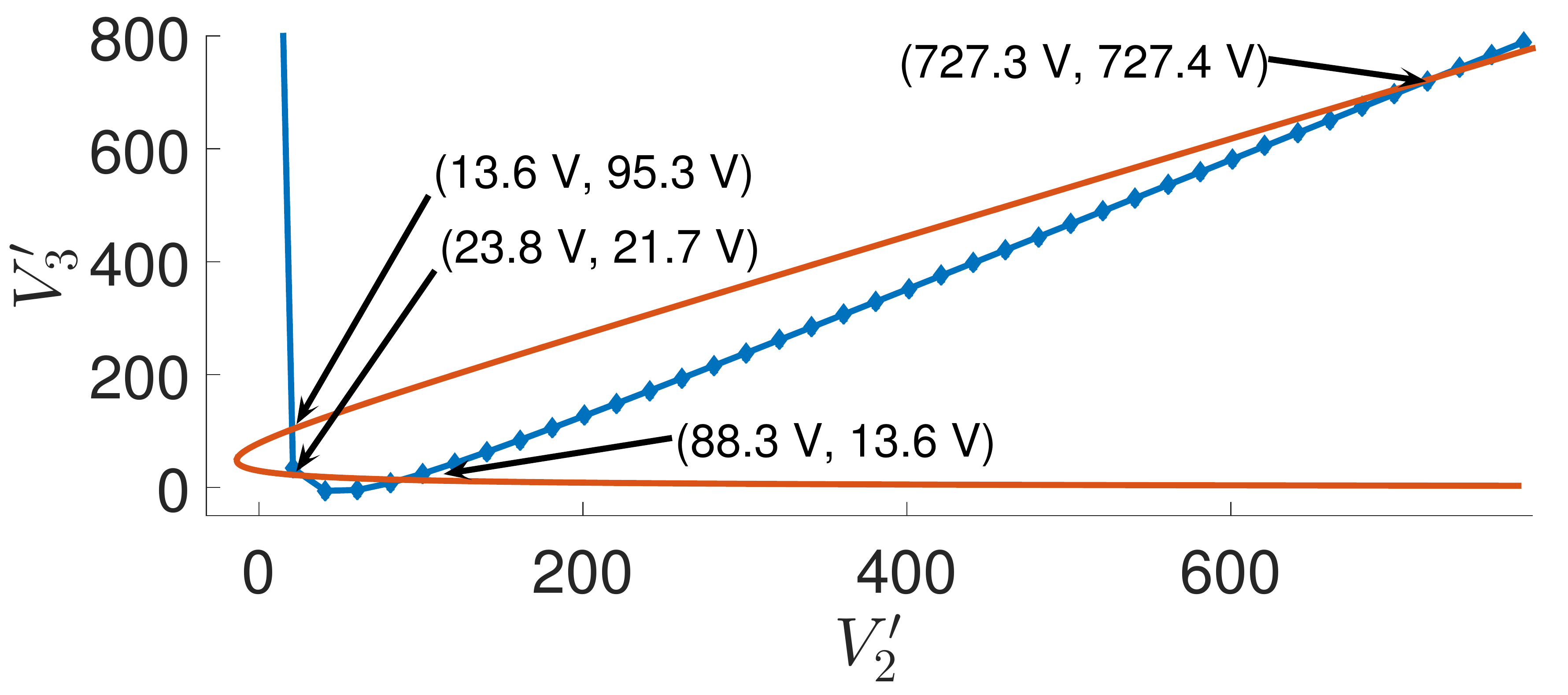}	
	\captionof{figure}{A numerical example illustrating discrete solution property.}
	\label{fig:Discrete_Sols}

\end{minipage}

\end{minipage}

\subsubsection{SAD Algorithm}

Based on the discrete solution property, we design the SAD algorithm as follows.
\begin{algorithm}
\label{alg:AttackMitigation}
\vspace{0.05in}
{\bf Inputs:} Trains' true positions $\sv$, possibly compromised measurement vector $\tilde{\zv}$, intact nodal voltage vector $\vv_{\text{pr}}$ at the previous time instant\\
{\bf Output:} Attack onset detection result
\begin{itemize}
 \item[1.] Using $\tilde{\zv},$
compute $P_i = \tilde{V}_i \tilde{I}_i, \ i \in \mathcal{N}_{\text{trains}}.$

\item[2.] Solve the following constrained optimization problem
\begin{subequations}
\label{eqn:second_check}
\beqa
J^* = &\dsp \min_{ \substack{\vv_{a} , \vv_{b} \\ \vv_{a} \neq \vv_{b}}} & ||\vv_{a} - \vv_{b}||_p \\ 
& s.t. &    \Vm_{a} \LB \Ym(\sv) +\Gm \RB \vv_{a} = \cv, \label{eqn:compact_nodaleq}\\
& & \Vm_{b} \LB \Ym(\sv) +\Gm \RB \vv_{b} = \cv, \label{eqn:compact_nodaleq1}
\eeqa
\end{subequations}
where $||\xv||_p$ represents the $p$-norm of a vector $\xv$ 
\item[3.] Extract $\widetilde{\vv}$ from $\widetilde{\zv}$. If $ || \widetilde{\vv} - {\vv}_{\text{pr}} ||_p \leq  J^*$, report no attack; Otherwise, report onset of attack.
\item[4.] If there only exists two discrete points in the solution set, $J^* = \alpha J^*$
\end{itemize}
\end{algorithm}

In Step 1 of the algorithm, given the possibly compromised measurement vector $\tilde{\zv}$, the TPS monitor computes the actual power absorption or injection of each train. We note that this follows from \eqref{eqn:train_power_org}. Based on the trains' true positions $\sv$ and powers, in Step 2, the TPS monitor solves the constrained optimization problem \eqref{eqn:second_check}. The constraints in \eqref{eqn:compact_nodaleq} and \eqref{eqn:compact_nodaleq1} are compact representations of the BDD bypass condition as explained in \eqref{eqn:BDD_bypass_Bez}, for two distinct solutions $\vv_{a}$ and $\vv_{b}$. By the observation that the BDD bypass condition given the trains' true positions has discrete solutions, $\vv_a$ and $\vv_b$ that solve the optimization problem  \eqref{eqn:second_check} are two distinct solutions that are closest to each other (among all such pairs of solution vectors) and the $J^*$ given by \eqref{eqn:second_check} is the minimum distance.

In Step 3, the TPS monitor compares the $J^*$ with the $p$-norm distance between the possibly compromised voltage measurement vector and the intact nodal voltage vector $\vv_{\text{pr}}$ at the previous time instant, to determine the possible onset of an attack. This step is based on that if the attacker launches a BDD-stealthy attack without tampering with the trains' position information, the $p$-norm distance between the compromised voltage vector and the voltage vector in the absence of attack must be no less than $J^*$. As the voltage vector in the absence of attack is unknown to the TPS monitor, a {\em practical approach} is to use the $\vv_{\text{pr}}$ that is not compromised before the onset of the attack. Since the TPS monitor can run the SAD periodically and frequently (e.g., every second), the TPS state will not change significantly over one monitoring time internal. In Section~\ref{sec:SimRes}, extensive simulations demonstrate the effectiveness of this practical approach by comparing it with an {\em oracle approach} that uses the voltage vector at the present time instant in the absence of attack in Step 3. If and when the onset of an attack is detected, the TPS switches to an attack mitigation mode, as discussed in Section~\ref{sec:mitigation}, to prevent safety breaches.
In step 4, we scale the value of $J^*$ by a parameter $\alpha \in [0,1]$ in case there are only two discrete solutions that satisfy the BDD-passing condition. Step 4 is introduced to 
reduce the MDs of the SAD in the presence of sensor measurement noises (the rationale
behind the introduction of this parameter will be explained in Section~\ref{sec:Noise_Analysis}). 

\subsection{Attack Mitigation}
\label{sec:mitigation}

We outline an approach to mitigating the impact of an attack that has been detected by the TPS monitor by the BDD, PIV, or SAD. On detecting the onset of the attack, the system switches to an {\em attack mitigation mode} in which the TPS monitor issues power absorption/injection commands to the trains to replace their local overcurrent/squeeze controls. Specifically, the TPS monitor computes the $P_i$ for each train based on the trains' power demands, regeneration capacities, and positions by solving the electrical models and trains' local control laws presented in Section~\ref{sec:Tract_Power}. Note that the trains can report their power demands and regeneration capacities to the TPS monitor. The TPS monitor can also estimate them based on trains' running profiles that are often fixed during the planning phase. If FDI attacks on trains' position information have been detected, the system can estimate the trains' positions based on their running profiles. Each train applies the $P_i$ received from the TPS monitor. The core idea of this mitigation approach is to run the TPS temporarily based on models rather than compromised sensor measurements. Emergent running profiles that stop the trains safely should be applied immediately once the system enters the attack mitigation mode.

\section{Impact of Sensor Measurement Noise}
\label{sec:Noise_Analysis}
In this section, we examine the performance of the GAD in the presence of sensor measurement noises. 
In this section, we consider the additive Gaussian noise model described in Section~\ref{sec:SE}. Sensor measurement noise provides the attacker an opportunity 
to hide its attack by masquerading false measurements as legitimate noisy measurements, leading
to MDs and FPs.
MDs may result in the loss of system efficiency or safety breaches. On the other hand, FPs result in the system operator initiating unnecessary mitigation steps that may degrade performance. 
We now formally define FPs and MDs for the BDD, SAD, and GAD respectively.

We consider two hypotheses: $H_0$ denotes that the system is not under attack, and $H_1$ denotes that the system is under attack. We let $Z_{\text{BDD}}$ and $\Xi_{\text{BDD}}$ represent the indicator variables for the occurrences of FP and MD, respectively, in the BDD, and $Z_{\text{SAD}}$ and $\Xi_{\text{SAD}}$ represent the corresponding 
quantities for the SAD.
They can be mathematically stated as
\begin{align*}
Z_{\text{BDD}} & = {1}_{\{ (\tilde{\zv} - \Hm (\tilde{\sv}) \hat{\vv})^T \Sigmam^{-1} (\tilde{\zv} - \Hm (\tilde{\sv})\hat{\vv}) > \tau \ \big{|} H_0  \}}, \qquad Z_{\text{SAD}}  = 1_{\{ || \widetilde{\vv} - {\vv}_{\text{pr}} ||_p >  J^* \ \big{|} H_0 \}} , \\
\Xi_{\text{BDD}} & = 1_{\{ (\tilde{\zv} - \Hm (\tilde{\sv}) \hat{\vv})^T \Sigmam^{-1} (\tilde{\zv} - \Hm (\tilde{\sv})\hat{\vv}) \leq \tau \ \big{|} H_1  \}}, \qquad \Xi_{\text{SAD}} = 1_{\{ || \widetilde{\vv} - {\vv}_{\text{pr}} ||_p \leq  J^* \ \big{|} H_1  \} },
\end{align*}
where ${1}_{\mathcal{A}}$ is an indicator function given by ${1}_{ \mathcal{A} }  = 1 \ \text{if } \ \mathcal{A} \text{ is true},$ or $0 \ \text{otherwise}.$
Similarly, we define $Z_{\text{GAD}}$ and $\Xi_{\text{GAD}}$ for the GAD.
Since the GAD serializes the BDD and SAD, it will raise an alarm if one of the following two events occurs: i) the BDD raises an alarm; or ii) if the measurements pass the BDD but the SAD raises an alarm. Thus $Z_{\text{GAD}}$ can be expressed in terms of 
$Z_{\text{BDD}}$ and $Z_{\text{SAD}}$ as 
\begin{align*}
Z_{\text{GAD}} & = Z_{\text{BDD}} \ \lor ( \neg Z_{\text{BDD}} \land \ Z_{\text{SAD}}).
\end{align*}
Similarly,  $\Xi_{\text{GAD}}$ can be expressed in terms of
$ \Xi_{\text{BDD}}$ and  $\Xi_{\text{BDD}}$ as
\begin{align}
\Xi_{\text{GAD}} & = \Xi_{\text{BDD}} \ \land \ \Xi_{\text{SAD}}.
\end{align}
Next, we use a numeric example to illustrate FPs and MDs in the cases of BDD and SAD, respectively, for a representative TPS network.
\begin{itemize}
\item {\bf BDD FPs and MDs:} The BDD's FPs and MDs are caused by fluctuations 
in the residual value $(\tilde{\zv} - \Hm (\tilde{\sv}) \hat{\vv})^T \Sigmam^{-1} (\tilde{\zv} - \Hm (\tilde{\sv})\hat{\vv}),$ which are in turn caused by the measurement noises. 
This is  illustrated in Fig.~\ref{fig:ResidualNoise} for a TPS model identical to that in Section~\ref{subsec:SAD}, considering $1000$ realizations of measurement noise sampled from an i.i.d. zero-mean Gaussian distribution with a standard deviation set to $0.3\%$ of the full-scale
voltage \cite{sensor_accuracy} and current sensor readings, respectively. (The full-scale voltage and current readings are $900$~V and  $2,500$~A, respectively.) To generate TPS measurements under $H_1,$ we inject an additive attack of $20$ V to the voltage measurement of node $2.$
It can be seen that in the absence of measurement noise, the value of the residual is $0$
under $H_0$, and non-zero under $H_1.$  Therefore, any occurrence of a non-zero residual indicates the presence of an attack in a noiseless environment. However, in the presence of sensor measurement noise, the value of the residual fluctuates under different noise instantiations. Thus, differentiating measurements under attack from those under natural measurement noise becomes challenging.
\item {\bf FPs and MDs of SAD under the Oracle Approach:} 
The SAD's FPs and MDs are due to fluctuations in the value of  $J^*$ and $||\vv-\vv_{\text{pr}}||$ under different realizations of the measurement noise, as illustrated in Fig.~\ref{fig:SADNoise}, for hypothesis $H_0.$ (Recall from Algorithm~1 that in this case, noisy voltage and
current measurements are used as inputs to the SAD algorithm.) Note from Fig.~\ref{fig:SADNoise} that in the absence of measurement noise, both these quantities have a fixed value, in contrast
to the case of noisy measurements. Thus, in the presence of noise, an FP is declared whenever there is no attack on the system and $||\vv-\vv_{\text{pr}}|| \geq J^*,$ and MD is declared whenever the system is under attack and $||\vv-\vv_{\text{pr}}|| < J^*.$
\item {\bf FPs and MDs of SAD under the Practical Approach:} 
Another factor that contributes to the occurrence of FPs and MDs in the practical approach is the following. Since $\vv_{\text{pr}}$ is estimated based on the historical measurements, whenever there is a sudden change in the system state between successive time slots, the difference $||\vv-\vv_{\text{pr}}||$ can become large and result in FPs. In the simulations presented in Section~\ref{sec:SimRes_Noise}, we observe that  when one or more trains in the TPS change status from tractioning mode to breaking mode, there is a large change in the TPS system state. 

\end{itemize}

\begin{figure}[!t]

\begin{minipage}[!t]{\textwidth}
\begin{minipage}[!t]{0.45\textwidth}
\vspace{0pt}
\includegraphics[width=1\textwidth,trim={0 5cm 0 0}]{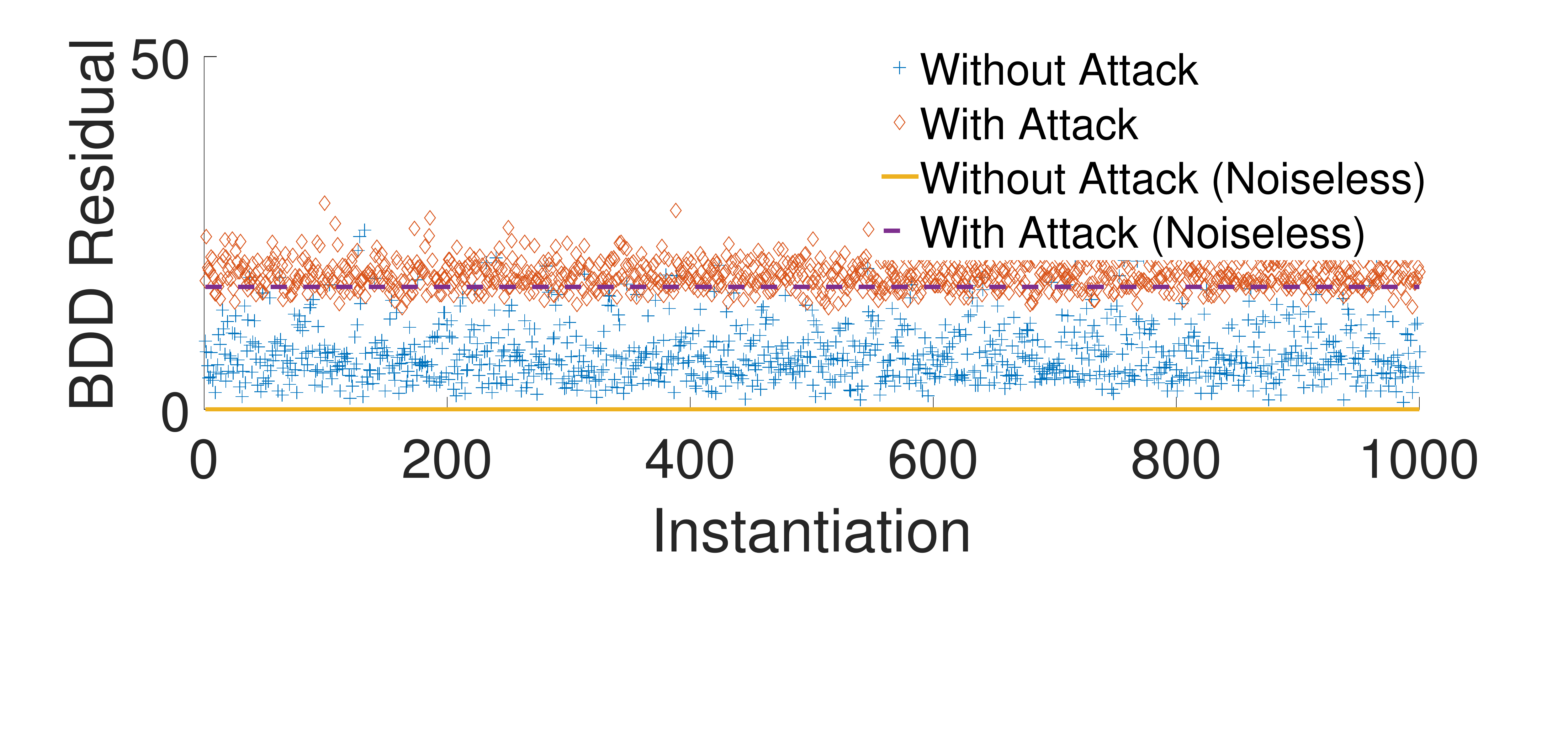}
\captionof{figure}{BDD residual under hypothesis $H_0$ and $H_1$ with and without sensor measurement noise.}
\label{fig:ResidualNoise}
\end{minipage}
\hspace{1em} 
\begin{minipage}[!t]{0.45\textwidth}
\vspace{0pt}
\includegraphics[width=1\textwidth,trim={0 5cm 0 0}]{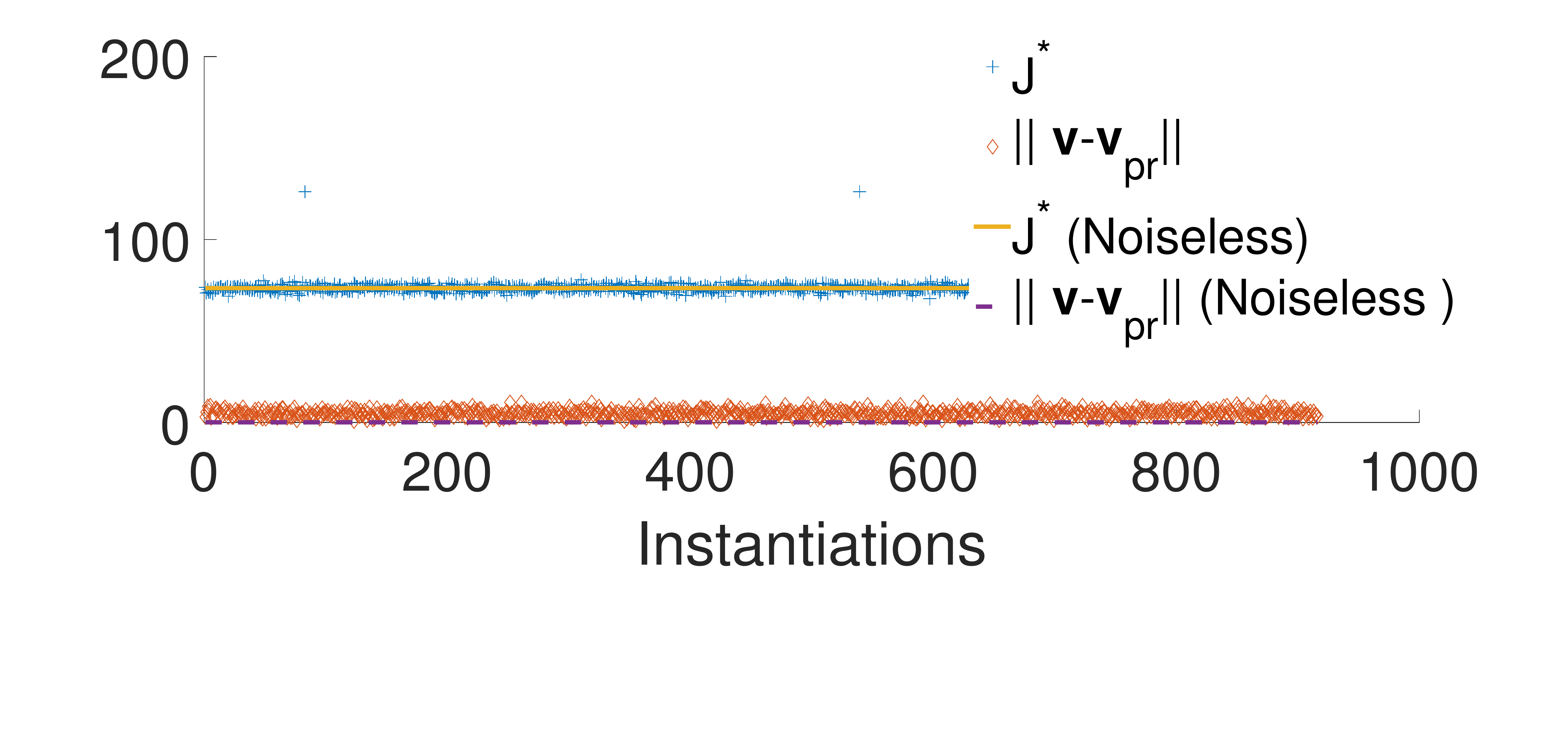}
\captionof{figure}{$J^*$ and $||\vv-\vv_{\text{pr}}||$ under hypothesis $H_0$ with and without sensor measurement noise.}
\label{fig:SADNoise}
\end{minipage}

\end{minipage}
\end{figure}

To assess the performance of the proposed detectors, we 
examine their receiver operating characteristics (ROC)  
curve, obtained by varying the BDD's detection threshold $\tau,$ and the adaptive parameter of the SAD algorithm $\alpha.$ 
Each value of the parameter $\tau$ (and $\alpha$) yields certain FP and MD rates, which
are the x and y-axes of the ROC curve, respectively.
We consider three levels of the measurement noise by varying its standard deviation 
from $0.1\%$ to $0.3\%$ of the full-scale current and voltage sensor readings. 
We consider two different attacks: (i) In Fig.~\ref{fig:ROC_static_FP_MD_BDD_Random} and Fig.~\ref{fig:ROC_static_FP_MD_GAD_Random}, we plot ROC curves for attacks designed without imposing the 
BDD-passing condition (we refer to it subsequently as \emph{random attack}). In particular, we inject an additive attack of $20$ V to the voltage measurement of node $2.$ (ii) In Fig.~\ref{fig:ROC_static_FP_MD_BDD_Bypass} and Fig.~\ref{fig:ROC_static_FP_MD_GAD_Bypass}, we plot the ROC curves for BDD-stealthy attacks.

\begin{figure}[!t]
\centering
\begin{subfigure}{0.48\textwidth}
\includegraphics[width=1\textwidth,trim={0 5cm 0 0}]{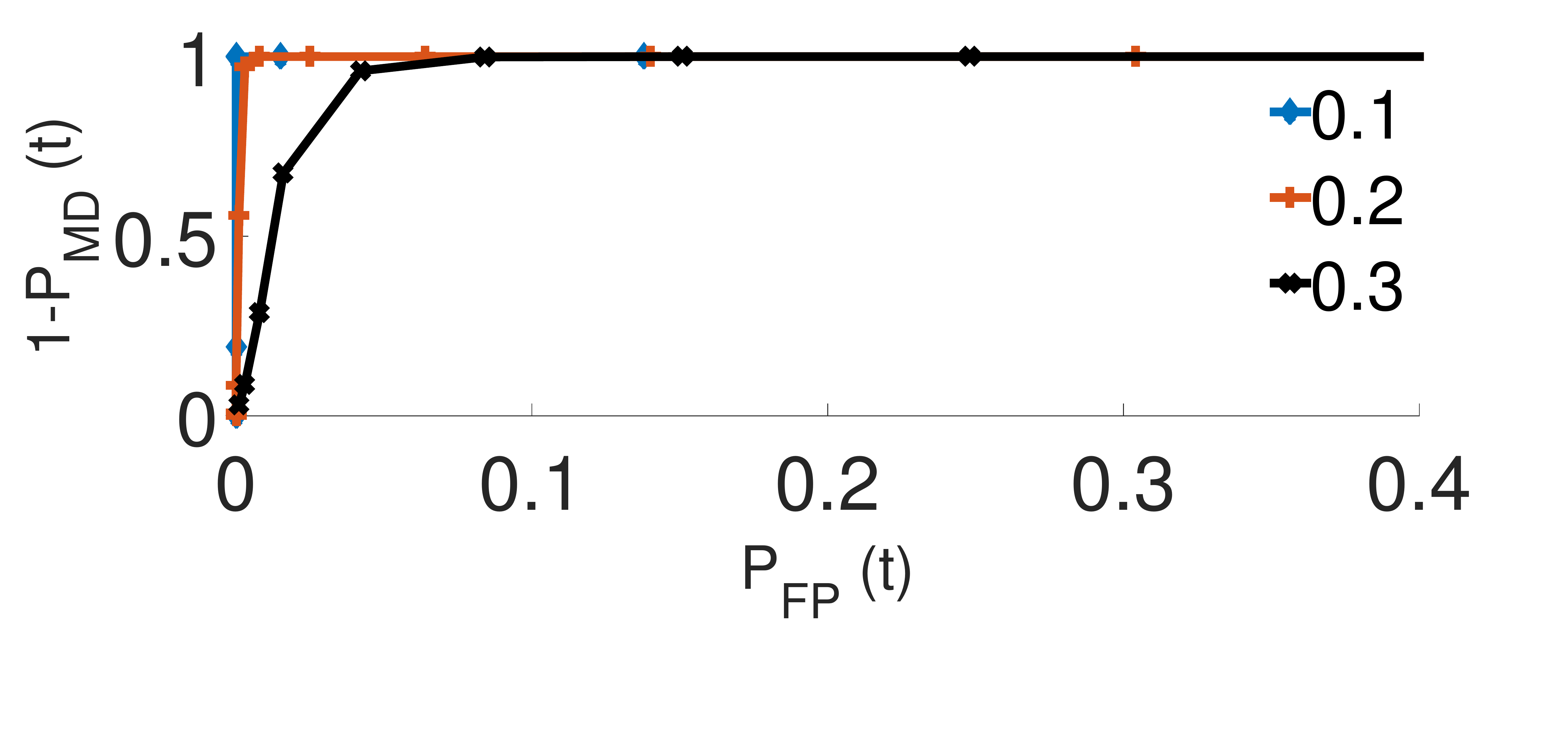}
\caption{}
\label{fig:ROC_static_FP_MD_BDD_Random}
\end{subfigure}
~
\begin{subfigure}{0.48\textwidth}
\includegraphics[width=1\textwidth,trim={0 5cm 0 0}]{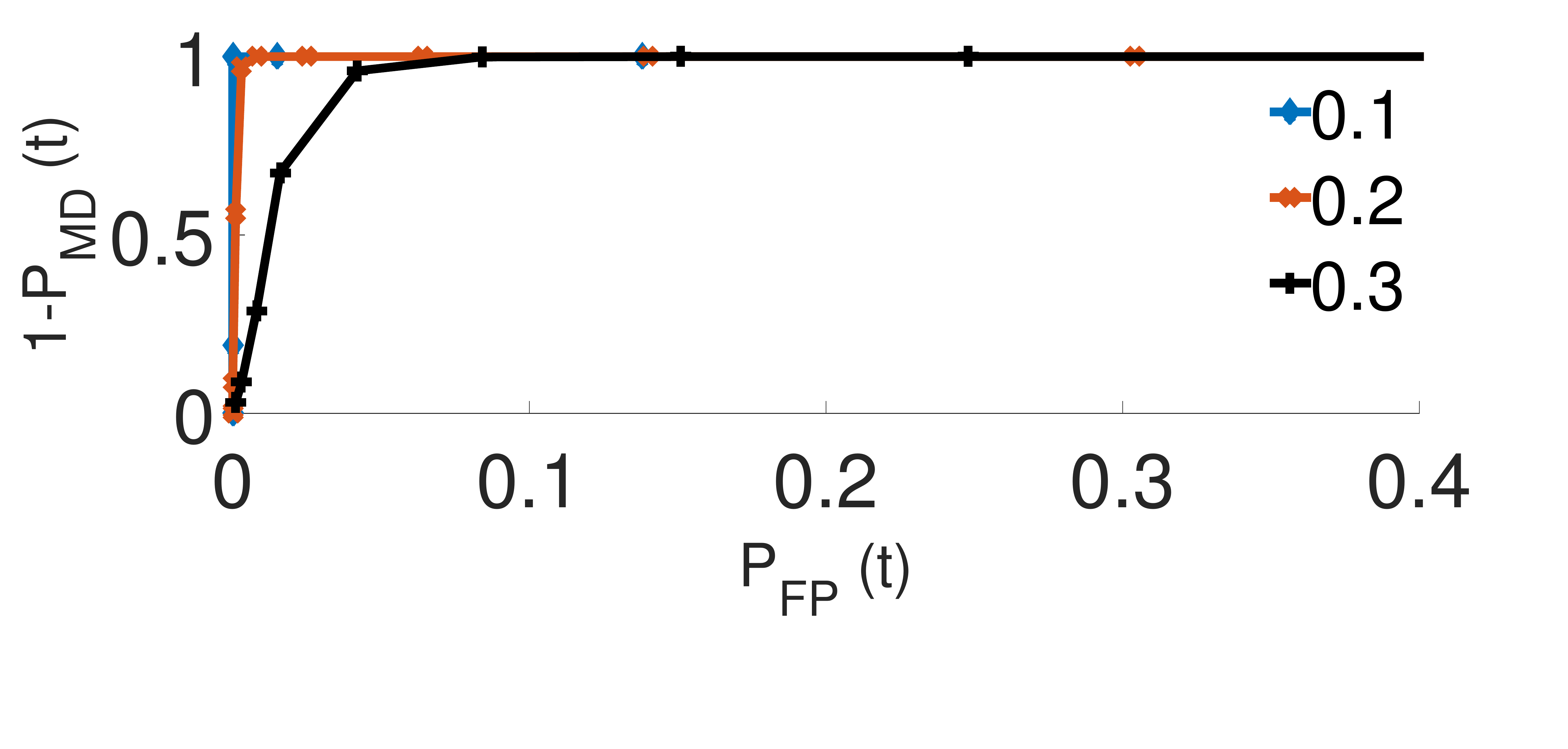}
\caption{}
\label{fig:ROC_static_FP_MD_GAD_Random}
\end{subfigure}
~
\begin{subfigure}{0.48\textwidth}
\centering
\includegraphics[width=1\textwidth,trim={0 5cm 0 0}]{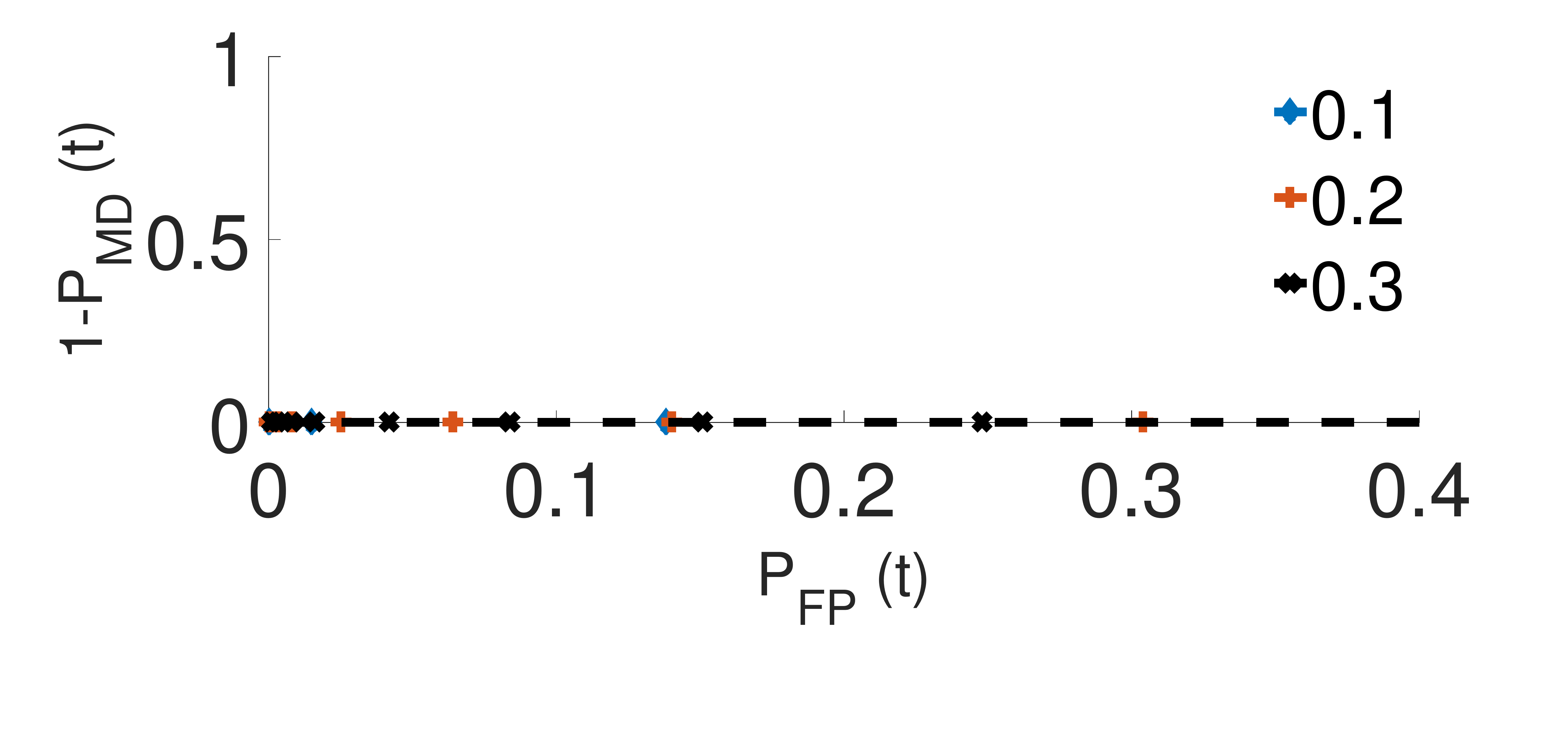}
\caption{}
\label{fig:ROC_static_FP_MD_BDD_Bypass}
\end{subfigure}
~
\begin{subfigure}{0.48\textwidth}
\centering
\includegraphics[width=1\textwidth,trim={0 5cm 0 0}]{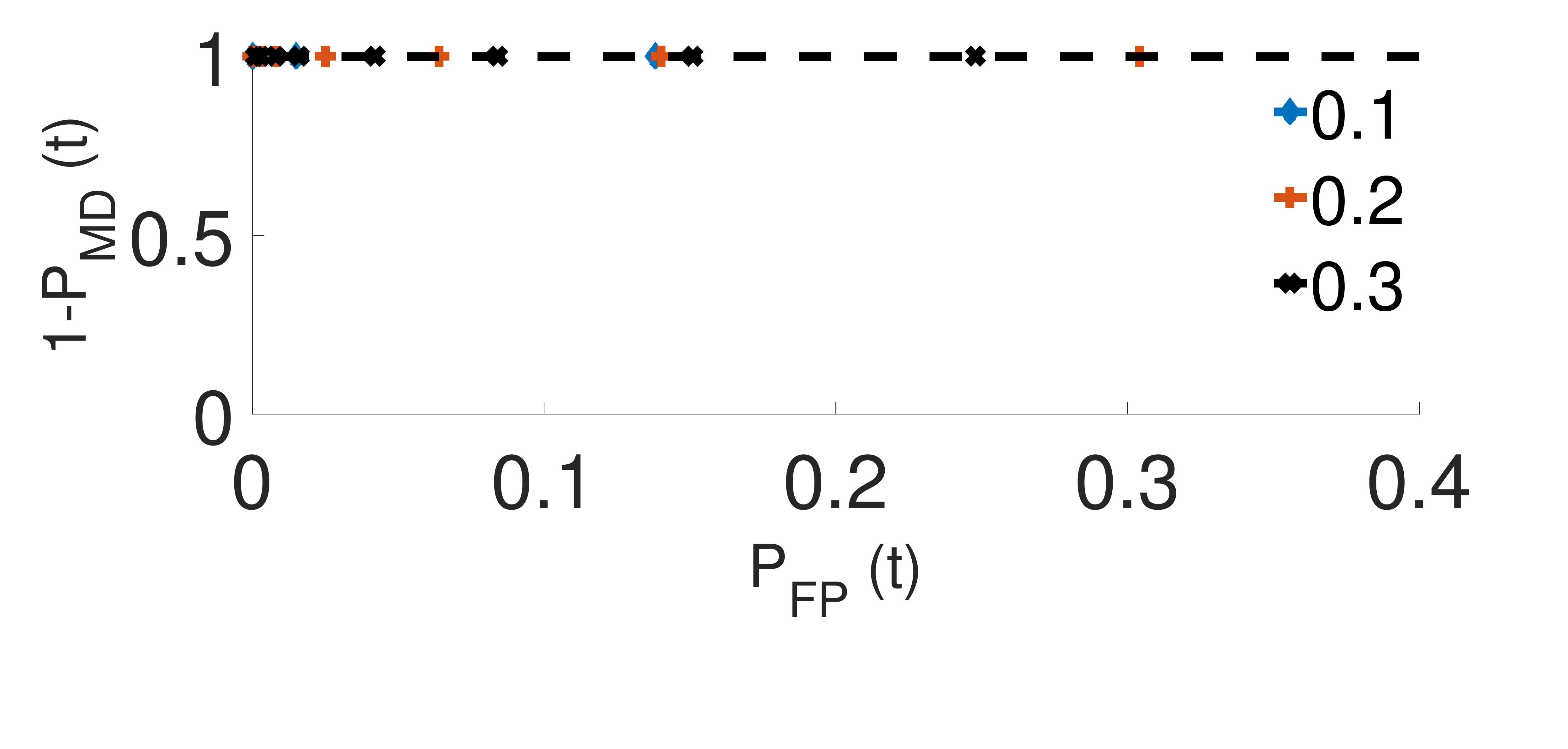}
\caption{}
\label{fig:ROC_static_FP_MD_GAD_Bypass}
\end{subfigure}
\caption{(a) ROC curve of the BDD with random attack. (b) ROC curve of the GAD with random attack. (c) ROC curve of the BDD with BDD-stealthy attack. (d)  ROC curve of the GAD with BDD-stealthy attack.}
\label{fig:ROC}
\end{figure}

\begin{figure}[!t]
\centering
\begin{subfigure}{0.48\textwidth}
\centering
\includegraphics[width=1\textwidth,trim={0 7cm 0 0}]{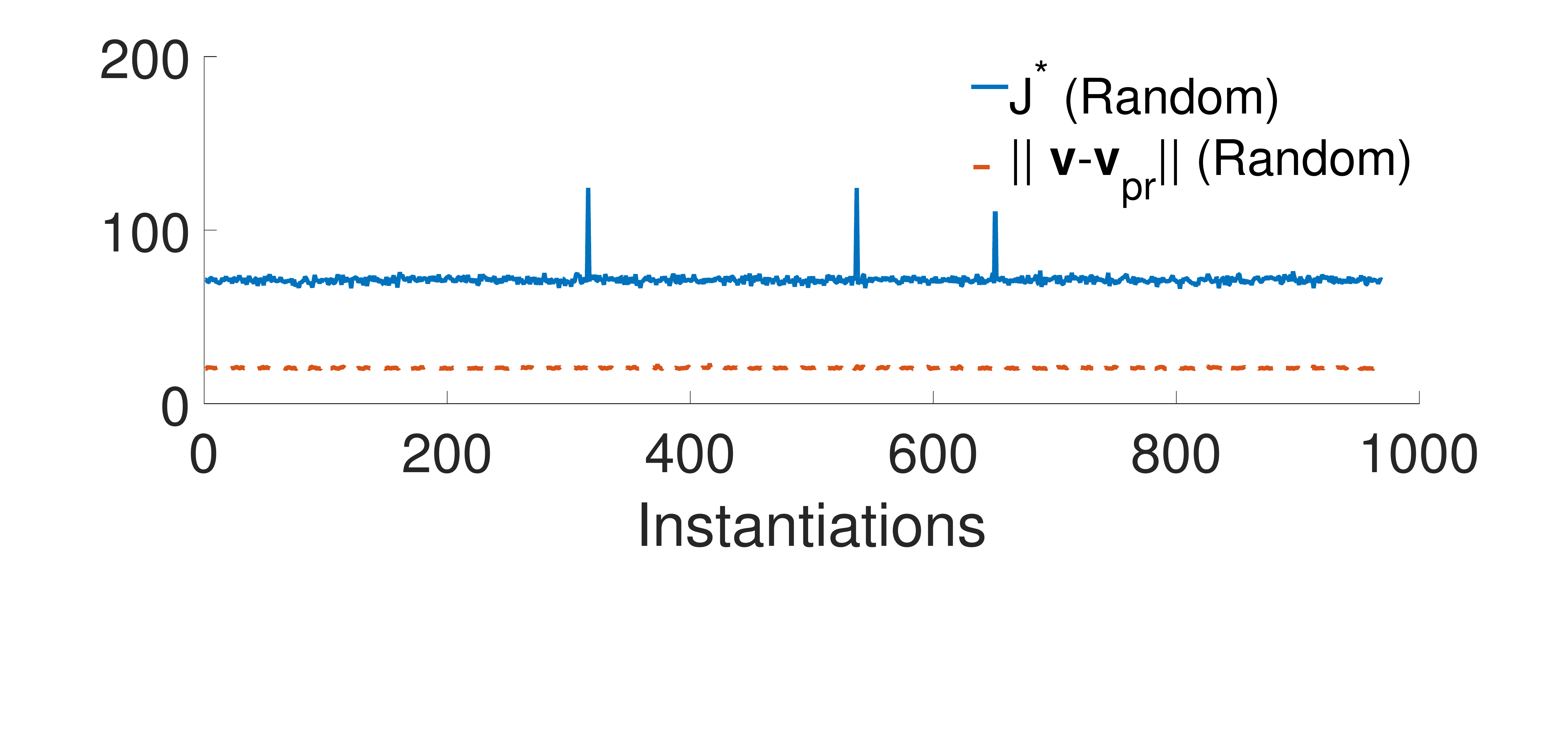}
\end{subfigure}
~
\begin{subfigure}{0.48\textwidth}
\centering
\includegraphics[width=1\textwidth,trim={0 7cm 0 0}]{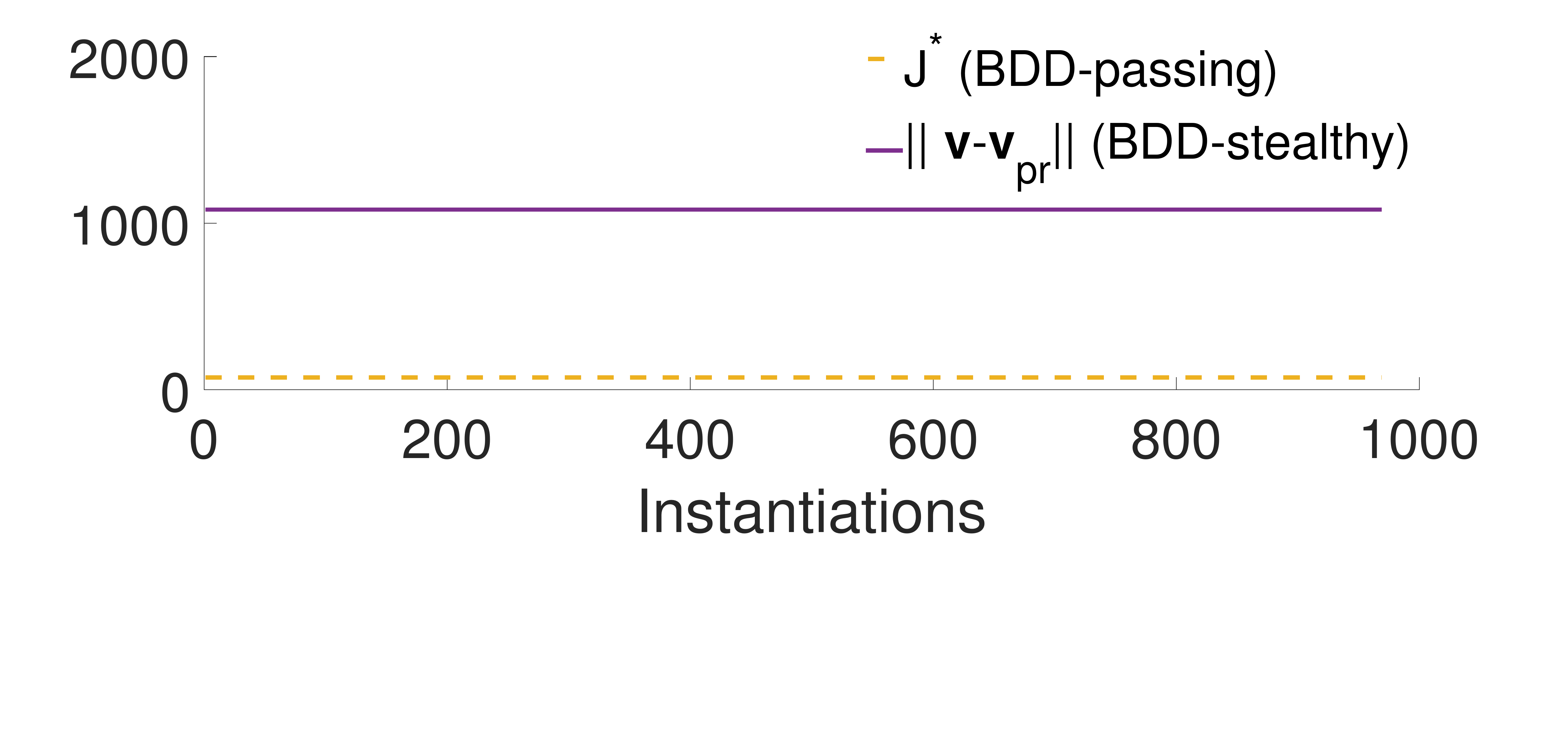}
\end{subfigure}
\caption{Analysis of SAD under random and BDD-stealthy attacks.}
\label{fig:static_MD_SAD_Random}
\end{figure}

The ROC curves under the two attacks exhibit different characteristics, which can be explained 
as follows. As evident from Fig.~\ref{fig:ROC_static_FP_MD_BDD_Random} and Fig.~\ref{fig:ROC_static_FP_MD_BDD_Bypass}, the BDD is effective 
in detecting random attacks but ineffective in detecting the BDD-passing attacks (in fact, the detection rate of BDD-stealthy attacks is $0$). This behavior can be explained by
the nature of BDD's design. Further, by comparing Fig.~\ref{fig:ROC_static_FP_MD_BDD_Random} and Fig.~\ref{fig:ROC_static_FP_MD_GAD_Random}, we can conclude that the SAD only marginally improves the
detection rate of random attacks compared with the stand-alone BDD detector.
However, when we compare Fig.~\ref{fig:ROC_static_FP_MD_BDD_Bypass} and Fig.~\ref{fig:ROC_static_FP_MD_GAD_Bypass}, we observe that for the BDD-stealthy attacks, the 
presence of the SAD significantly improves the detection rate. Specifically, the GAD detection rate is $1$ (no MDs).
This shows the effectiveness of the SAD in detecting BDD-passing attacks.

To understand the performance of SAD in the two cases of random attack and BDD-passing attack, we plot
the values of $J^*$ and $||\vv-\vv_{\text{pr}}||$ in Fig.~\ref{fig:static_MD_SAD_Random}. It can be seen that for random attacks that have been missed by the BDD, the value of $||\vv-\vv_{\text{pr}}||$ is consistently lower
than $J^*$ for all the noise instantiations, which results in MDs. In contrast, for BDD-stealthy attacks, the value of  $||\vv-\vv_{\text{pr}}||$ is greater than $J^*.$ 
This is because in the case of random attack, the attacker only manipulates the measurements from a few sensors (and thus the difference $||\vv-\vv_{\text{pr}}||$ is not high).
However, for the BDD-stealthy attacks, the attacker must manipulate the system measurements in a coordinated manner. In particular, for the system considered in the above simulations, we observe
that the attacker must manipulate the current and voltage measurements of all the nodes. Consequently,
the difference $||\vv-\vv_{\text{pr}}||$ is high.

The FP and MD rates in the above examples are illustrated for a fixed  TPS topology and parameters.
The above discussions give basic understanding on the impact of random measurement noises on the performance of the attack detectors.
However, as the trains change their positions and the status of motion, the TPS parameters change and consequently the FP and MD rates may vary. For instance, the practical GAD detector can have a high FP rate when one or more trains in the TPS changes its status of motion. In order to ensure that the proposed detectors have acceptable performance in these scenarios, in Section~\ref{sec:SimRes_Noise}, we present an adaptation mechanism for the GAD detector, which we call GAD with attack detection window (GAD-W). Extensive simulation results show that the 
GAD-W detector yields consistently low FP and MD rates for the varying TPS configurations. 

We end this section by explaining the introduction of the scaling parameter $\alpha$ in Step~4 of Algorithm~1. Note that in the absence of measurement noise, for a TPS system under attack, the value of $J^*$ is equal to $||\vv-\vv_{\text{pr}}||$ whenever there are only two discrete solutions that satisfy the BDD-passing condition (since the attacker must choose the solution that is different from the true measurements as his attack vector). In such a scenario, the fluctuations in the value of $J^*$ due to sensor measurement noises can often drive its value to greater than $||\vv-\vv_{\text{pr}}||$, leading to a high MD rate. In order to avoid this, we scale down the value of $J^*$ by a parameter $\alpha \in [0,1].$ We note that this problem is unique to the case when there are only two discrete solutions that satisfy the BDD-passing condition, and hence no scaling of $J^*$ is needed in the other cases.

\section{Simulations}
\label{sec:SimRes}

Our analyses in the previous sections address a particular time instant only. In this section, we conduct time-domain simulations with realistic running profiles of trains to illustrate the impact of FDI attacks. We also show the effectiveness of the
BDD in reducing the impact of the attacks, and that of the SAD in detecting those attacks that are BDD-stealthy.

\subsection{Simulation Settings and Methodology}
\label{sec:sim-method}

\begin{figure}[!t]
	\centering
	\begin{subfigure}{0.48\textwidth}
	\includegraphics[width=1\textwidth]{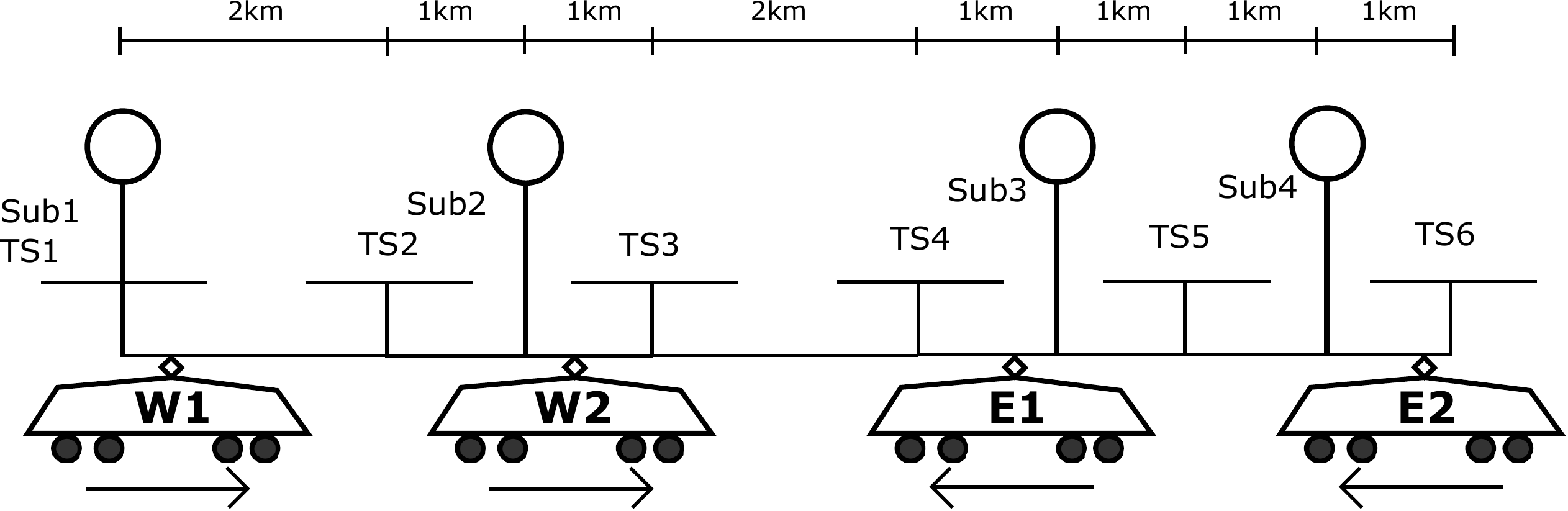}
	\caption{}
	\label{fig:setup}
\end{subfigure}
~
\begin{subfigure}{0.48\textwidth}
\centering
\includegraphics[width=1\textwidth,trim={0 0 0 0}]{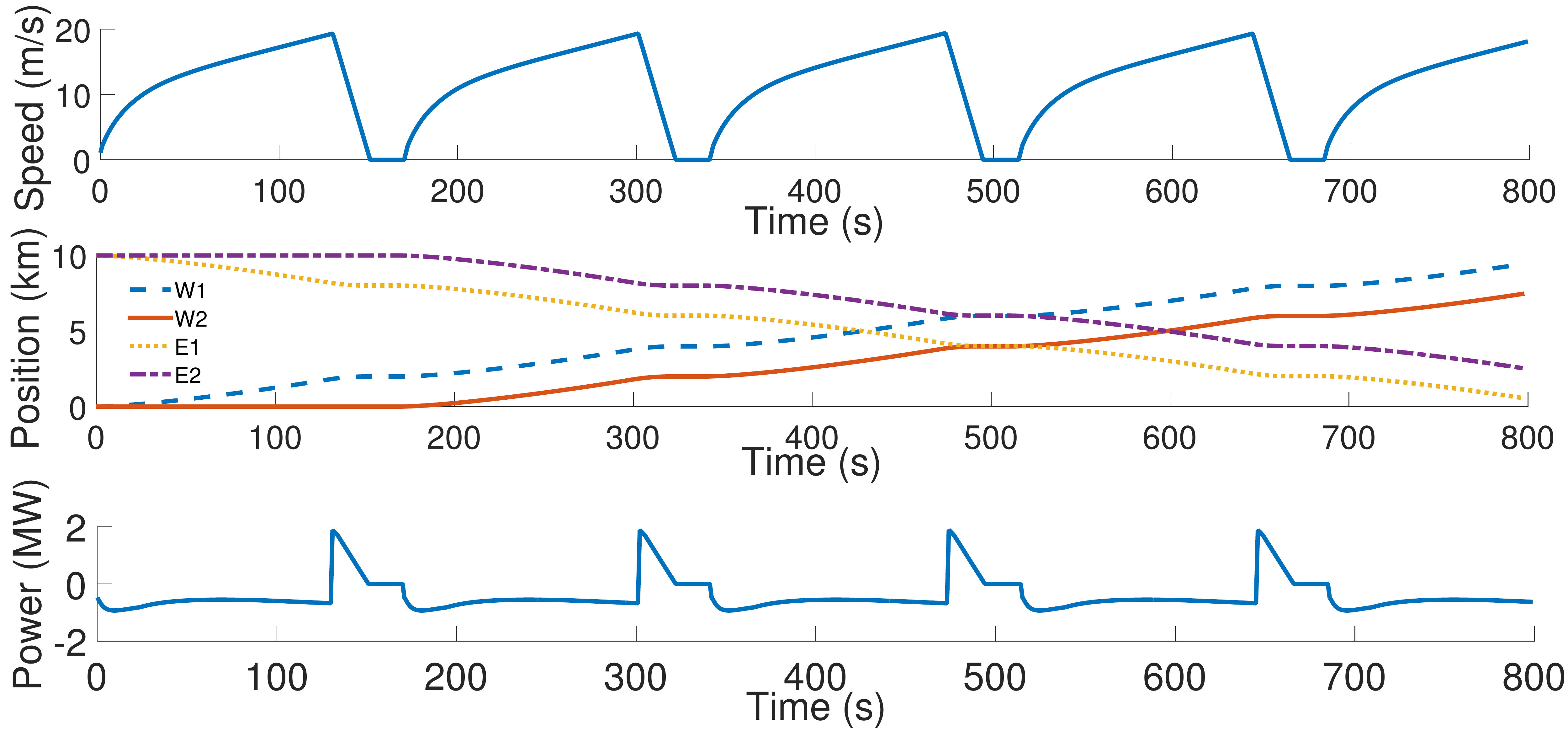}
\caption{}
\label{fig:pos_profile} \label{fig:sp_profile}
\end{subfigure}
\caption{(a) System set-up for simulations. Sub - Substations, TS - Train stations, W - Trains departing from the west, E - Trains departing from the east. (b) Train speed (top plot), position (middle plot), power demand, and regeneration capacity (bottom plot) over time. Power demand is negative and regeneration capacity is positive.}
\end{figure}

As Fig.~\ref{fig:setup} illustrates, we simulate a TPS consisting of
four trains (labeled W1, W2, E1, and E2), four substations (labeled Sub1 to Sub4), and six train
stations (labeled TS1 to TS6).
The parameters of the TPS are identical to those in Table~\ref{tbl:BFS_parameters}.
The positions of the substations and the train stations are shown in Fig.~\ref{fig:setup}.  
The trains W1 and W2 start their journeys from TS1 and
travel from west to east, whereas the trains E1 and E2 start their journeys from TS6 and
travel from east to west. The trains W1 and E1 depart at time zero and the trains W2 and E2 depart at the 170th second.
At each of the train stations, the trains stop for a duration of 20 seconds. 
Each train follows the same speed profile as shown in the top part of Fig.~\ref{fig:sp_profile}. The second plot of Fig.~\ref{fig:sp_profile} shows the trains' positions over time. 
Each train switches between traction and braking modes during the simulation, and its power demand and regeneration capacity over time are shown in the bottom plot of Fig.~\ref{fig:sp_profile}. This plot is derived based on mechanical energy consumption of the train under the specified running profile, and with an efficiency ratio of 70\% for the traction mode \cite{SuTangRoberts2015} and 40\% for the braking mode \cite{acikbas2007parameters}
of converting kinetic energy into electrical energy.
We simulate the TPS for 800 seconds at a time granularity of one second. 

To simulate attacks, the attacker injects an attack vector computed using the methods given in Sections \ref{sec:Cyberattacks} and \ref{sec:BDD} every second. In the absence of BDD, the attacker compromises the voltage and current measurements of all the train nodes. 
In the presence of BDD, the attacker tampers with the voltage and current measurements of all the train and substations nodes as well as the position information of the train nodes. The position information of substations cannot be compromised since their locations are fixed and known {\em a priori}. The maximum errors that the attacker can introduce to the voltage, current, and position measurements, as described in \eqref{eqn:feas_org1d_nse}, \eqref{eqn:feas_org1e_nse}, and \eqref{eqn:pos_attack}, are set as $\Delta V_i = 50\,\text{V}$, $\Delta I_i = 200\,\text{A}$, for $i \in \mathcal{N}_a$, unless otherwise specified. 
The choice of these parameters is made taking into account two practical considerations: (i) the measurement noise level (whose standard deviation is considered to be $\approx 0.5 \%$  of the full-scale voltage and current sensor readings \cite{sensor_accuracy}) (ii) the change in voltage and current measurements of the TPS between any two successive simulation instants (which be observed to be in the considered range based on extensive simulations). Note that if the variation of voltage and current is within this range, they pass the data-quality checks.

The simulations are carried out in MATLAB. The constrained optimization problems are solved using the \texttt{fmincon} function of MATLAB with the \texttt{MultiStart} algorithm. In the absence of attack, to compute the system state, we use the \texttt{fmincon} with a constant objective function and the electrical models and trains' local control laws presented in Section~\ref{sec:Tract_Power} as the constraints. We also use the function to compute the safety attack vectors under the heuristic approach and the optimal efficiency attack vectors.
If at any time instant, the \texttt{fmincon} function
returns an attack vector that is the same as the true system state, the attacker does not launch an attack, since the attack will not have
any impact.
Step 2 of the SAD algorithm is also implemented using the \texttt{fmincon} function.

Although our analysis in this paper is general and applicable to a TPS network of arbitrary size and topology, for simulations we consider a small-scale TPS in Fig.~\ref{fig:setup}. The rationale is two fold. First, the attacker may find it difficult to coordinate his attacks on a large number of geographically distributed trains. 
Computing resources may present another barrier for large-scale attacks. 
A more credible scenario is for the attacker to focus on one or a few trains in a TPS section. Second, since real-world TPS networks are mostly radial \cite{AbrahamssonThesis2012}, the impact of a focused and localized attack will not propagate over long distances. In view of these factors, we use the small-scale TPS to represent well a TPS section in a large system.

Moreover, to simplify our simulations, we do not consider overcurrent control. Specifically, we set the triggering threshold $V^{\text{Tr}}_{i,\min}$ to a low value, so that overcurrent control will not be activated.
As a result, the trains' speed profiles will not change because the trains need not curtail their power consumption.
At any time instant, therefore, a train's power consumption
is equal to its power demand during acceleration.
Because of this simplification, we do not simulate attacks on tractioning trains, which would alter the tractioning trains' power consumption and change their running profiles. Although we can simulate overcurrent control and attacks on tractioning trains by extending our simulator to admit changeable running profiles, the simulations reported in this paper already provide interesting understanding and insights into the impact of attacks and the effectiveness of countermeasures.

\subsection{Simulation Results}

\subsubsection{Efficiency Attacks}
The first set of simulations evaluates the impact of efficiency attacks on the TPS without BDD. Fig.~\ref{fig:dynamic_efficiency_noSE} shows the power absorbed/injected by the train E1 in the presence and absence of attacks. We can see that the efficiency attacks cause the regeneration trains to inject less power into the power network (please see the encircled regions, e.g., from 302th to 315th second for the train E1).
To calculate the loss in system efficiency, we ignore the time instants when all the trains are in traction mode, since we do not simulate attacks on the tractioning trains as discussed in Section~\ref{sec:sim-method}. As a result, the efficiency attacks cause a reduction of $28.3\%$ in the total energy adsorbed by the substations compared with the case of no attacks, during the time periods when there is at least one regenerating train under attack.

The second set of simulations evaluates the impact of efficiency attacks on the TPS with BDD. Similar to Fig.~\ref{fig:dynamic_efficiency_noSE}, Fig.~\ref{fig:dynamic_efficiency_SE} shows the power absorbed/injected in the absence and presence of attacks. It can be seen that in Fig.~10~(b), the curve for the power absorbed/injected by trains in the presence of attacks follows that for the absence of attacks more closely, in comparison to the respective curves in Fig.~10~(a). (Please see the encircled parts of the two figures.) Thus, although the efficiency attack can still induce the regenerating trains to inject less power to the power network, it causes a reduction of $6.2\%$ only in the total energy adsorbed by the substations, during the time periods when there is at least one regenerating train under attack. This is in contrast to the $28.3\%$ for the TPS without BDD.

We also examine the effect of efficiency attacks on the TPS with BDD under different settings of $\Delta s_i$ and $\Delta V_i$
in Fig.~\ref{fig:delta_s} and Fig.~\ref{fig:delta_v}, respectively. 
From these figures, we can see that at smaller settings of $\Delta s_i$ and $\Delta V_i$, the efficiency loss caused by the FDI attack diminishes. For instance, the efficiency loss is as low as $1.37 \%$ when $\Delta s_i = 0.1\,\text{km}$.
In practice, the TPS monitor can estimate the present train position based on the train's speed and its position at the previous time instant when it was known that there were no attacks. The present position reading can be compared with the estimated position using \eqref{eqn:pos_attack}. The setting of $\Delta s_i$ should consider natural errors of train positioning systems and the estimation error.
Existing train positioning systems such as GPS and Balise can achieve an accuracy of five to ten meters \cite{GPS2012}, \cite{balise}.
Thus, it is reasonable to assume that the combined
effect of the train positioning system error and the estimation error is less than $0.1\,\text{km}$.
Our results show that by properly tuning the BDD's attack detection parameters (e.g., $\Delta s_i$ and $\Delta V_i$), the efficiency loss caused by FDI attacks can be significantly reduced. 

\begin{figure}[!t]
\begin{subfigure}{0.48\textwidth}
\centerline{\includegraphics[trim = 0mm 55mm 0mm 10mm,width=6.5cm,height=3cm,keepaspectratio]{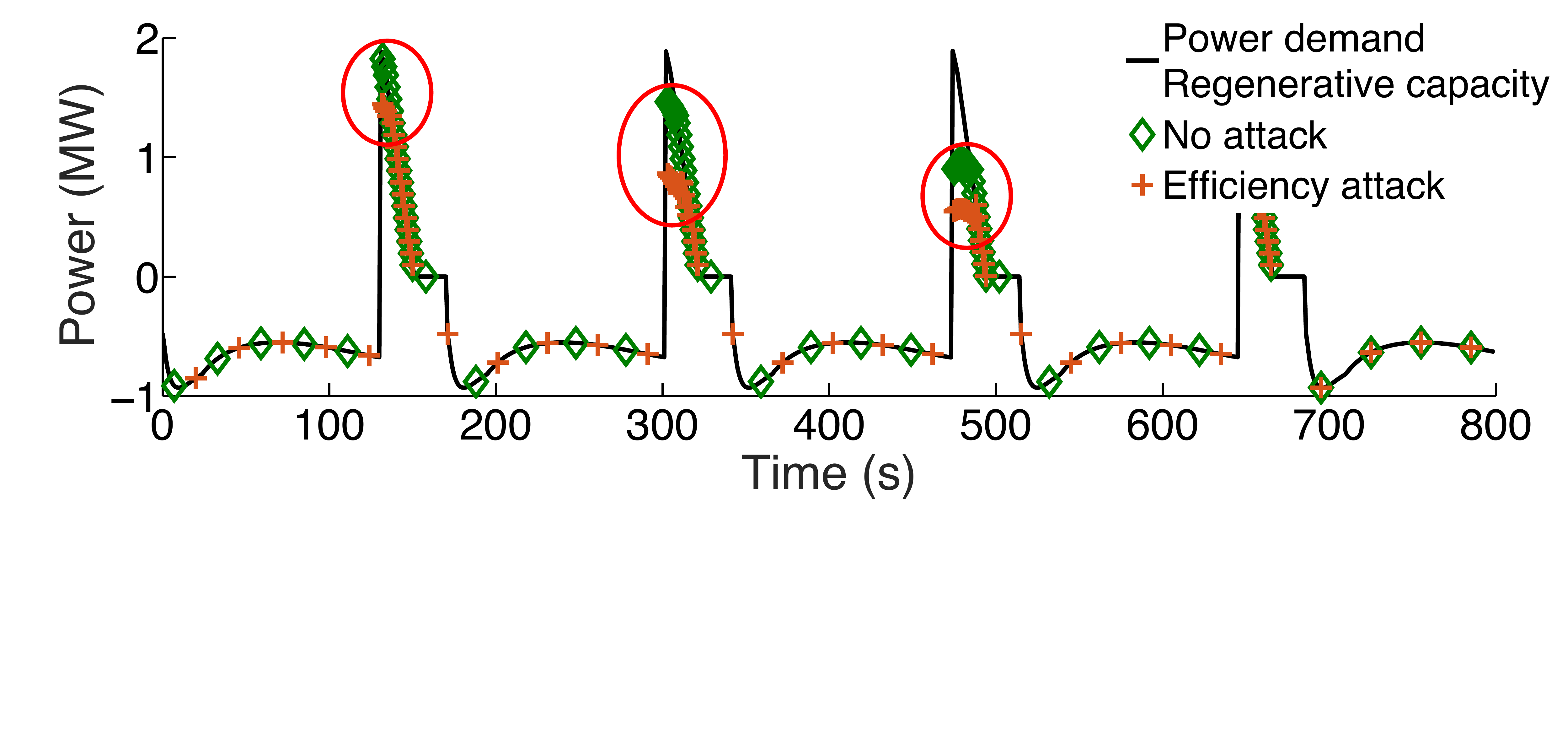}}
\caption{Effect of efficiency attacks on Train E1 in the absence of BDD.}
\label{fig:dynamic_efficiency_noSE}
\end{subfigure}
~
\begin{subfigure}{0.48\textwidth}
\centerline{\includegraphics[trim = 0mm 55mm 0mm 10mm,width=6.5cm,height=3cm,keepaspectratio]{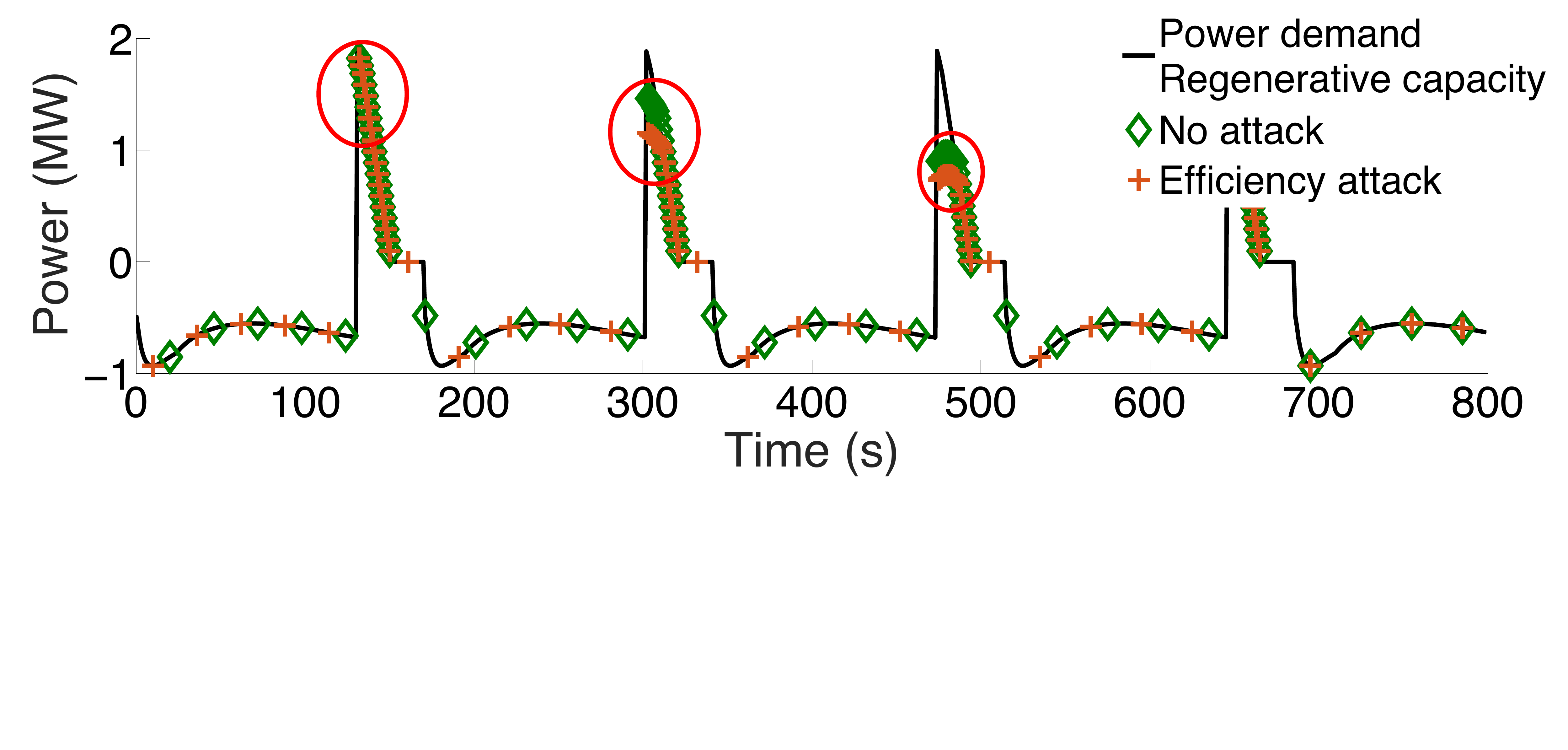}}
\caption{Effect of efficiency attacks on Train E1 in the presence of BDD.}
\label{fig:dynamic_efficiency_SE}
\end{subfigure}
\caption{Effect of efficiency attacks on Train E1. Circled regions highlight the time slots where the two curves (with and without attack) diverge. Note that the curve with attack follows the curve without attack more closely in the presence of BDD.}
\vspace{-1em}
\end{figure}

\begin{figure}[!t]
\centering
\begin{subfigure}{0.4\textwidth}
\centering
\includegraphics[width=0.8\textwidth,trim={0 1cm 4cm 0cm}]{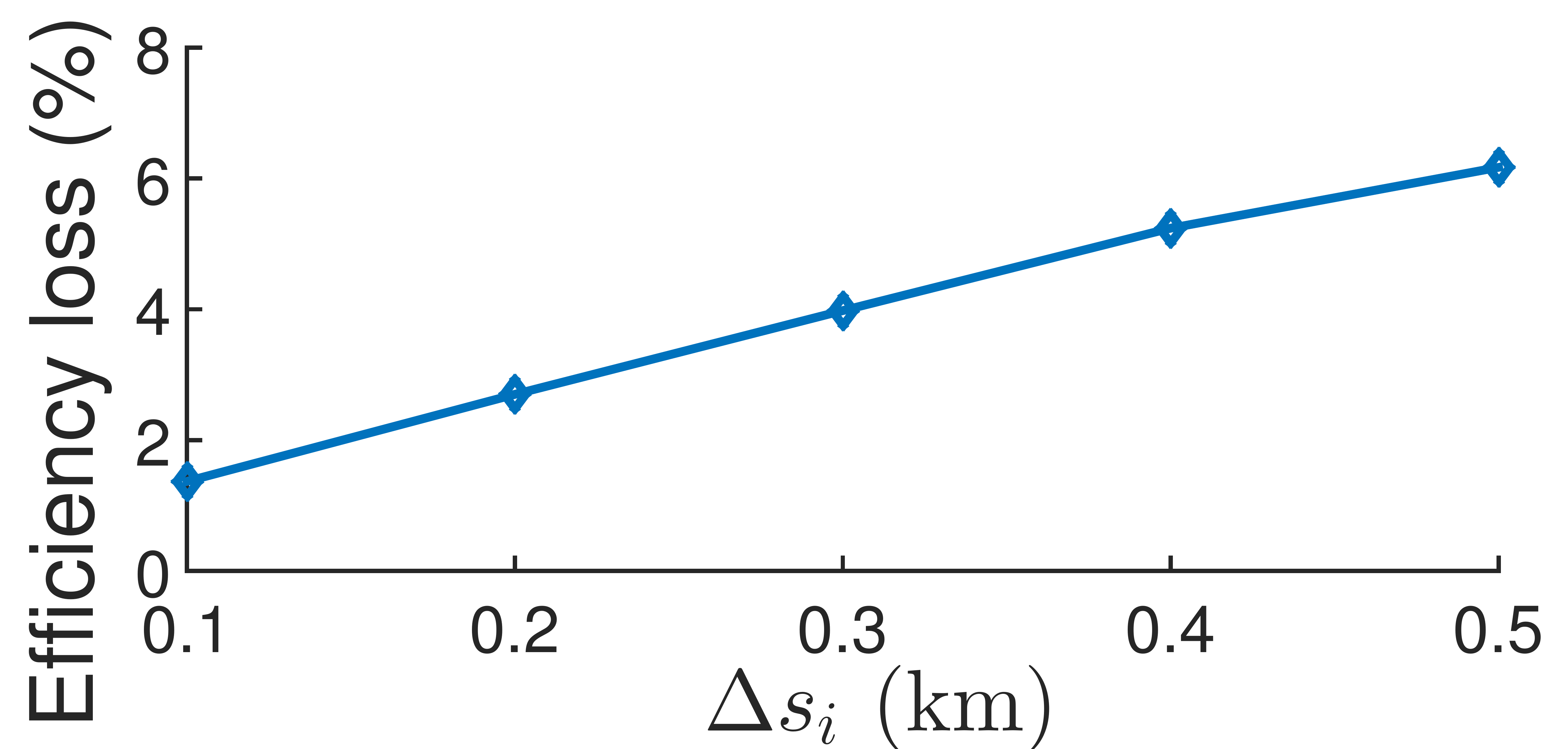}
\caption{Impact of $\Delta s_i$ ($\Delta V_i = 50\,\text{V}$, $\Delta I_i = 200\,\text{A}$)}
\label{fig:delta_s} 
\end{subfigure}
\hspace{1cm}
\begin{subfigure}{0.4\textwidth}
\centering
\includegraphics[width=0.8\textwidth,trim={0 1cm 4cm 0cm}]{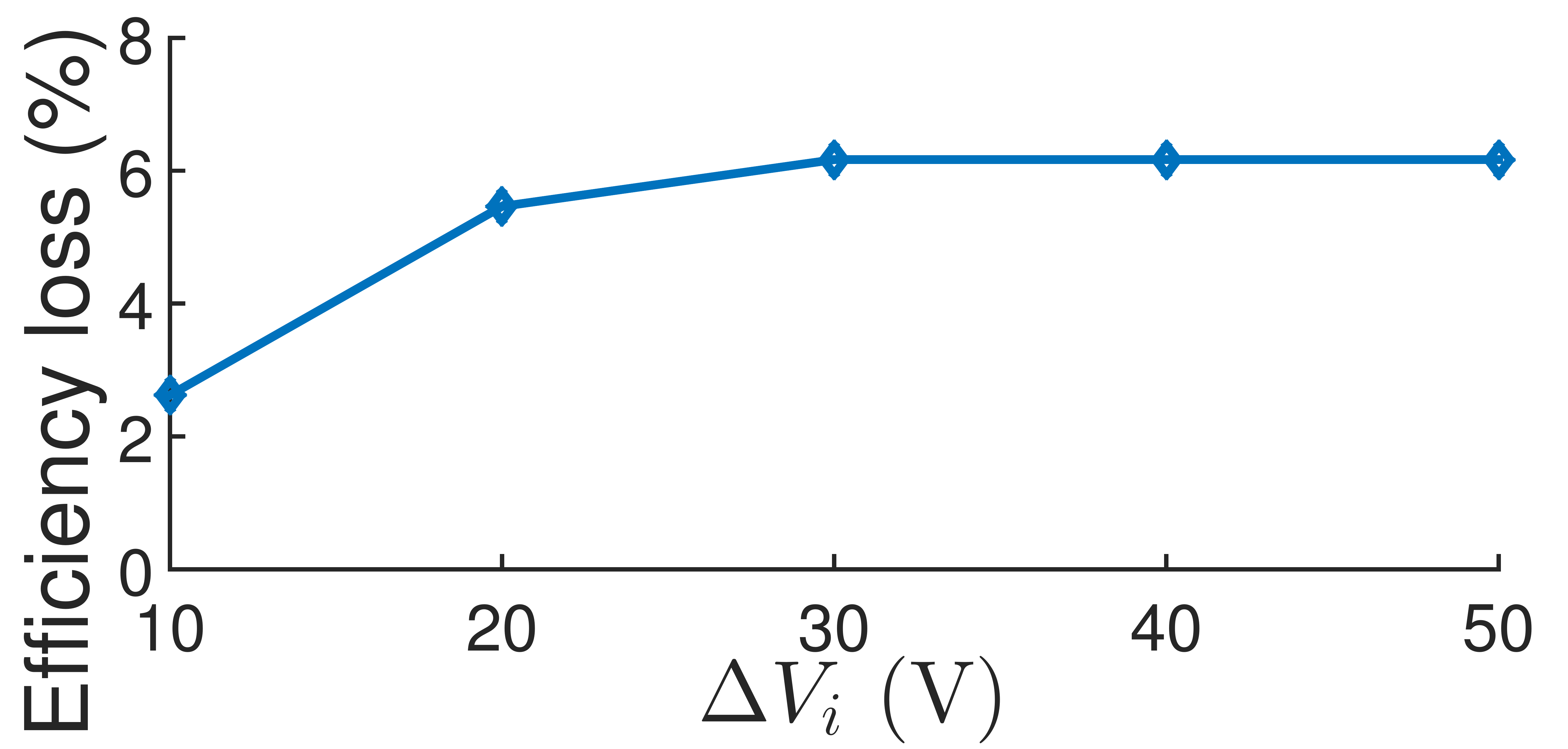}
\caption{Impact of $\Delta V_i$ ($\Delta s_i = 0.5\,\text{km}$, $\Delta I_i = 200\,\text{A}$)}
\label{fig:delta_v} 
\end{subfigure}
\caption{Effect of efficiency attacks on the TPS with BDD under different settings of $\Delta s_i$ and $\Delta V_i.$}
\end{figure}

\subsubsection{Safety Attacks}
We conduct two sets of simulations to evaluate the impact of safety attacks on the TPS: the first one without BDD and the second with BDD. Under safety attacks, the regenerating trains inject more power into the power network than that under no attacks, resulting in increased voltages. We say that the TPS experiences a safety breach when at least one node in the TPS experiences a safety breach.

Table~\ref{tbl:safety_cmp} summarizes the time durations of safety breaches under 
the two sets of simulations. We consider BDD with different settings of $\Delta s_i$. 
It can be observed that without BDD, the TPS experiences safety breaches for a total of eight seconds. The prolonged overvoltage may cause safety incidents. However, with BDD we see that, when $\Delta s_i$ is in the range of $0.1\,\text{km}$ to $0.3\,\text{km}$, the attack causes no safety breaches during the simulation. As discussed previously, the setting $\Delta s_i = 0.1\,\text{km}$ is appropriate in practice. Hence, this set of results shows that by appropriately setting the BDD parameters, safety breaches can be nearly eliminated.

\begin{table} [!t]
\caption{Time duration while the TPS experiences safety breaches under different settings of $\Delta s_i$ in the presence of BDD.}
\centering
\setlength\extrarowheight{2pt}
\begin{tabular}{|c|c|c|c|c|c|c|c|}
\hline
	$\Delta s_i$ (km) & No BDD & 0.5 & 0.4 & 0.3 &0.2& 0.1 \\ \hline 
	Time duration with safety breaches (second)& 8 & 4 & 1 & 0 & 0 & 0 \\ \hline
\end{tabular}
\label{tbl:safety_cmp}
\end{table}

\subsubsection{SAD Algorithm}
\label{sec:SAD}
The last set of simulations evaluates the effectiveness of SAD in detecting attacks that have bypassed the BDD. In this set of simulations, we set $\alpha = 1$ (since the scaling is not necessary in the absence of measurement noise). Furthermore, we use $p = 2$ in our evaluations.\footnote{Simulation results conducted with $p = 2$ and $p = \infty$ yielded similar performance of the SAD algorithm (in terms of the attack detection rate).} For each time instant,  among the discrete solutions to the BDD bypass condition discussed in Section~\ref{subsec:SAD}, the attacker tactically chooses the one closest to the true system state in the sense of $p$-norm distance. We compare our {\em practical approach} where the $\vv_{\text{pr}}$ is the nodal voltage vector at the previous time instant (cf.~Algorithm~1), with an {\em oracle approach} where the $\vv_{\text{pr}}$ is the nodal voltage vector at the present time instant in the absence of attack.
For the oracle approach, we observe that the $||\tilde{\vv}-\vv_{\text{pr}}||_p$ is consistently higher than the $J^*$ for the entire simulation. This suggests that the oracle approach can detect the onset of a BDD-stealthy attack launched at any time instant.
For the practical approach, we observe that the $||\tilde{\vv}-\vv_{\text{pr}}||_p$ is higher than the $J^*$ for 96\% of the simulation time. For the remaining 4\% of simulation time, the practical approach will miss the attack onset because of a significant change of $\vv$ from the previous time instant to the present. This shows that the practical approach can detect the attack onset with a high detection probability. 

We note that as the size of the TPS increases (in terms of the number of trains and substations under consideration), the number of constraints for the SAD algorithm as well as the solutions to the BDD-passing constraints will increase. Implementing the SAD algorithm may become computationally complex. However, as we pointed out earlier, it is often sufficient to consider only a small section of TPS for security analysis. Thus, in practical application, the computational overhead
of the SAD algorithm will be acceptable.

\subsection{Simulation Results With Random Sensor Measurement Noises}
\label{sec:SimRes_Noise}

\begin{figure}[!t]
\centering
\begin{subfigure}{0.48\textwidth}
\centering
\includegraphics[width=1\textwidth,trim={0 7cm 0 0}]{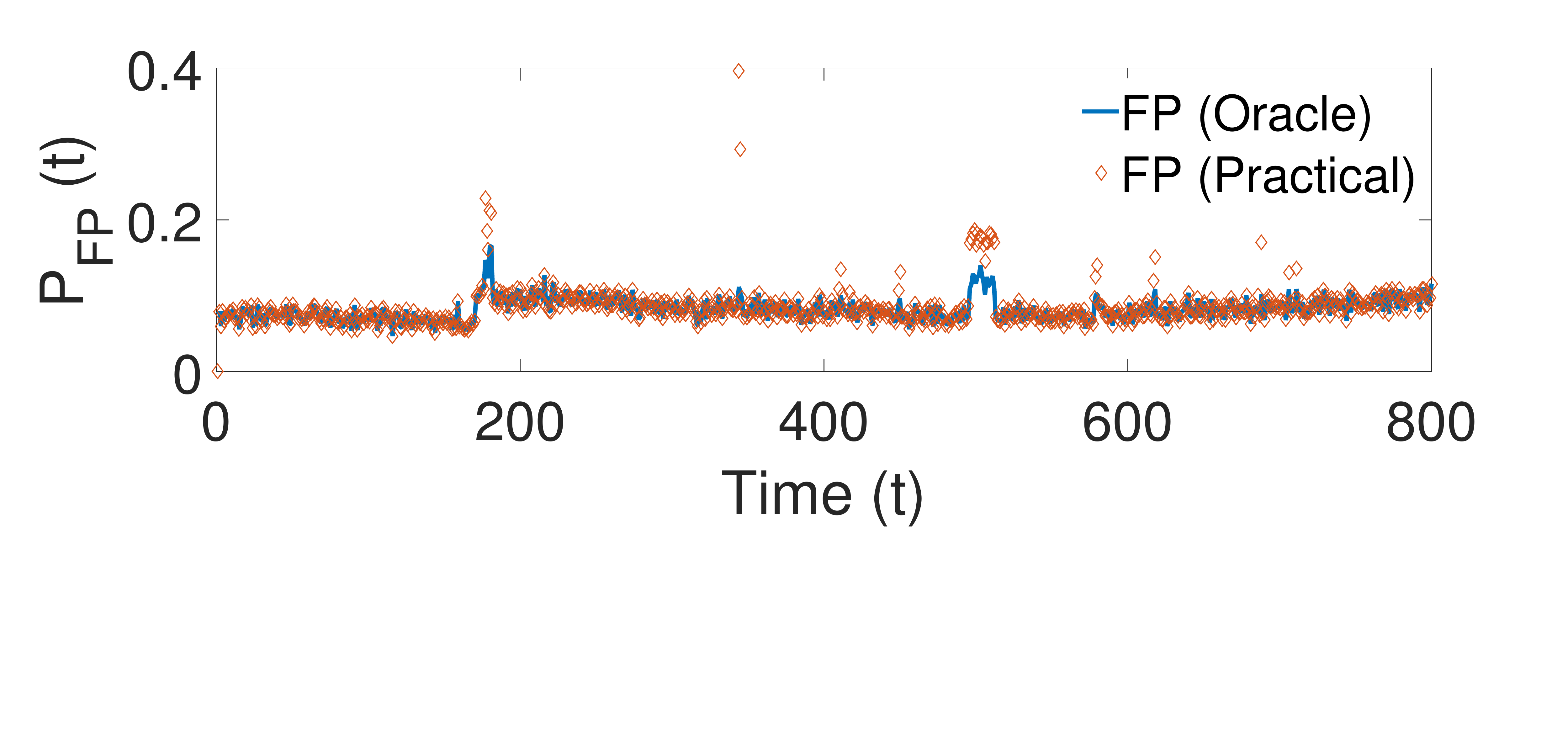}
\caption{ FP rates of the GAD as a function of time.}
\label{fig:dynamic_oracle_FP}
\end{subfigure}
~
\begin{subfigure}{0.48\textwidth}
\centering
\includegraphics[width=1\textwidth,trim={0 6cm 0 0}]{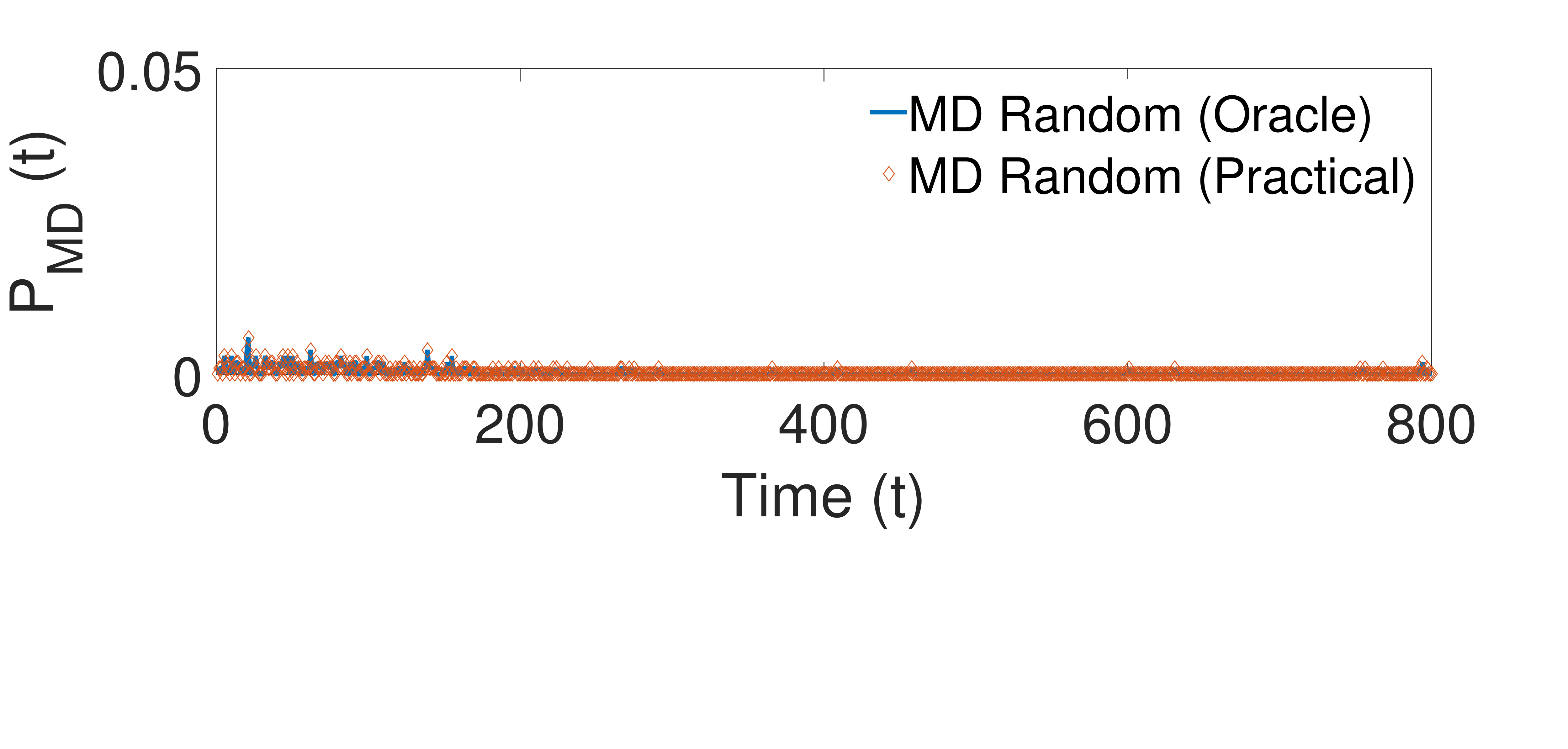}
\caption{MD rates of the GAD as a function of time for random attacks.}
\label{fig:dynamic_oracle_Random}
\end{subfigure}
~
\begin{subfigure}{0.48\textwidth}
\centering
\includegraphics[width=1\textwidth,trim={0 8cm 0 0}]{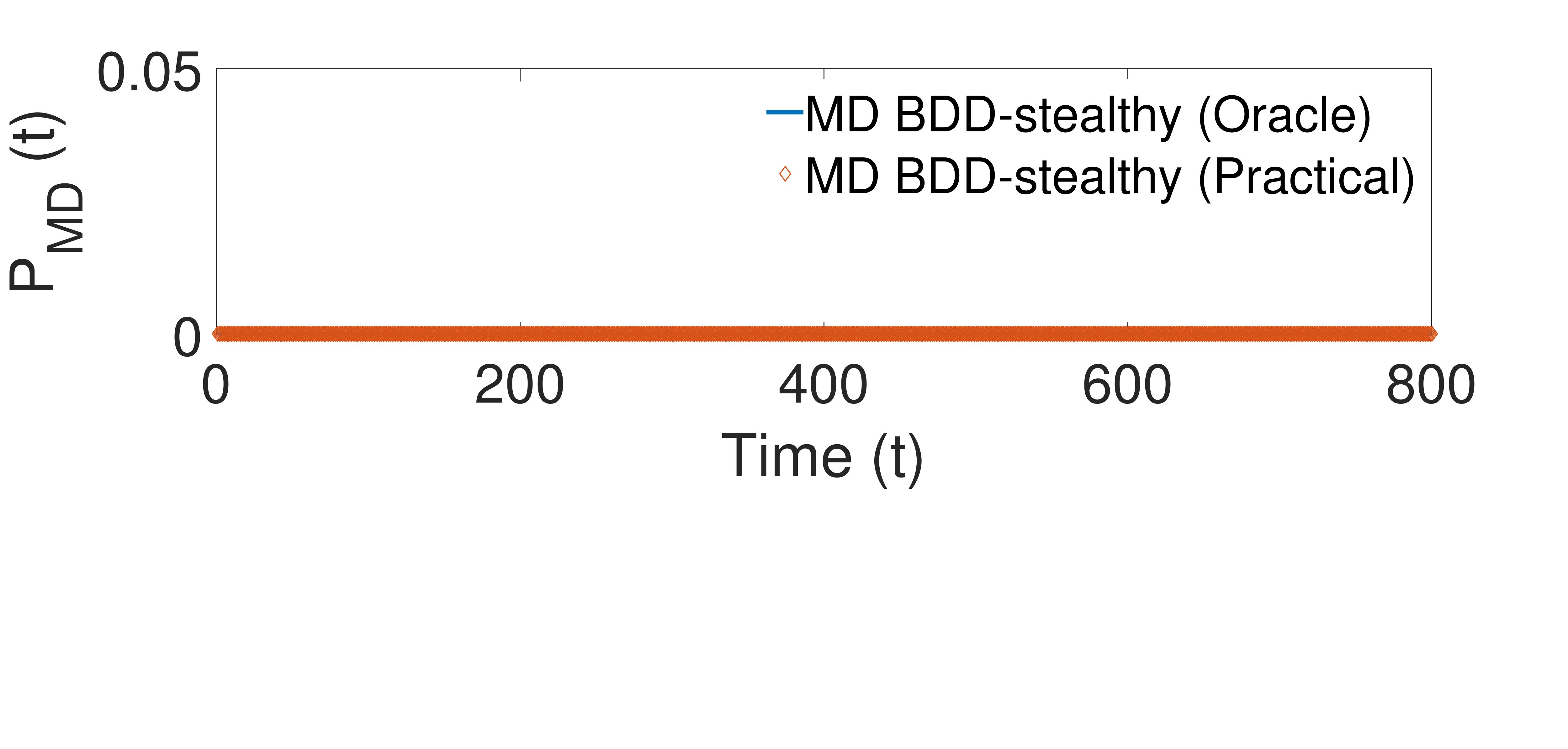}
\caption{MD rates of the GAD as a function of time for BDD-stealthy attacks.}
\label{fig:dynamic_oracle_Bypass}
\end{subfigure}
~
\begin{subfigure}{0.48\textwidth}
\centering
\includegraphics[width=1\textwidth,trim={0 8cm 0 0}]{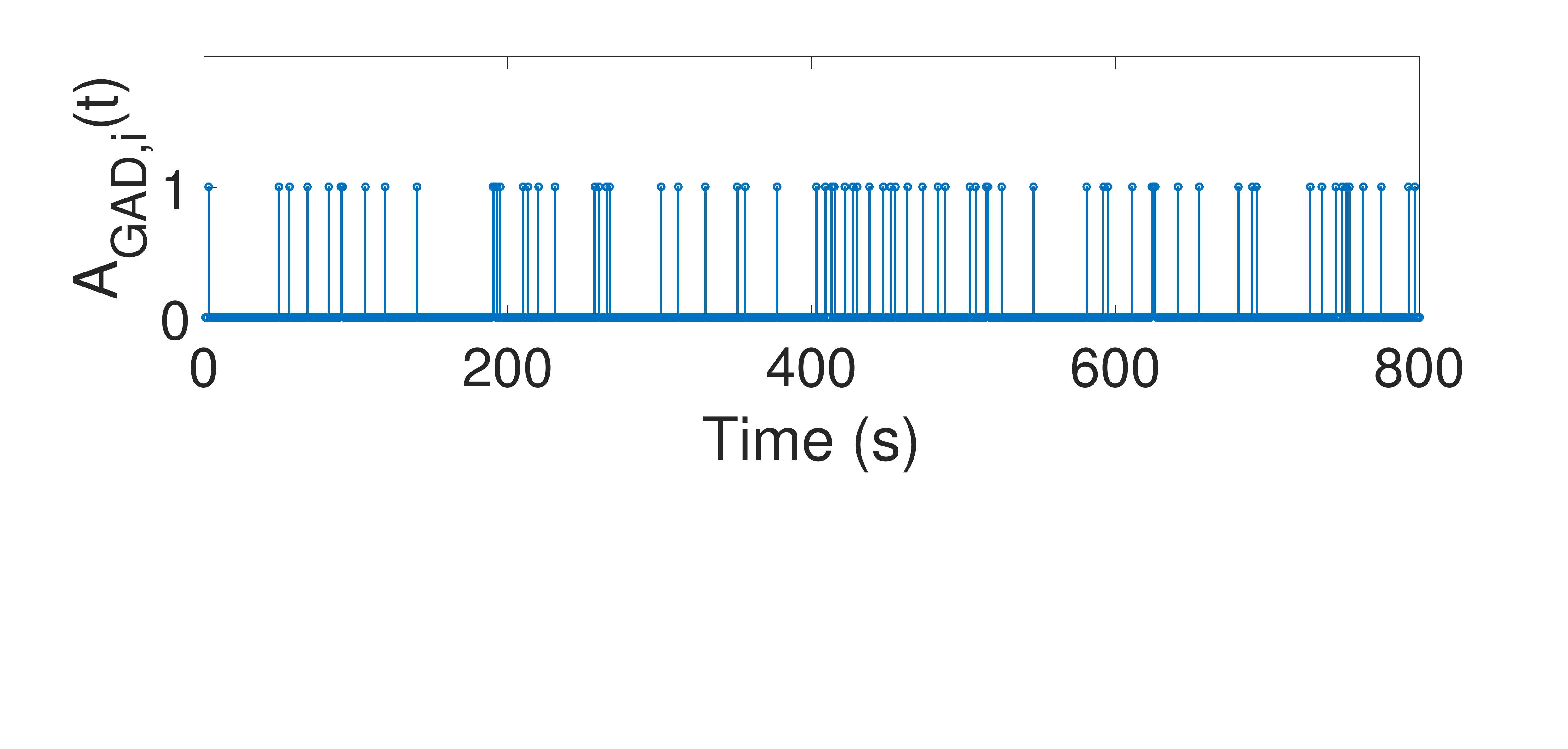}
\caption{GAD alarms over the simulation interval under no attacks.}
\label{fig:dynamic_FP_oracle_gad_occurrence}
\end{subfigure}
\caption{ FPs and MDs of the GAD. $\tau=16$ and $\alpha = 0.9$.}
\end{figure}

In this subsection, we examine the empirical FP and MD rates of the GAD 
at different time instants of the $800$ second simulation interval.
To compute these quantities, we run $N$ simulation runs. We let $Z_{\text{GAD}, i}(t)$ and $\Xi_{\text{GAD}, i}(t)$ denote the indicator variables representing FPs and MDs at a time instant $t \in \{1,\dots,800\}$ during the simulation run $i \in \{1,\dots,N\}.$ The empirical
FP and MD rates at time $t$ are then computed as
$P_{\text{FP}}(t)  = \frac{1}{N}\sum^{N}_{i = 1} Z_{\text{GAD}, i}(t),
P_{\text{MD}}(t)  = \frac{1}{N}\sum^{N}_{i = 1} \Xi_{\text{GAD}, i}(t).$
In our simulations, we set $N = 1000$ and the noise level to
 $0.3\%$ of the full-scale voltage and current sensor readings. 
The BDD detection threshold $\tau$ is set to $16,$ and $\alpha = 0.9$ for the SAD. The value of $\alpha$ was tuned numerically by observing observing the values of $||\tilde{\vv} - \vv_{\text{pr}}||_p$ and $J^*$ in the scenario when the BDD-passing constraint has only two solutions. The chosen value of $\alpha$ is sufficient to eliminate MDs.

Fig.~\ref{fig:dynamic_oracle_FP} shows the FP rate of the GAD, and Figs.~\ref{fig:dynamic_oracle_Random} and \ref{fig:dynamic_oracle_Bypass} show the MD rates of the GAD for random and BDD-stealthy attacks. 
For random attacks,  we inject an additive attack of 20 V to the voltage measurement of the leftmost train (in Fig.~\ref{fig:setup}). We make the following observations. First, we observe that the FP and MD rates fluctuate over time, since the TPS topology and parameters change. (Recall that the TPS topology and parameters depend on the position and the power drawn/injected by the trains.) 
Second, we observe that under the considered settings, both the oracle and practical GAD detectors yield very low MD rates at all time instants. Thus we conclude that by appropriately tuning the parameters of 
the BDD and SAD detectors ($\tau$ and $\alpha$), the MD rate of the GAD can be reduced to a very low value. Third, we observe that while the FP rate is low for most of the simulation interval, 
there are a few time instants at which the FP rate is relatively high, particularly for
the practical GAD detector (e.g., from $t = 497$ to $t = 511$, the FP rate $\approx$ 0.2). Furthermore, we observe that these time instants correspond to when one or more trains change their motion status from tractioning to braking mode, thus resulting in a drastic change in the system state. Recall that an accelerating train draws power from the network resulting in a voltage drop whereas as a braking train injects power resulting in a voltage raise. In these cases, the difference $||\vv-\vv_{\text{pr}}||$ can be high for the practical GAD detector since $\vv_{\text{pr}}$ is estimated only based on the historical values.

However, in practice, an extremely low FP rate is desired, since otherwise the system operator would have to frequently initiate unnecessary mitigation that may be disruptive. Thus, in what follows, we propose an adaptive version of the GAD, which we call GAD with attack detection window (GAD-W). GAD-W will give an extremely low FP rate in the presence of sensor measurement noises.

\subsubsection{GAD with Attack Detection Window}
The GAD-W detector applies an AND rule to fuse the detection results in an \emph{attack detection window}, i.e., instead of declaring the presence of an attack based on a single alarm, GAD-W waits for consecutive alarms over several time slots before declaring it. 
In the following, we first formally state the GAD-W detector and then provide the intuition behind its design. Denote by $A_{\text{GAD}}(t) \in \{0,1\}$ the detection result of the GAD at time $t$ and by $W \in \mathbb{N}$ the window size. The GAD-W detector raises an alarm only if there is an alarm at all the time instants within the attack window, i.e.,
\begin{align}
A_{\text{GAD-W}}(t) = A_{\text{GAD}}(t) \land A_{\text{GAD}}(t+1) \land \dots \land A_{\text{GAD}}(t+W-1).
\end{align}
The rationale is that in the absence of attacks, the occurrence of GAD alarms can be due to two factors: (i) the fluctuations of 
BDD residual induced by the measurement noise, or (ii) a drastic change in the system state between consecutive time slots. In the above two cases, the BDD and SAD will raise an alarm, respectively. The first case is a randomly occurring event (due to noise) and the second is a sparsely occurring event. Thus, the probability of having consecutive GAD alarms over a time window is low. Fig.~\ref{fig:dynamic_FP_oracle_gad_occurrence} confirms this hypothesis, in which we plot the GAD alarms for one instantiation of the $800$ second simulation interval in the absence attacks. It can be seen that the occurrence of alarms is sparse. Thus, the AND fusion rule in an attack detection is effective.

A larger window size $W$ can lower the probability of consecutive alarms within the detection window, resulting in a lower FP rate. However increasing the window size may lead to higher MD rates when an attack is present. Moreover, it also introduces longer delay in detecting the attacks. Thus, the setting of the optimal window size should balance between the FP and MD rates. In what follows, we present simulation results to show the variations of FP and MD rates for different window sizes, which will guide the setting of the window size.

\begin{figure}[!t]
\centering
\begin{subfigure}{0.48\textwidth}
\centering
\includegraphics[width=1\textwidth,trim={0 4cm 0 0}]{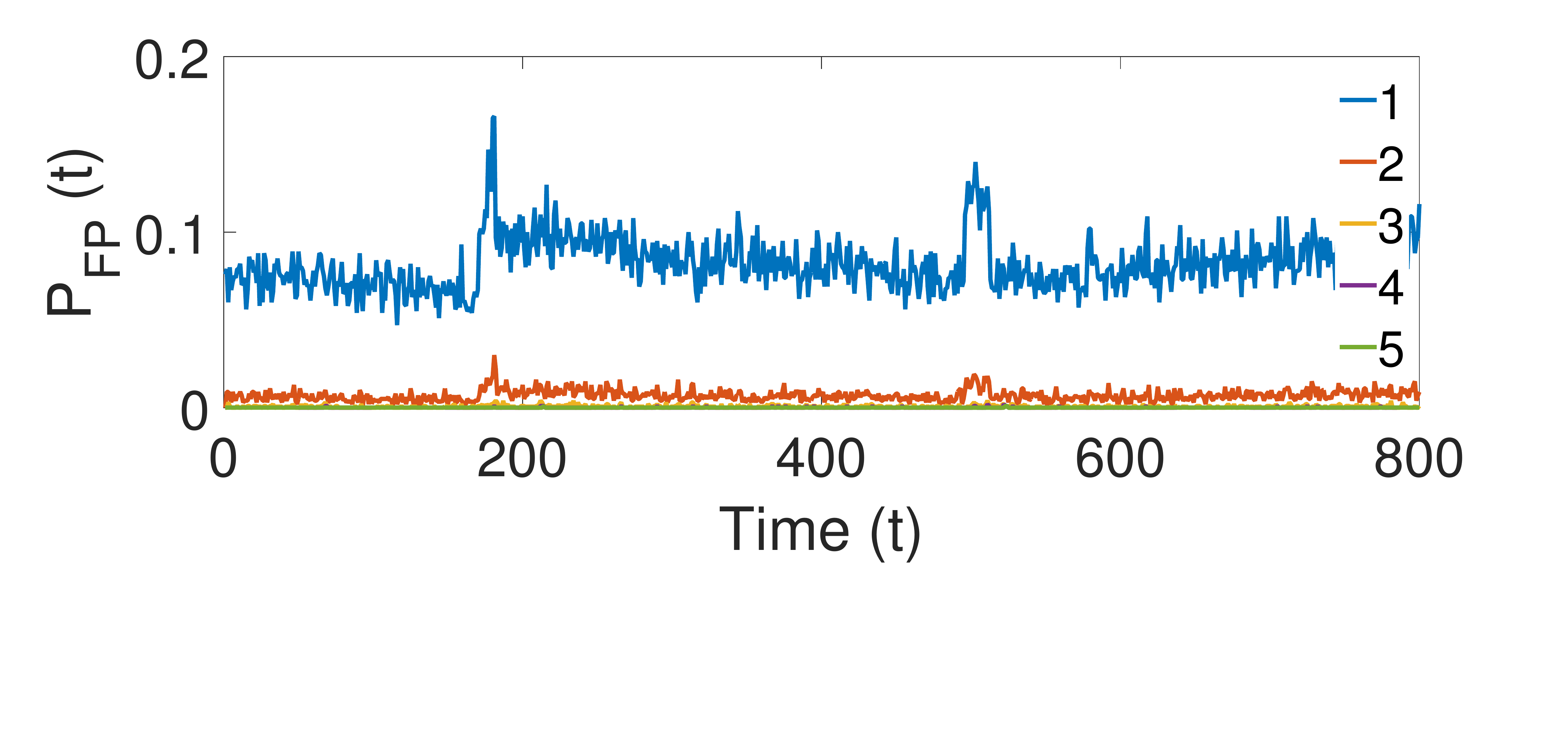}
\caption{FP rate over time for different window sizes.}
\end{subfigure}
~
\begin{subfigure}{0.48\textwidth}
\centering
\includegraphics[width=1\textwidth,trim={0 6cm 0 0}]{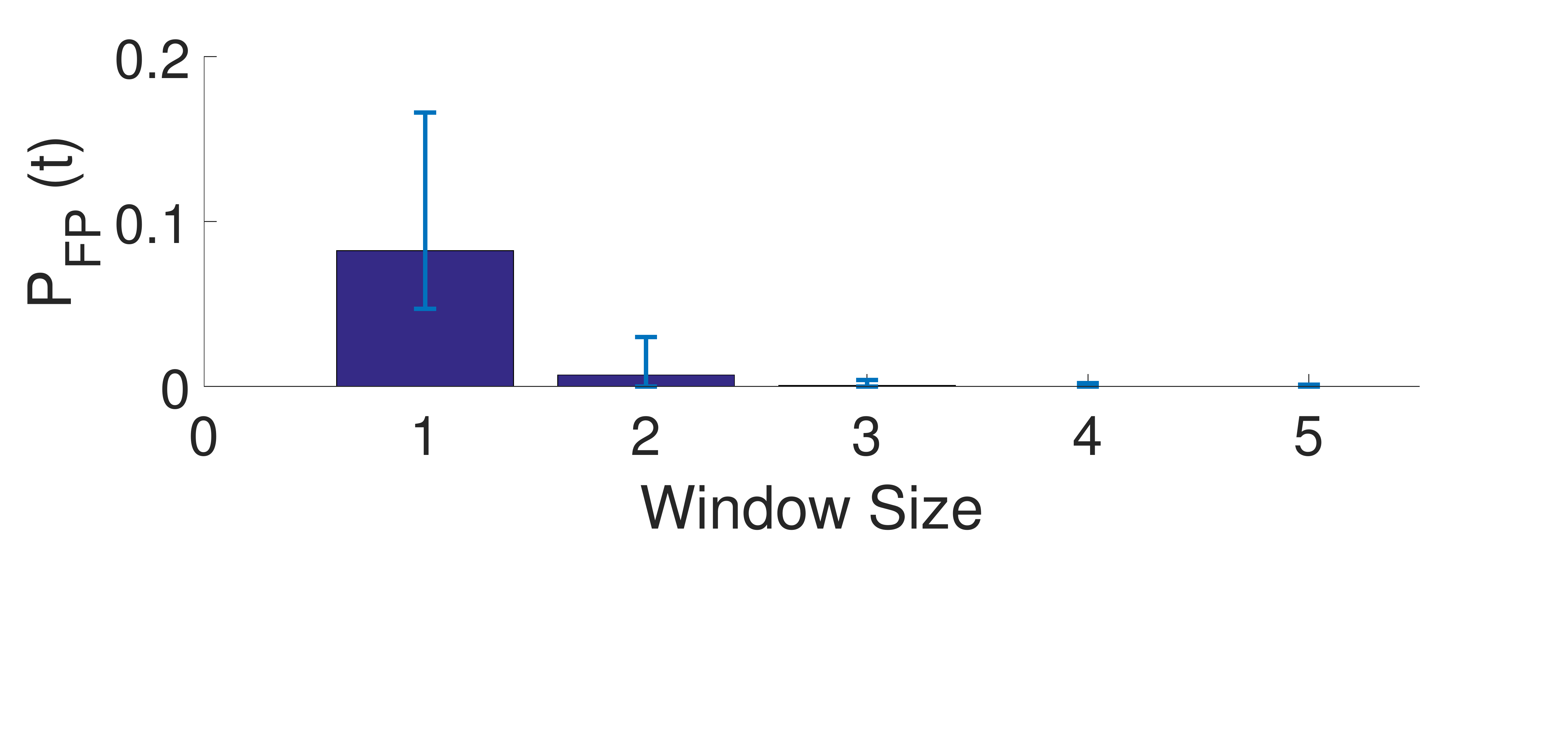}
\caption{FP rate under various attack detection window sizes. Error bars represent maximum and minimum values.}
\label{fig:oracle_previous_last_FP}
\end{subfigure}
~
\begin{subfigure}{0.48\textwidth}
\centering
\includegraphics[width=1\textwidth,trim={0 4cm 0 0}]{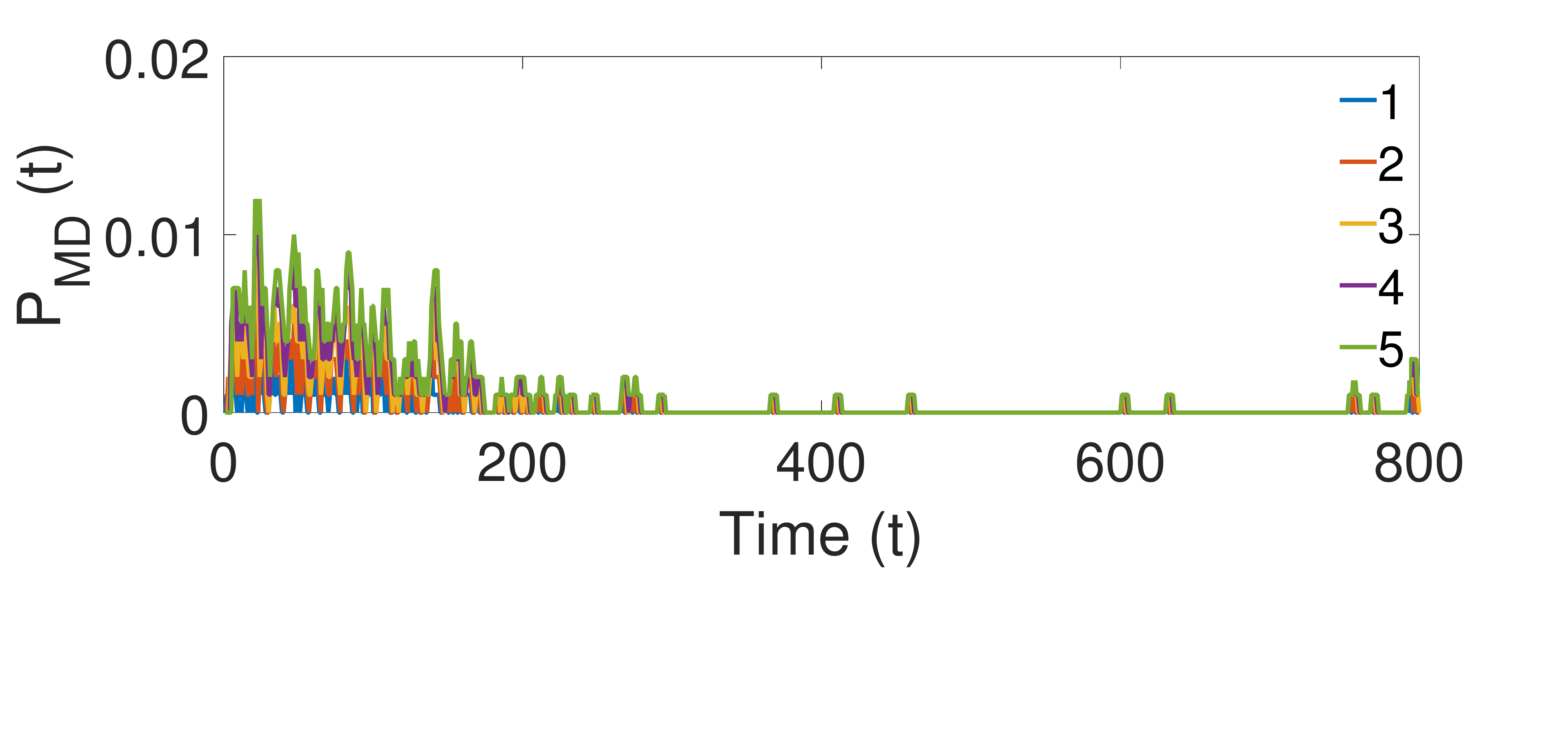}
\caption{MD rate over time for different window sizes under random attack.}
\end{subfigure}
~
\begin{subfigure}{0.48\textwidth}
\centering
\includegraphics[width=1\textwidth,trim={0 4cm 0 0}]{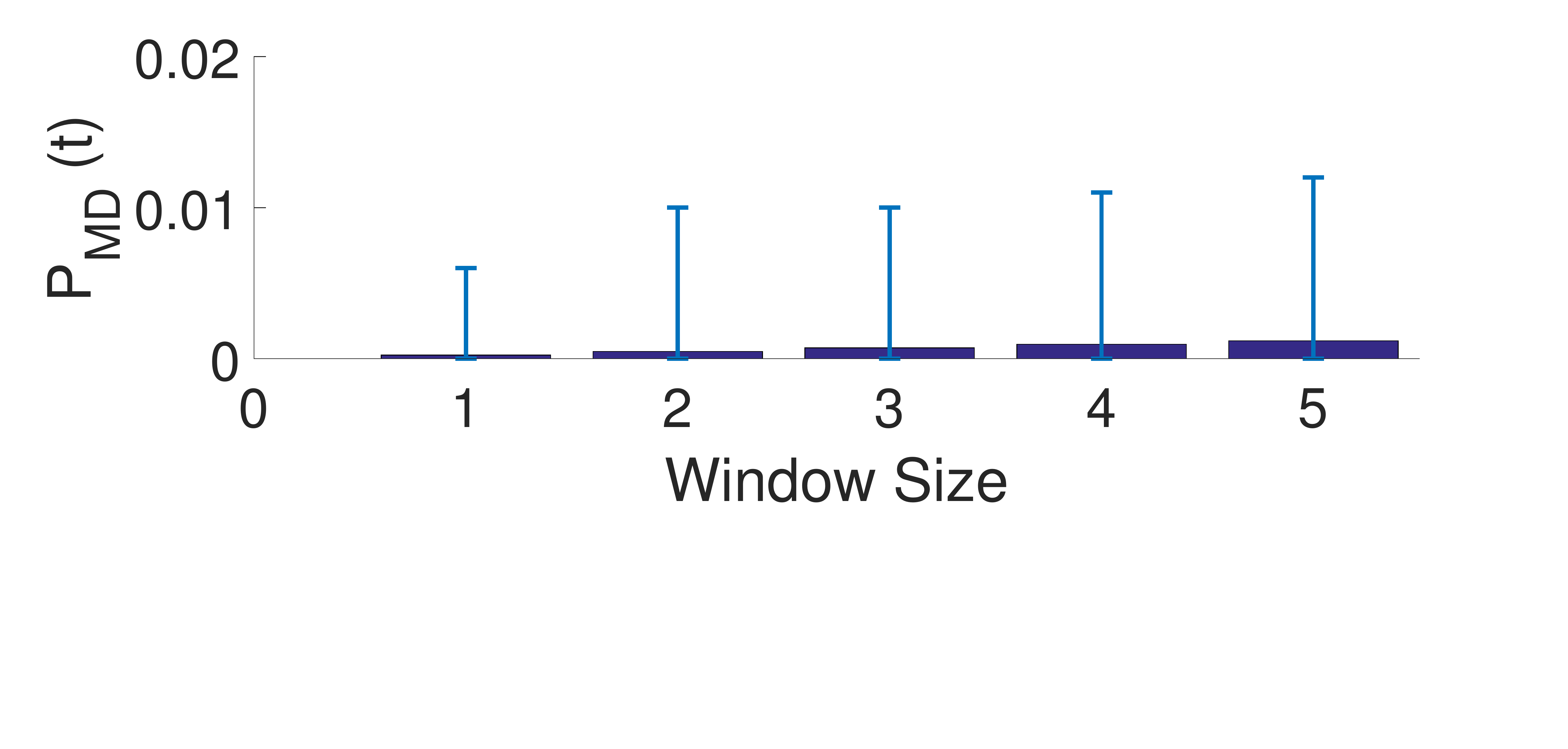}
\caption{MD rate under various attack detection window sizes under random attack. Error bars represent maximum and minimum values.}
\label{fig:oracle_previous_last_MD_Random}
\end{subfigure}
~
\begin{subfigure}{0.48\textwidth}
\centering
\includegraphics[width=1\textwidth,trim={0 8cm 0 0}]{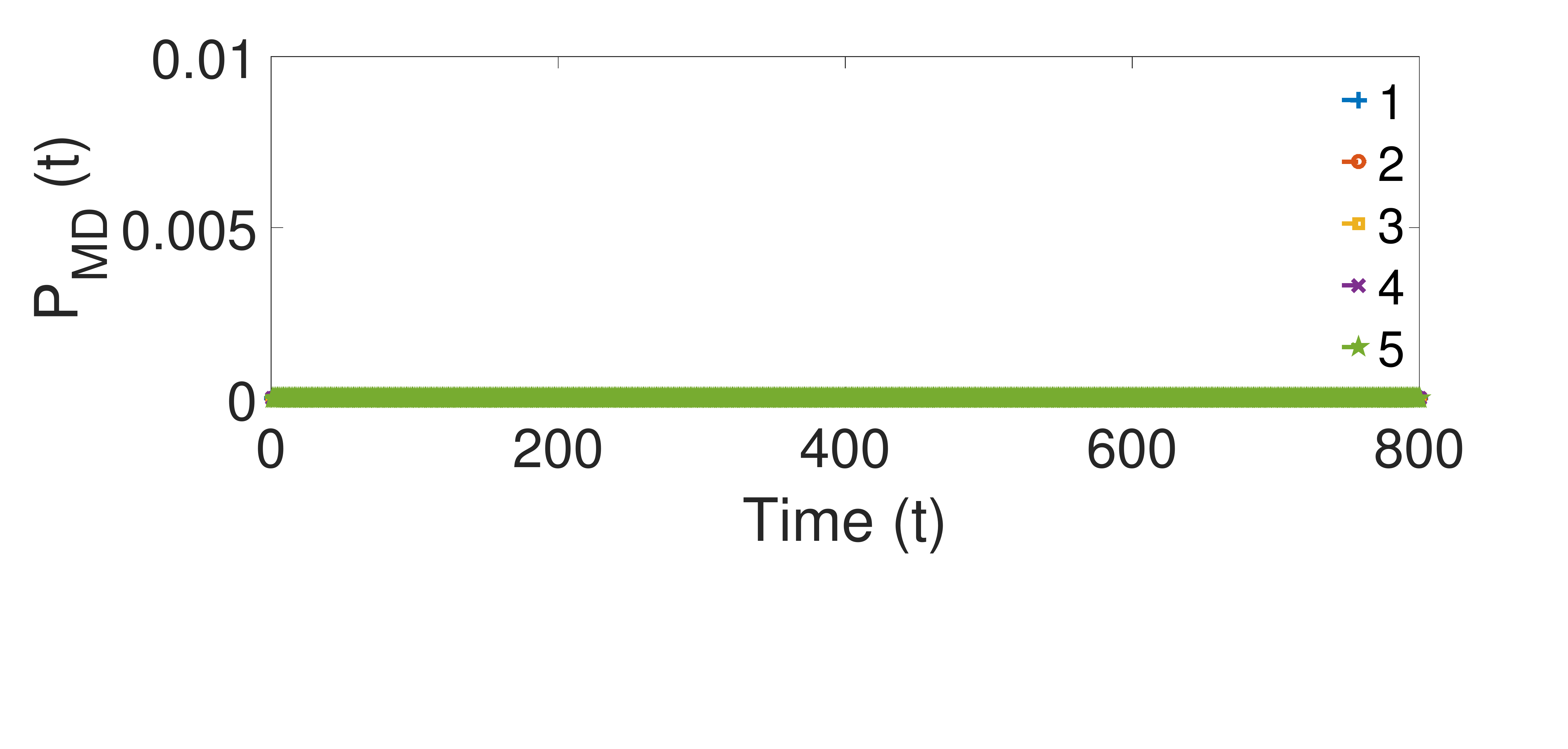}
\caption{MD rate over time for different window sizes under BDD-stealthy attack.}
\end{subfigure}
~
\begin{subfigure}{0.48\textwidth}
\centering
\includegraphics[width=1\textwidth,trim={0 8cm 0 0}]{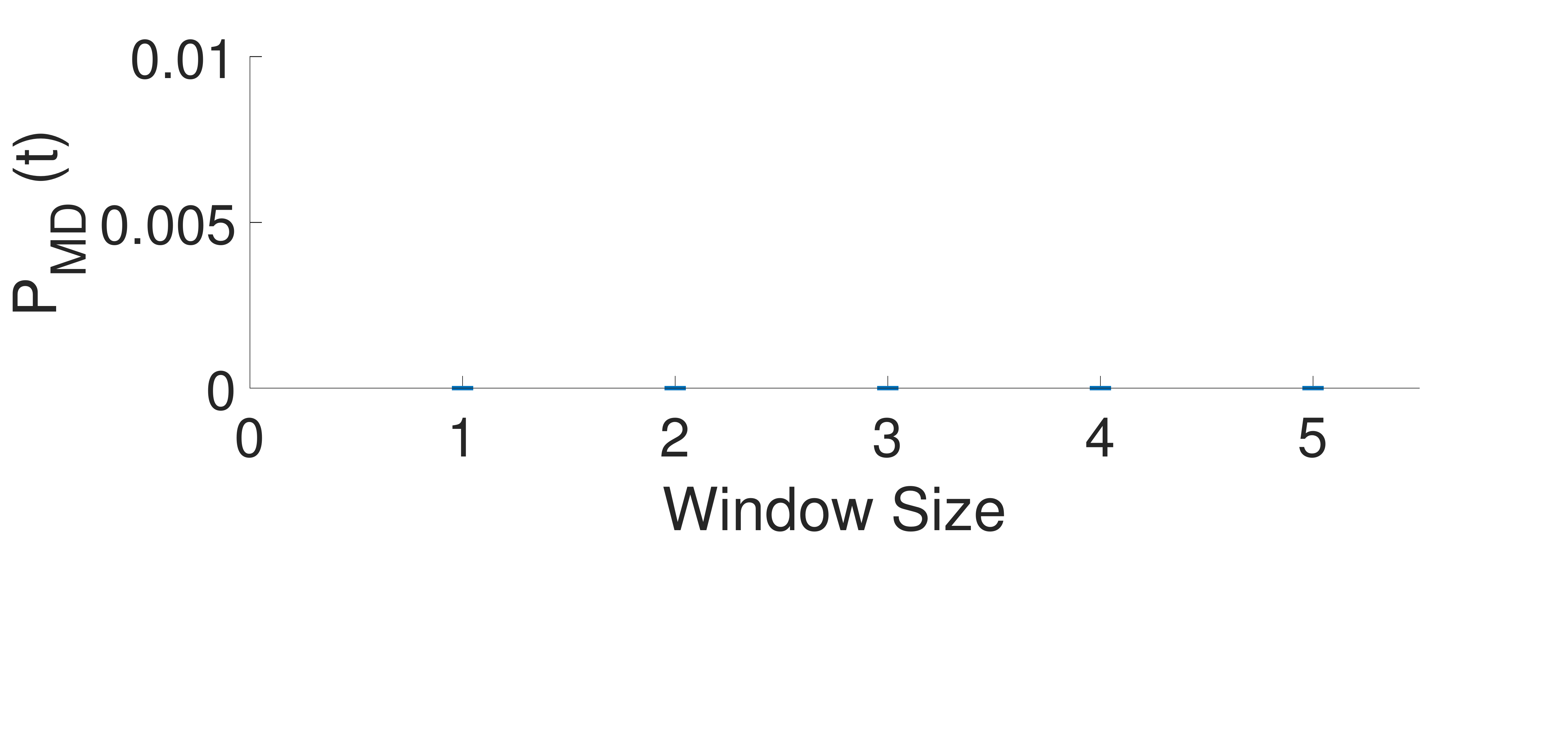}
\caption{MD rates under various attack detection window sizes under BDD-stealthy attack. Error bars represent maximum and minimum values.}
\end{subfigure}
\caption{FP and MD rates for oracle GAD detector under random attacks. }
\label{fig:Oracle_window}
\end{figure}

\begin{figure}[!t]
\centering
\begin{subfigure}{0.48\textwidth}
\centering
\includegraphics[width=1\textwidth,trim={0 5cm 0 0}]{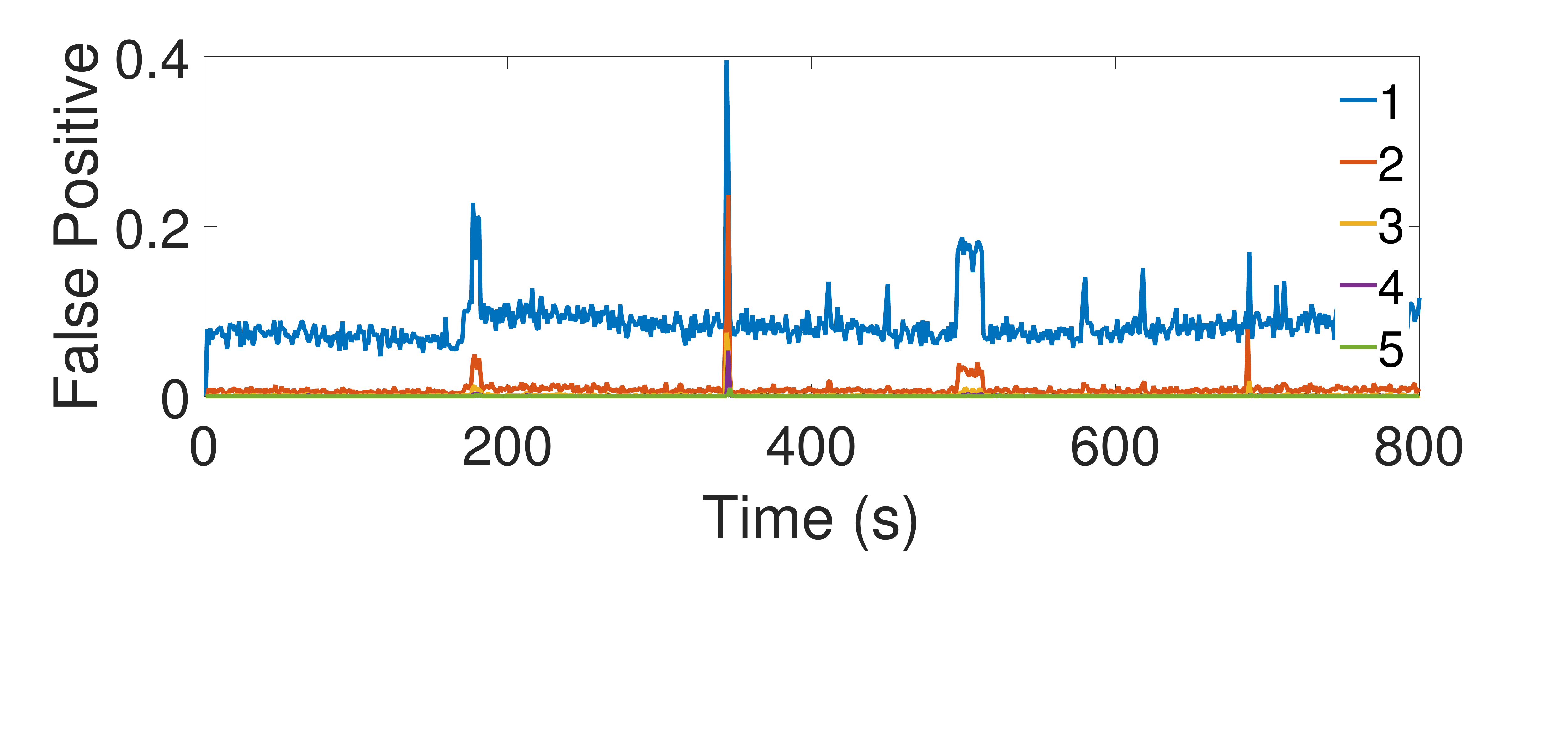}
\caption{FP rate over time for different window sizes.}
\end{subfigure}
~
\begin{subfigure}{0.48\textwidth}
\centering
\includegraphics[width=1\textwidth,trim={0 6cm 0 0}]{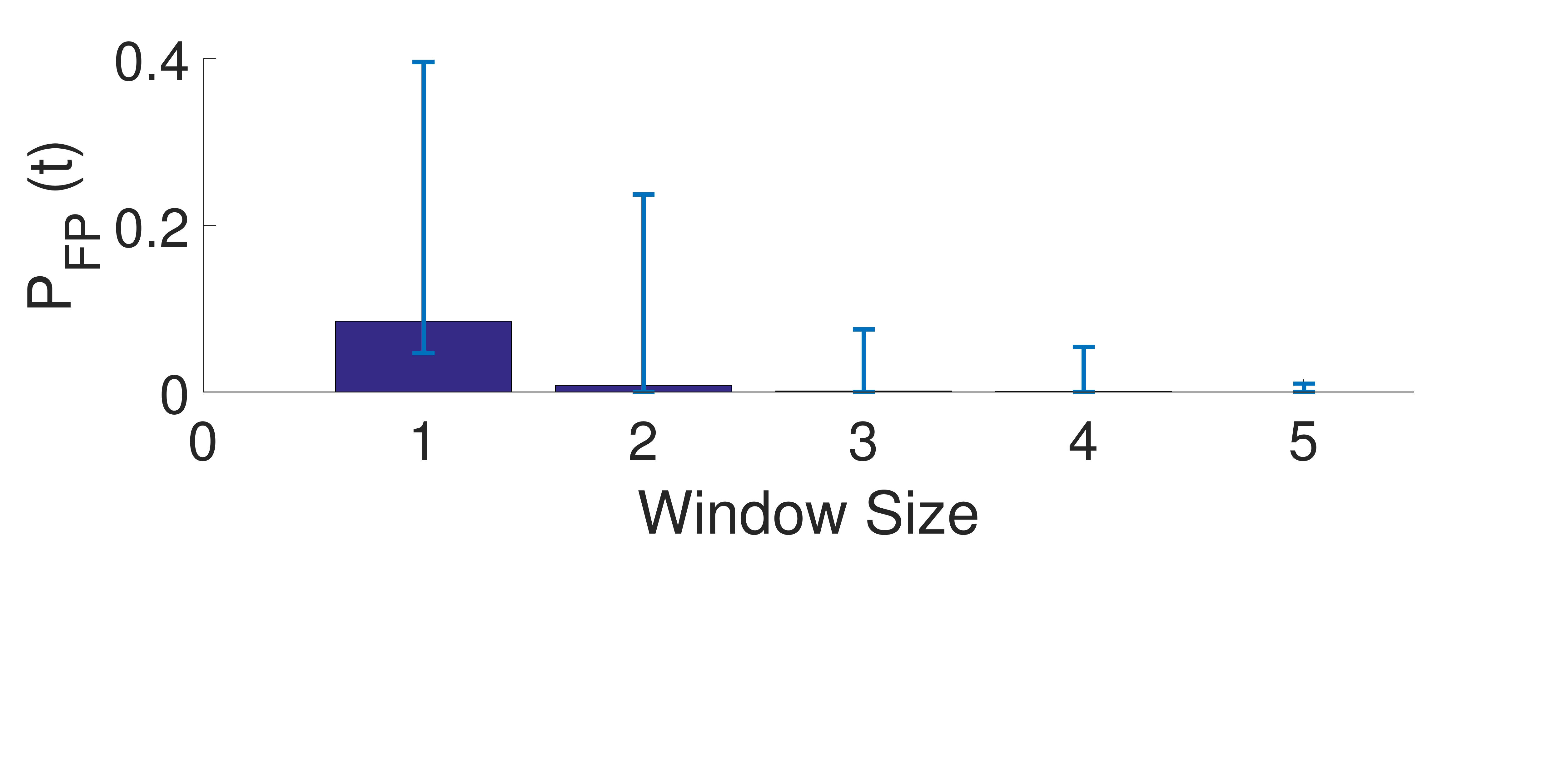}
\caption{FP rate under various attack detection window sizes. Error bars represent maximum and minimum values.}
\label{fig:practical_previous_last_FP}
\end{subfigure}
~
\begin{subfigure}{0.48\textwidth}
\centering
\includegraphics[width=1\textwidth,trim={0 5cm 0 0}]{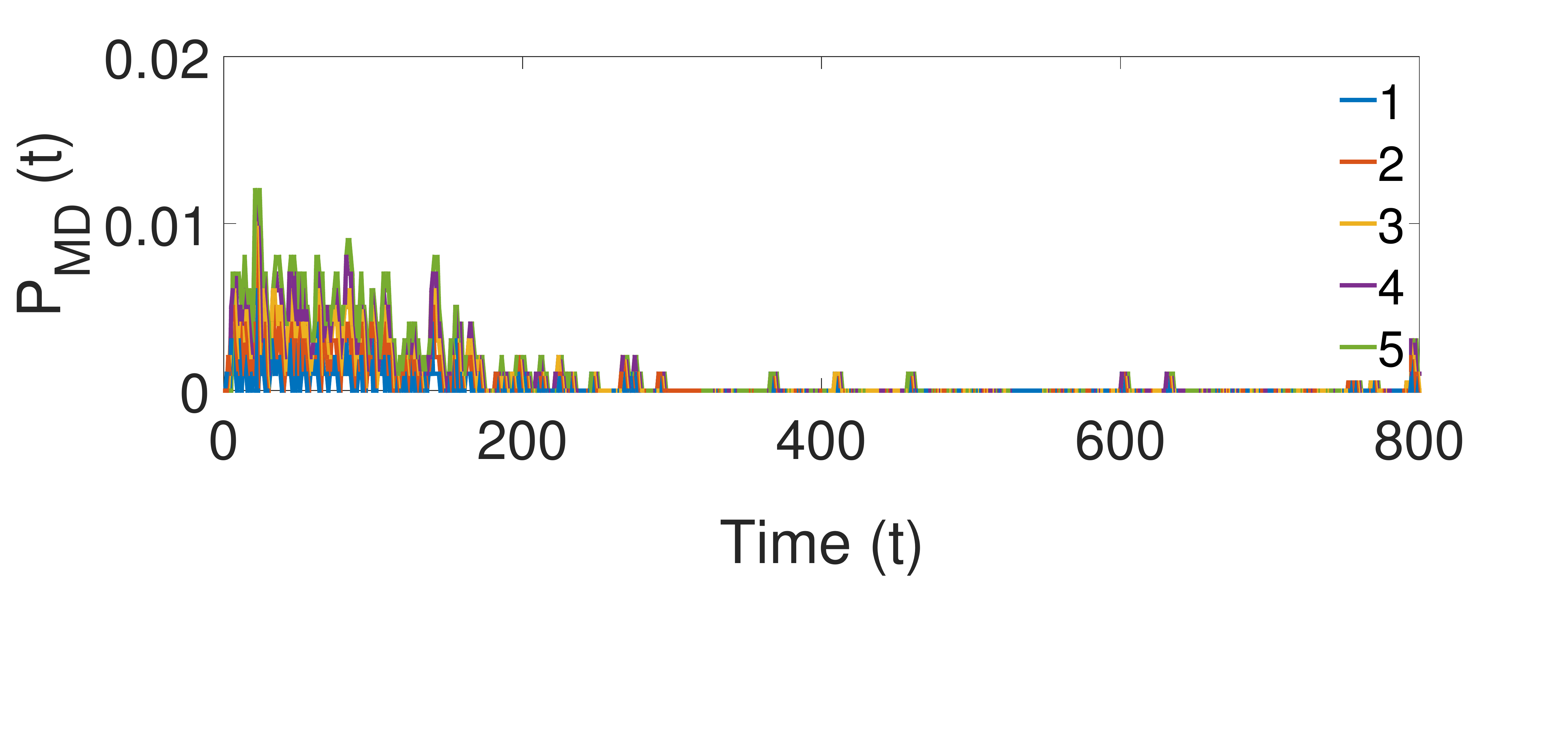}
\caption{MD rate over time for different window sizes under random attack.}
\end{subfigure}
~
\begin{subfigure}{0.48\textwidth}
\centering
\includegraphics[width=1\textwidth,trim={0 6cm 0 0}]{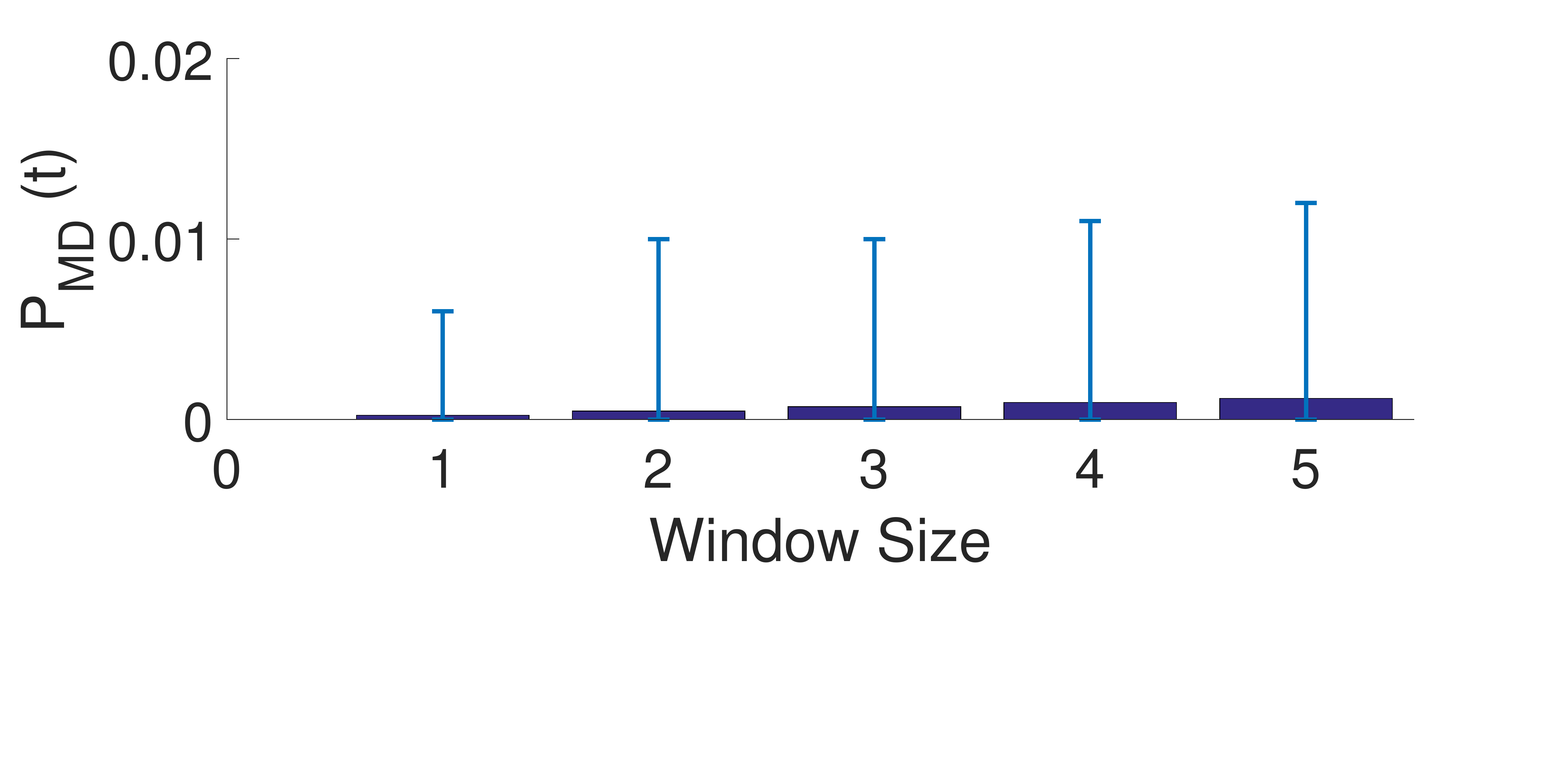}
\caption{MD rate under various attack detection window sizes under random attack. Error bars represent maximum and minimum values.}
\label{fig:practical_previous_last_MD_Random}
\end{subfigure}
~
\begin{subfigure}{0.48\textwidth}
\centering
\includegraphics[width=1\textwidth,trim={0 6cm 0 0}]{figure/dynamic_practical_last_window_MD}
\caption{MD rate over time for different window sizes under BDD-stealthy attack.}
\end{subfigure}
~
\begin{subfigure}{0.48\textwidth}
\centering
\includegraphics[width=1\textwidth,trim={0 6cm 0 0}]{figure/dynamic_practical_last_window_MD_errorbar}
\caption{MD rates under various attack detection window sizes under BDD-stealthy attacks. Error bars represent maximum and minimum values.}
\end{subfigure}
\caption{FP and MD rates for practical GAD detector with BDD-stealthy attacks.}
\label{fig:Practical_window}
\end{figure}

Fig.~\ref{fig:Oracle_window} and  Fig.~\ref{fig:Practical_window} show the FP and MD rates for GAD-W detector under both random and BDD-stealthy attacks. We observe that as the window size increases, the FP rate decreases, whereas the MD rate increases, for the random attacks. We observe that for a window size of $3,$ the average FP rate is $9 \times 10^{-4}$.
The average MD rate for the random attack is $7 \times 10^{-4}$.
Such extremely low of FP and MD rates are acceptable under practical scenarios. Finally, we observe that the MD rates for the BDD-stealthy attacks are very low both under the oracle and practical GAD detectors. This is because the SAD detector is specifically designed to detect BDD-stealthy attacks.

\section{Conclusions}
\label{sec:Conclusion}
In this paper, we studied FDI attacks on train-borne sensor measurements used in railway TPSes. To the best of our 
knowledge, ours is the first effort that has studied TPSes from a cybersecurity perspective.
To account for the safety-criticality of TPS, we adopted the Kerckhoffs's principle and addressed two fundamental problems of importance, namely, characterization of the impact of 
FDI attacks on TPSes, and development of detection techniques for these attacks.
We formulated and analyzed the efficiency and safety attacks that aim to minimize the system energy efficiency and
breach system safety conditions, respectively. 
To detect these attacks, we proposed a global detection system that serializes the proposed BDD and SAD algorithms, both of which may be implemented at a central TPS monitor. 
Furthermore, we proposed an adaptive GAD-W detector that achieves a very low FP rate in the presence of noisy sensor measurements. Our simulation results verified the susceptibility of the TPS setup to the FDI attacks, but these attacks can be detected effectively by the proposed global detection system.

\bibliographystyle{ACM-Reference-Format-Journals}
\bibliography{bibliography}

\section*{Appendix~A: BDD Threshold}

In this Appendix, we present how to set the BDD threshold $\tau$ to ensure that the false positive rate is maintained at a certain level. 

Recall that the expression for BDD residual is given by $r = ||{\zv} - {\Hm} \hat{{\vv}}||,$ where ${\zv} = {\Hm} \vv + \nv ,$ $\hat{{\vv}} = ({\Hm}^T \Sigmam \Hm)^{-1} {\Hm}^T \Sigmam {\zv}.$ Substituting the expression of $\hat{{\vv}},$ we obtain:
\begin{align}
r & = ||\zv  - \Hm ({\Hm}^T \Sigmam \Hm)^{-1} {\Hm}^T \Sigmam \zv|| \nonumber \\ 
& = ||\Hm {{\vv}} + \nv  - \Hm ({\Hm}^T \Sigmam \Hm)^{-1} {\Hm}^T \Sigmam (\Hm {\vv}  + \nv )||  \nonumber \\
& = || (\Id - \Gammam) \nv ||, \label{eqn:BDD_res}
\end{align}
where $\Gammam = \Hm (\Hm^T \Wm \Hm)^{-1} {\Hm}^T \Sigmam.$ 
From \eqref{eqn:BDD_res}, $r$ follows a chi-square distribution, since
the noise $\nv$ is Gaussian. To maintain a certain FP rate $\alpha,$ the BDD threshold can be set by 
solving $\mathbb{P} (r \geq \tau) = \alpha.$

\end{document}